\newtheorem{definition}{Definition}
\newtheorem{prop}{Proposition}
\newtheorem{theorem}{Theorem}
\def\proofskip{\vskip 4pt plus 1pt minus 1pt}
\newenvironment{proof}{{\noindent \it Proof.\/}~}{\proofskip}
\newenvironment{notation}{{\vspace{8pt}\noindent\bf Notation}}{}
\def\la{\leftarrow}
\def\ra{\rightarrow}
\def\ex{\exists}
\def\vdashm{\vdash^{ ^{ \hskip -4pt \mbox{\tiny  M}}}}
\def\implies{\Rightarrow}
\newbox\tempa
\newbox\tempb
\newdimen\tempc
\def\mud#1{\hfil $\displaystyle{\mathstrut #1}$\hfil}
\def\rig#1{\hfil $\displaystyle{#1}$}
\def\irulehelp#1#2#3{\setbox\tempa=\hbox{$\displaystyle{\mathstrut #2}$}
		        \setbox\tempb=\vbox{\halign{##\cr
	\mud{#1}\cr
	\noalign{\vskip\the\lineskip} 
	\noalign{\hrule height 0pt} 
	\rig{\vbox to 0pt{\vss\hbox to 0pt{${\; #3}$\hss}\vss}}\cr
	\noalign{\hrule} 
	\noalign{\vskip\the\lineskip} 
	\mud{\copy\tempa}\cr}} 
		      \tempc=\wd\tempb
		      \advance\tempc by \wd\tempa
		      \divide\tempc by 2 }
\def\irule#1#2#3{{\irulehelp{#1}{#2}{#3}
		     \hbox to \wd\tempa{\hss \box\tempb \hss}}}
\def\rhd{\vartriangleright}
\def\conv{=_{\beta\eta}}
\font\cur=cmbsy10
\def\Ru{\mbox{\cur R}}
\begin{document}

\thispagestyle{empty}

\title{On the Definition of the Eta-long  Normal Form\\
in Type Systems of the Cube}

\author{Gilles Dowek~~~~~G\'{e}rard Huet~~~~~Benjamin Werner
\thanks{INRIA-Rocquencourt, B.P. 105, 78153
Le Chesnay Cedex, France. Gilles.Dowek@inria.fr, Gerard.Huet@inria.fr,
Benjamin.Werner@inria.fr}}
\date{}
\maketitle

\thispagestyle{empty}
\begin{abstract}
The smallest transitive relation $<$ on well-typed normal terms such that 
if $t$ is a strict subterm of $u$ then $t < u$ and 
if $T$ is the normal form of the type of $t$ and the term $t$ is not a sort
then $T < t$
is well-founded in the type systems of the cube. 
Thus every term admits a $\eta$-long normal form.
\end{abstract}

\section*{Introduction}

In this paper we prove that the smallest transitive relation $<$ on
well-typed normal terms such that 
\begin{itemize}
\item 
if $t$ is a strict subterm of $u$ then $t < u$,
\item
if $T$ is the normal form of the type of $t$ and the term $t$ is not a sort
then $T < t$
\end{itemize}
is well-founded in the type systems of the cube \cite{Barendregt}.
This result is proved using the notion of {\it marked terms} introduced by
de Vrijer \cite{deVrijer}. A motivation for this theorem is to
define the $\eta$-long form of a normal term in these type systems.

In simply typed $\lambda$-calculus, to define the $\eta$-long form of a normal
term we first define the $\eta$-long form of a variable $x$ of type
$P_{1} \ra ... \ra P_{n} \ra P$ ($P$ atomic) as the term
$[y_{1}:P_{1}] ... [y_{n}:P_{n}](x~y'_{1}~...~y'_{n})$
where $y'_{i}$ is the $\eta$-long form of the variable $y_{i}$ of type $P_{i}$.
Then we define the $\eta$-long form of a normal term $t$ well-typed of type
$P_{1} \ra ... \ra P_{n} \ra P$ ($P$ atomic) as
\begin{itemize}
\item
if $t = [x:U]u$ then its $\eta$-long form is the term $[x:U]u'$ where $u'$
is the $\eta$-long form of $u$,
\item
if $t = (x~c_{1}~...~c_{p})$ then its $\eta$-long form is the term 
$[y_{1}:P_{1}] ... [y_{n}:P_{n}](x~c'_{1}~...~c'_{p}~y'_{1}~...~y'_{n})$
where $c'_{i}$ is the $\eta$-long form of the term $c_{i}$,
and $y'_{i}$ the $\eta$-long form of the variable $y_{i}$.

\end{itemize}
The definition of the $\eta$-long form of a variable is by induction over the
structure of its type, and the definition of the $\eta$-long form of a normal
term is by induction over the structure of the term itself.
The $\eta$-long form appeared in \cite{Pietrzykowski} under the name of
{\sl long reduced form} and in \cite{Huet75} under the name of
{\sl $\eta$-normal form}, and was further
investigated in \cite{Huet76}, under the name of {\sl extensional form}.

In systems with dependent types the corresponding  definition is more
complicated.
First when $t = [x:U]u$ we have to take also the $\eta$-long form of the 
term $U$ and when $t = (x~c_{1}~...~c_{p})$ we have to take also the 
$\eta$-long form of the terms $P_{1}, ..., P_{n}$. So the well-foundedness 
of this
definition is not so obvious, indeed $P_{i}$ is not a subterm of $t$, but a 
subterm {\it of its type}. We prove the well-foundedness of this definition 
using the well-foundedness of the relation $<$. 

Besides the definition of $\eta$-long form, this well-foundedness
result has been used in
\cite{complete} to prove the completeness of the resolution method in the 
systems of the cube and in \cite{matching} to prove the decidability of second
order matching in these systems.

\section{The Cube of Typed $\lambda$-calculi}

\begin{definition}[Term]

The set of terms is inductively defined as 
$$T~::=~Prop~|~Type~|~x~|~(T~T)~|~[x:T]T~|~(x:T)T$$
\end{definition}

The symbols $Prop$ and $Type$ are called {\it sorts}, the terms $x$
are called {\it variables}, the terms $(T~T')$ {\it applications}, the
terms $[x:T]T'$ {\it abstractions} (they are sometimes written
$\lambda x:T.T'$) and the terms $(x:T)T'$ {\it products} (they are
often written $\Pi x:T.T'$).  Products are the type of functions and
the notation $T \ra T'$ is used for $(x:T)T'$ when $x$ does not occur
free in $T'$. In this paper we ignore variable renaming problems. A
rigorous presentation would use de Bruijn indices.  We write
$(a~b_{1}~...~b_{n})$ for $(~...~(a~b_{1})~...~b_{n})$.

\begin{notation}
If $t$ and $u$ are two terms and $x$ is a variable,
we write $t[x \la u]$ for the term obtained by substituting $u$ for
$x$ in $t$.  We write $a =_{\beta \eta} b$ (resp. $a =_{\beta} b$, $a
=_{\eta} b$) when the terms $a$ and $b$ are $\beta \eta$-convertible
(resp.  $\beta$-convertible, $\eta$-convertible).
We write $t \rhd u$ (respectively $\rhd_{\beta}$, $\rhd_{\eta}$) when
$t$ reduce in one step of $\beta$ or $\eta$-reduction 
(resp. $\beta$-reduction, $\eta$-reduction)
to $u$,
$t \rhd^{*} u$ (resp. 
$t \rhd^{*}_{\beta} u$, $t \rhd^{*}_{\eta} u$)
when $t$ reduces in an arbitrary
number of steps to $u$ and $t \rhd^{+} u$ 
(resp. 
$t \rhd^{+}_{\beta} u$, $t \rhd^{+}_{\eta} u$)
when $t$ reduces in
at least one step to $u$. 
\end{notation}

\begin{definition}[Context]

A {\it context} $\Gamma$ is a list of pairs $<x,T>$ (written $x:T$) where $x$ 
is a variable and $T$ a term. The term $T$ is called the {\it type} of $x$ in
$\Gamma$.
\end{definition}
We write $[x_{1}:T_{1}; ...; x_{n}:T_{n}]$ for the context with elements 
$x_{1}:T_{1}, ..., x_{n}:T_{n}$ and $\Gamma_{1}\Gamma_{2}$ for the 
concatenation of the contexts $\Gamma_{1}$ and $\Gamma_{2}$.

\begin{definition}[Typing rules]

We define inductively two judgements: {\em $\Gamma$ is well-formed}
(written $\vdash \Gamma$) and {\em $t$ has type $T$ in $\Gamma$}
(written $\Gamma \vdash t:T$) where $\Gamma$ is a context and $t$ and
$T$ are terms. These judgements are indexed by the parameter $\Ru$
which is a set of pairs of sorts that contains the pair $<Prop,Prop>$.
\end{definition}
\begin{center}
$$\irule{}
        {\vdash [~]}
        {}$$
$$\irule{\Gamma \vdash T:s~~x\notin\Gamma}
        {\vdash \Gamma[x:T]}
        {s \in \{Prop, Type\}}$$
$$\irule{\vdash \Gamma}
        {\Gamma \vdash Prop:Type}
        {}$$
$$\irule{\vdash \Gamma~~x:T \in \Gamma} 
        {\Gamma \vdash x:T}
        {}$$
$$\irule{\Gamma \vdash T:s~~\Gamma[x:T] \vdash U:s'}
        {\Gamma \vdash (x:T)U:s'}
        {<s,s'> \in \Ru}$$
$$\irule{\Gamma \vdash (x:T)U:s~~\Gamma[x:T] \vdash t:U}
        {\Gamma \vdash [x:T]t:(x:T)U}
        {s \in \{Prop, Type\}}$$
$$\irule{\Gamma \vdash t:(x:T)U~~\Gamma \vdash u:T}
        {\Gamma \vdash (t~u):U[x \la u]}
        {}$$
$$\irule{\Gamma \vdash t:T~~\Gamma \vdash U:s~~T =_{\beta \eta} U}
        {\Gamma \vdash t:U}
        {s \in \{Prop, Type\}}$$
\end{center}

There are eight choices for the set $\Ru$ defining
eight calculi.
Examples of such calculi are the simply typed $\lambda$-calculus 
($\Ru = \{<Prop, Prop>\}$),
the $\lambda \Pi$-calculus \cite{HHP} ($\Ru = \{<Prop, Prop>, <Prop,Type>\}$),
the system $F$ \cite{Girard}
($\Ru = \{<Prop, Prop>, <Type,Prop>\}$), the system
$F\omega$ \cite{Girard} ($\Ru = \{<Prop, Prop>, <Type,Prop>, <Type,Type>\}$)
and the Calculus of Constructions \cite{Coquand85,CoqHue}
($\Ru = \{<Prop, Prop>, <Prop,Type>, <Type,Prop>, <Type,Type>\}$).

\begin{definition}[Well-typed term]

A term $t$ is said to be {\it well-typed} in a context $\Gamma$ iff 
there exists a term $T$ such that $\Gamma \vdash t:T$.
\end{definition}

Notice that the term $Type$ is not well-typed.

\medbreak
\noindent
The following facts hold for every type system of the cube. The proofs
are given in~\cite{Geuvers93,Werner}. 

\begin{prop}[Substitution]
If $\Gamma [x:U] \Gamma' \vdash t:T$ and $\Gamma \vdash u:U$ then
$\Gamma \Gamma' [x \la u] \vdash t[x \la u]:T[x \la u]$. 
\end{prop}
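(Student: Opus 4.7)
The plan is to prove the substitution proposition by structural induction on the derivation of $\Gamma[x:U]\Gamma' \vdash t:T$, carried out simultaneously with a corresponding statement for context well-formedness: if $\vdash \Gamma[x:U]\Gamma'$ and $\Gamma \vdash u:U$ then $\vdash \Gamma\Gamma'[x \la u]$. The typing rules of the cube are syntax-directed enough that most rule cases are routine. The product and abstraction cases follow directly by applying the induction hypothesis to each premise and reassembling with the same rule applied to the substituted data. The application rule needs the elementary identity $V[y \la v][x \la u] = V[x \la u][y \la v[x \la u]]$ on raw terms (with the standard $\alpha$-renaming provisos), after which the application rule applies to the induction hypothesis outputs and produces the required substituted type. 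The conversion rule is preserved because substitution commutes with $\beta$- and $\eta$-reduction, hence $T =_{\beta\eta} T'$ entails $T[x\la u] =_{\beta\eta} T'[x\la u]$.

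The variable rule is the one case where genuine new work is required. If $t$ is the distinguished variable $x$, then $T \equiv U$ by the fact that variables are distinct in a well-formed context; since $U$ is typable in $\Gamma$ alone, which does not mention $x$, we have $U[x\la u] = U$, so we must produce a derivation $\Gamma\Gamma'[x\la u] \vdash u : U$, i.e.~weaken the hypothesis $\Gamma \vdash u:U$ by the context $\Gamma'[x\la u]$. If $t$ is some other variable $y$ with $y:T \in \Gamma$, then again $T[x\la u] = T$ and the same weakening is needed. If $y:T \in \Gamma'$, then $y:T[x\la u] \in \Gamma'[x\la u]$ and the variable rule applies immediately in the substituted context, whose well-formedness is delivered by the simultaneous induction.

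The main obstacle is therefore the weakening step, which is not a consequence of the substitution lemma itself. One must separately establish a thinning lemma: if $\Gamma \vdash u:U$ and $\vdash \Gamma\Delta$, then $\Gamma\Delta \vdash u:U$. This thinning lemma is proved by induction on the derivation of $\Gamma \vdash u:U$ and, because the axiom and variable rules embed well-formedness premises, it must itself be proved in tandem with a thinning statement for $\vdash \Gamma$. Once thinning is available, the simultaneous induction outlined above goes through with no further difficulty, and the subtle detail to check throughout is that free-variable conditions and $\alpha$-renaming make the raw-term substitution identities used in the application and conversion cases genuinely hold.
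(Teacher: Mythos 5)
Your proof is correct and takes essentially the same approach as the paper, which does not spell out an argument for this proposition at all: it defers to the metatheory in Geuvers' and Werner's theses, and for the marked-term analogue simply says ``by induction on the length of the derivation of $\Gamma[x:U]\Gamma' \vdash t:T$''. Your simultaneous induction on the derivation, with the auxiliary thinning lemma isolated for the variable case, is precisely the standard proof being invoked there.
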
 

\begin{prop}
If $\Gamma \vdash t:T$ then either $T = Type$ or there exists a sort
$s$ such that $\Gamma \vdash T:s$.
Moreover if $t$ is a variable, an application or an abstraction (but neither
a sort nor a product) then $T \neq Type$.
\end{prop}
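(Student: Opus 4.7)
The plan is to prove both claims simultaneously by induction on the derivation of $\Gamma \vdash t:T$, walking through each of the six typing rules (the two context-formation rules yield no typing judgement, so they are vacuous).

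The easy cases are handled directly from the rule. The $Prop$ rule gives $T = Type$, fulfilling the first disjunct. For the product rule, $T$ is the sort $s'$; either $s' = Type$ (first disjunct) or $s' = Prop$ and then $\Gamma \vdash Prop : Type$ using the $Prop$ rule together with well-formedness of $\Gamma$. For the abstraction rule, the hypothesis $\Gamma \vdash (x:T)U:s$ is exactly the witness required. For the conversion rule, the second premise $\Gamma \vdash U:s$ is precisely what is needed. For the variable rule $\Gamma \vdash x:T$ with $x:T \in \Gamma$, I would inspect the derivation of $\vdash \Gamma$: the step that introduced $x:T$ required $\Gamma_0 \vdash T:s$ for some sort $s$; a routine weakening lemma (standard for these systems) then yields $\Gamma \vdash T:s$.

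The only rule requiring real work is the application rule, concluding $\Gamma \vdash (t~u):U[x\la u]$ from $\Gamma \vdash t:(x:T)U$ and $\Gamma \vdash u:T$. Applying the induction hypothesis to $\Gamma \vdash t:(x:T)U$, we get either $(x:T)U = Type$---impossible since $Type$ is not a product---or a sort $s$ with $\Gamma \vdash (x:T)U:s$. Inverting this judgement (using the standard generation/inversion lemma for products, which is available in the references cited for these basic facts) yields $\Gamma[x:T] \vdash U:s'$ for some sort $s'$. Then Proposition~1 (Substitution) applied with $u$ gives $\Gamma \vdash U[x \la u]:s'$, as required.

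The second claim is verified in the same induction. In the variable and application cases, the witness produced above is a derivation $\Gamma \vdash T:s$; since $Type$ has no type in any well-formed context, this forces $T \neq Type$. In the abstraction case $T$ is a product, which is syntactically distinct from $Type$. The subtle case is the conversion rule, but there the new type $U$ satisfies $\Gamma \vdash U:s$ directly, so again $U \neq Type$ regardless of what the old type was; so the property is preserved under conversion.

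The main obstacle is the application rule: one needs the generation lemma for the product judgement $\Gamma \vdash (x:T)U:s$ to extract a typing of $U$ in the extended context, and then the Substitution proposition to push this down to $U[x\la u]$. Everything else reduces to direct inspection of the concluding rule of the derivation.
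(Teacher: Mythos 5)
Your proof is correct and follows essentially the same route as the paper, which only sketches this result (for the marked analogue it says: induction on the derivation, with an induction loading giving well-typedness of the context entries --- your inversion of $\vdash \Gamma$ plus weakening in the variable case is the same device). The application case, handled via the generation lemma for products followed by the Substitution proposition, is exactly the standard argument the cited references use.
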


\begin{prop}[Stripping Lemma]
Let $\Gamma\vdash t:T$ be a derivable judgement. The following
holds:
\begin{eqnarray*}
t=Prop & \implies & T=Type \\
t=x & \implies & \ex U. (x:U\in\Gamma\wedge T\conv U) \\
t=(x:A)B & \implies & \exists s_1, s_2. \Gamma\vdash A:s_1\wedge\Gamma[x:A]\vdash
B:s_2\wedge(s_1,s_2)\in\Ru\wedge T=s_2\\
t=(u~v)&\implies&
 \exists A,B.(\Gamma\vdash u:(x:A)B\wedge\Gamma\vdash v:A\wedge
T\conv B[x \la v])\\
t = [x:A]u& \implies &
 \exists B. \exists s.
(\Gamma\vdash (x:A)B:s\wedge
 \Gamma[x:A]\vdash u:B \wedge
 T\conv (x:A)B )      
\end{eqnarray*}
\end{prop}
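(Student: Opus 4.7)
The plan is to proceed by induction on the derivation of $\Gamma \vdash t : T$, performing case analysis on the last rule used. For each syntactic shape of $t$, only two rules can produce a conclusion of that shape: the corresponding introduction rule and the conversion rule. All other rule cases are vacuous because the shape of $t$ in their conclusion does not match.

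When the last rule is an introduction rule matching $t$, the required information is essentially a re-reading of the premises: the $Prop$-axiom gives $T = Type$ immediately; the variable rule provides $U$ with $x:U\in\Gamma$ and $T \conv U$ (in fact $T = U$); the product rule provides $s_1, s_2$ with $(s_1,s_2)\in\Ru$ and $T = s_2$; and the abstraction and application rules supply the required $B$ (and $s$). When the last rule is the conversion rule, we have a subderivation $\Gamma \vdash t : T''$ with $T'' \conv T$; applying the induction hypothesis to that subderivation yields the structural witnesses together with a convertibility statement about $T''$, and transitivity of $=_{\beta\eta}$ with $T'' \conv T$ delivers the statement for $T$ in the variable, application, and abstraction cases.

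The main obstacle I expect is preserving the literal equalities $T = Type$ (for $t = Prop$) and $T = s_2$ (for $t = (x:A)B$) across an application of the conversion rule: there the induction hypothesis supplies only $T \conv s$ and not an identity. To recover the stronger statement, one exploits the side condition $\Gamma \vdash T : s'$ enforced by the conversion rule. Confluence together with the fact that sorts are in $\beta\eta$-normal form forces $T \rhd^* s$, and subject reduction then gives $\Gamma \vdash s : s'$; for $s = Type$ this contradicts the remark preceding the propositions that $Type$ is not well-typed, so the conversion rule can never move the type away from $Type$, and for $s = Prop$ a similar reduction argument pins down the sort. This interaction of the conversion rule's well-formedness side condition with confluence and subject reduction is the only genuinely nontrivial ingredient of the proof.
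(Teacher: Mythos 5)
The paper never proves this proposition itself --- it is quoted from the literature, with the proof deferred to Geuvers' and Werner's theses --- so your attempt can only be measured against the standard argument, which is precisely the induction on the derivation with case analysis on the last rule that you describe. Your handling of the introduction-rule cases, and of the propagation of $\conv$ through the conversion rule in the variable, application and abstraction cases, is correct, and you are right that the only delicate point is preserving the \emph{literal} equalities $T = Type$ and $T = s_2$ across a final conversion step.

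However, your sketch does not actually close the sub-case $s_2 = Prop$. The mechanism you give --- confluence yields $T \rhd^{*} s$, subject reduction yields $\Gamma \vdash s : s'$, contradiction with the fact that $Type$ is not well-typed --- works for $s = Type$ (and hence settles the $t = Prop$ case and product formations landing in $Type$), but for $s = Prop$ the judgement $\Gamma \vdash Prop : s'$ is perfectly derivable and yields no contradiction, and $T \rhd^{*} Prop$ does not by itself force $T = Prop$. The phrase ``a similar reduction argument pins down the sort'' hides the real content: one must show that no well-typed term $\beta\eta$-reduces to $Prop$ in one step. This requires inspecting the possible root redexes: a $\beta$-redex contracting to $Prop$ forces either $[x:A]Prop$ to be well-typed (impossible, since the abstraction rule would require $(x:A)Type$, hence $Type$, to be well-typed) or $Prop$ to be the argument of a well-typed application (impossible, since the domain of the function would then be a well-typed term convertible to $Type$); an $\eta$-redex $[x:A](Prop~x)$ would require $Prop$ to have a product type, contradicting $Prop:Type$ and confluence. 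Only with this extra analysis does the literal $T = s_2$ follow. Alternatively, weakening that conjunct to $T \conv s_2$ --- which is how the generation lemma is usually stated and is all the paper ever uses --- makes your proof complete as written.
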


\begin{prop}[Type Uniqueness]
A well-typed term has a unique type up to conversion.
\end{prop}

\begin{definition}[Atomic Term]
A term is said to be {\it atomic} if it has the form $(h~c_{1}~...~c_{n})$
where $h$ is a variable or a sort (in this last case with $n=0$). 
The symbol $h$ is called the {\it head} of this term.
\end{definition}

\begin{prop}
Let $T$ be a well-typed $\beta$-normal term of type $s$ for some sort 
$s$, $T$ can be written in a unique way 
$T = (x_{1}:P_{1}) ... (x_{n}:P_{n})P$ with $P$ atomic. Moreover if 
$s = Type$ then $P = Prop$. 
\end{prop}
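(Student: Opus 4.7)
The plan is to proceed by induction on the structure of the $\beta$-normal term $T$. Since $T$ is $\beta$-normal, it can be classified syntactically as either (i) a sort, (ii) a variable, (iii) a product $(x:A)B$, (iv) an abstraction $[x:A]u$, or (v) an application $(u~v)$. In case (v), $u$ is again $\beta$-normal and is not an abstraction (otherwise $T$ would contain a $\beta$-redex); a straightforward secondary induction on $u$ shows that any such $u$ is already of the form $(h~c_{1}~\ldots~c_{k})$ with $h$ a variable or a sort, so that $(u~v) = (h~c_{1}~\ldots~c_{k}~v)$ is atomic.

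The atomic cases (i), (ii) and (v) then give the decomposition with $n = 0$ and $P = T$ directly; the sub-case $T = Type$ is excluded since $Type$ is not well-typed. Case (iv) is impossible: by the Stripping Lemma the type of an abstraction is $\beta\eta$-convertible to a product, and by confluence of $\beta\eta$-reduction a product cannot be $\beta\eta$-convertible to a sort, contradicting $T:s$. In case (iii), the Stripping Lemma applied to $(x:A)B:s$ yields $\Gamma[x:A]\vdash B:s$ (identifying the $s_{2}$ of that lemma with $s$, since the lemma gives equality $T=s_{2}$ rather than mere convertibility). The subterm $B$ is $\beta$-normal and of sort type, so the induction hypothesis supplies $B = (x_{2}:P_{2})\ldots(x_{n}:P_{n})P$, and prepending $(x_{1}:A)$ with $P_{1}=A$ completes the decomposition.

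Uniqueness is immediate from the construction: $n$ is the number of leading product formers and $P$ is what remains after peeling them off, both determined by the syntax of $T$. For the ``moreover'' clause, iterating the Stripping Lemma along the products shows that $P$ is well-typed of type $s$ in the extended context $\Gamma[x_{1}:P_{1}]\ldots[x_{n}:P_{n}]$. Since $P$ is atomic it is a sort, a variable, or an application; by Proposition~2, variables and applications have type distinct from $Type$, so $s=Type$ forces $P$ to be a sort, and since $Type$ is not well-typed, $P=Prop$.

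I do not expect a genuine obstacle here; the only non-trivial points are bookkeeping: the secondary induction identifying the head of a normal application as a variable or sort, and the confluence-based observation that a sort is not $\beta\eta$-convertible to a product. Both are standard prerequisites for working in the cube.
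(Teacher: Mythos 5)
The paper does not actually prove this proposition: it is one of the standard facts quoted at the start of Section~1 with a pointer to Geuvers' and Werner's theses, so there is no in-paper argument to compare against. Your induction is the standard proof and is essentially correct, including the key uses of stripping (which gives the sort of $B$ equal to $s$, so the induction hypothesis and the ``moreover'' clause go through) and of confluence to exclude the abstraction case. One point to tighten in case (v): you rule out abstraction heads by $\beta$-normality alone, but a product head, as in $((x:A)B~v)$, is also $\beta$-normal and must be excluded by typing --- the type of a product is a sort, and by confluence a sort is not $\beta\eta$-convertible to a product, exactly the observation you already invoke for case (iv); similarly a sort head applied to arguments is excluded by typing, which matters because the paper's definition of atomic admits a sort as head only when there are no arguments.
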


\begin{prop}[Strengthening]
If $\Gamma[x:U]\Gamma'\vdash t:T$ and $x$ does not occur free in
$\Gamma'$, $t$ and $T$, then $\Gamma\Gamma'\vdash t:T$.
\end{prop}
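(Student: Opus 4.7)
The plan is to proceed by induction on the derivation of $\Gamma[x:U]\Gamma'\vdash t:T$, simultaneously establishing the companion claim that $\vdash\Gamma[x:U]\Gamma'$ with $x$ not occurring free in $\Gamma'$ implies $\vdash\Gamma\Gamma'$. I would silently alpha-rename bound variables so that none of them coincides with $x$.

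For the axiom, sort, variable, product-formation and abstraction-formation rules the argument is direct: a sort has no free variables, a product $(y:A)B$ free of $x$ forces both $A$ and $B$ to avoid $x$, and likewise for an abstraction $[y:A]t$; when a premise introduces an extended context $\Gamma_0[y:A]$, the non-occurrence of $x$ in $A$ is exactly what is needed to apply the companion claim with tail $\Gamma'[y:A]$. In each of these cases the induction hypothesis applies to every premise and the rule is reapplied in the reduced context.

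The delicate cases are the application and conversion rules, in which intermediate types may depend on $x$ even when the conclusion does not. For $\Gamma_0\vdash(u~v):B[y\la v]$ derived from $\Gamma_0\vdash u:(y:A)B$ and $\Gamma_0\vdash v:A$, one sees that $x$ does not occur free in $u$ nor in $v$, and, since $v$ avoids $x$ and substituting for $y\neq x$ cannot erase an $x$ in $B$, neither in $B$; the parameter $A$ may however still mention $x$. Similarly the conversion rule's premise $\Gamma_0\vdash t:T$ uses an intermediate type $T$ that may mention $x$ while the conclusion's type $U$ does not. To handle both I would strengthen the induction hypothesis to: if $\Gamma[x:U]\Gamma'\vdash t:T$ with $x$ not free in $\Gamma'$ nor in $t$, then $\Gamma\Gamma'\vdash t:T'$ for some $T'\conv T$ with $x$ not free in $T'$. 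Under this strengthening the conversion rule is immediate; in the application case, the strengthened IH on $u$ yields a type convertible to $(y:A)B$ but free of $x$, which by confluence and the uniqueness of the product's normal form can be reduced to a product $(y:A')B'$ with $A'\conv A$ and $B'\conv B$, while the strengthened IH on $v$ yields a type $A''\conv A$ free of $x$; conversion then retypes $v$ with $A'$, using type uniqueness (Proposition~4) together with Church--Rosser to supply the needed common sort, and the application rule is reapplied.

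The principal obstacle is thus the design of this strengthened hypothesis and the orchestration of the additional conversion steps it forces; the required well-typedness of the intermediate sorts is obtained by applying the strengthened IH also to the typing judgment for the product $(y:A)B$ supplied by Proposition~2. The original statement is then recovered from the strengthened form by one final conversion step using that $T$ itself has a sort (Proposition~2) to which the strengthened IH also applies, the case $T=Type$ being disposed of separately because $T'\conv Type$ together with the normality of $Type$ force $T'=Type$.
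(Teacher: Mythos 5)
First, a point of reference: the paper does not actually prove this proposition --- it is one of the standard facts quoted from Geuvers' and Werner's theses. The closest thing to ``the paper's own proof'' is the Strengthening proposition for \emph{marked} terms in Section~2.3, whose statement is already cast in exactly the weakened form you arrive at (existence of \emph{some} type $T'$ in the reduced context), and whose proof is an induction using the stripping lemma and Geuvers' lemma. Your overall strategy --- weaken the conclusion to ``$t$ has some type $T'\conv T$ with $x$ not free in $T'$'', treat application and conversion as the motivating cases, and recover the stated form by one final conversion via the fact that every type other than $Type$ has a sort --- is the standard one and matches that proof in its essentials.

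There is, however, one concrete misstep: you classify the abstraction case as ``direct'', and it is not. The abstraction rule has the premise $\Gamma_0\vdash(y:A)B:s$, and while $x\notin FV([y:A]t_0)$ forces $x\notin FV(A)$ and $x\notin FV(t_0)$, it says nothing about $B$: under your own strengthened induction hypothesis the type in the conclusion is unconstrained, so $B$ may contain $x$ (introduced, say, by an earlier use of conversion), and neither the plain nor the strengthened hypothesis applies to that premise. You must instead discard it, obtain $\Gamma\Gamma'[y:A]\vdash t_0:B'$ with $B'\conv B$ and $x\notin FV(B')$ from the body, and rebuild the judgement $\Gamma\Gamma'\vdash(y:A)B':s'$ from scratch (the sort of $B'$ comes from the proposition that every type is $Type$ or has a sort, after excluding $B'=Type$; that the pair of sorts lies in $\Ru$ comes from stripping the original premise together with type uniqueness) before reapplying the abstraction rule to get $\Gamma\Gamma'\vdash[y:A]t_0:(y:A)B'$. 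This is the same orchestration you describe for the application case, and it is no accident that the marked-term proof in the paper singles out precisely abstraction and application for detailed treatment. A second, smaller slip: in your final recovery step the judgement $\Gamma[x:U]\Gamma'\vdash T:s$ is not a subderivation of the one you are inducting on, so you must invoke the already-proved strengthened statement rather than the induction hypothesis; that is only a matter of phrasing.
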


\begin{prop}[Subject reduction]
If $\Gamma \vdash t:T$ and $t \rhd^{*} u$ then $\Gamma \vdash u:T$.
\end{prop}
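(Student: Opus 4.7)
The plan is to reduce, by an easy induction on the length of the reduction sequence, to the one-step case: if $\Gamma \vdash t:T$ and $t \rhd u$ then $\Gamma \vdash u:T$. I would then argue by induction on the structure of $t$. When the redex sits strictly inside $t$ — inside an application, an abstraction, or a product — one peels off the relevant construct using the Stripping Lemma, applies the induction hypothesis to the affected subterm, and rebuilds the derivation. A small amount of bookkeeping is required: in the application case, the conversion rule closes the gap between $B[x \la a]$ and $B[x \la a']$ when $a \rhd a'$; in the abstraction and product cases one also needs the (standard) auxiliary fact that typing is preserved when one replaces a declared type in the context by a convertible, well-typed one.

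For a root $\beta$-redex $t = ([x:A]v)\,w$ rewriting to $v[x \la w]$, two applications of the Stripping Lemma (to the outer application and then to the abstraction) produce $B$ with $\Gamma[x:A] \vdash v : B$, $\Gamma \vdash w : A$, and $T \conv B[x \la w]$, after matching up bound variables via conversion of the two products obtained. The Substitution proposition then yields $\Gamma \vdash v[x \la w] : B[x \la w]$, and a single use of the conversion rule (available since $T$ is well-typed by Proposition~2) gives $\Gamma \vdash v[x \la w] : T$.

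The main obstacle is the root $\eta$-case, $t = [x:A](v~x) \rhd v$ with $x$ not free in $v$. From $\Gamma \vdash t : T$, the Stripping Lemma yields $B$ and $s$ with $\Gamma \vdash (x:A)B : s$, $\Gamma[x:A] \vdash (v~x) : B$, and $T \conv (x:A)B$; stripping the application further gives $\Gamma[x:A] \vdash v : (y:A')B'$ with $A \conv A'$ and $B \conv B'[y \la x]$, so $(y:A')B' \conv (x:A)B$. Since $(x:A)B$ is already known to be well-typed in $\Gamma$, the conversion rule produces $\Gamma[x:A] \vdash v : (x:A)B$. Because $x$ does not occur free in $v$ and is bound inside $(x:A)B$, the Strengthening proposition removes the declaration $x:A$, yielding $\Gamma \vdash v : (x:A)B$, and a last conversion step delivers $\Gamma \vdash v : T$. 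The subtlety here is that passing through the type $(x:A)B$ — rather than the a priori available $(y:A')B'$, whose free variables are not controlled — is exactly what enables Strengthening to apply.
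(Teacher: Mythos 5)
The paper gives no proof of this proposition: it is one of the facts imported from the cited theses of Geuvers and Werner, precisely because subject reduction for $\beta\eta$ in dependent systems is delicate. The only in-paper argument to measure yours against is therefore the proof of subject $\eta$-reduction for \emph{marked} terms in Section~2.3, which handles the critical case differently. Your reduction to one step, the congruence cases, and the root $\beta$-case are fine modulo the usual unstated auxiliaries (weakening, context conversion, injectivity of products).

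The genuine gap is in the root $\eta$-case, at the words ``so $(y:A')B' \conv (x:A)B$''. From $A \conv A'$ and $B \conv B'[y \la x]$ you can conclude $(x:A)B \conv (x:A')(B'[y\la x])$, but identifying $(x:A')(B'[y\la x])$ with $(y:A')B'$ is an $\alpha$-renaming of $y$ into $x$, which is capture-safe only if $x$ does not occur free in $B'$. The Stripping Lemma gives you no control over the free variables of the product type it produces (the derivation may have routed $v$'s type through conversions mentioning $x$), and if $x \in FV(B')$ the two products do not even have the same free variables, so they need not be convertible as raw terms. Excluding this situation is essentially the strengthening statement you are trying to reach, so the argument is circular at exactly the hard point. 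The clean route --- the one the paper's marked-term proof takes --- is to apply Strengthening in its \emph{existential} form first (``$v$ has some type $C$ in $\Gamma$''), then use Type Uniqueness to get $C \conv (y:A')B' \conv T$, and finish with a single conversion; one never needs to convert $v$'s type to $(x:A)B$ inside the extended context. Note that the paper states that existential form only for marked terms; for unmarked terms it states only the version requiring $x \notin FV(T)$, which is one reason this proposition cannot be reassembled from the listed facts alone. A further caveat: Strengthening, Type Uniqueness and Church--Rosser for $\beta\eta$ are themselves established \emph{together with} subject reduction in the cited references, so treating them as prior black boxes conceals a bootstrapping problem even where your individual steps are sound.
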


\begin{prop}[Normalization]
The relation $\rhd$ is strongly normalizing.
\end{prop}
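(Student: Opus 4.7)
The plan is to reduce strong normalization of $\beta\eta$-reduction to strong normalization of $\beta$-reduction alone, which is the main content of the cited references. Two auxiliary ingredients suffice: the observation that $\eta$-reduction in isolation is strongly normalizing (each $\eta$-contraction strictly decreases the syntactic size of the term) and a postponement lemma saying that $\eta$-steps can always be deferred past $\beta$-steps.

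More precisely, I would establish the following commutation: if $t \rhd_\eta u \rhd_\beta v$ then there exists $w$ with $t \rhd^+_\beta w \rhd^*_\eta v$. The proof is a case analysis on the relative positions of the two contracted redexes. When the $\eta$-redex and the $\beta$-redex are disjoint, or when the $\eta$-redex lies strictly inside the function, the argument, or the type annotation of the $\beta$-redex, the reordering is routine (one may need to duplicate the $\eta$-step once per copy of the substituted argument). The only delicate subcase is when the $\eta$-redex $[x:U](f~x)$ sits as the function of an application, so that contracting it uncovers a new $\beta$-redex; a direct check shows that performing the $\beta$-step first produces the same term in one or two further $\beta$-steps, using the side condition that $x$ does not occur free in $f$. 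Subject reduction guarantees that all intermediate terms remain well-typed.

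Once postponement is in hand, define $|t|_\beta$ to be the supremum of the lengths of $\beta$-reductions issuing from $t$. This supremum is finite by strong normalization of $\beta$ combined with K\"onig's lemma, since each term has only finitely many $\beta$-redexes. Iterating the postponement lemma, any $\beta$-reduction of length $n$ from $u$ can be lifted to a $\beta$-reduction of length at least $n$ from any $t$ with $t \rhd_\eta u$; in particular $|t|_\beta \geq |u|_\beta$ whenever $t \rhd_\eta u$. Hence the lexicographic pair $(|t|_\beta, \mathrm{size}(t))$ strictly decreases under every $\beta$-step (first coordinate drops) and under every $\eta$-step (first coordinate is preserved, second strictly drops), which proves that $\rhd$ is strongly normalizing.

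The principal obstacle is the postponement lemma itself: the case analysis must accommodate the type annotations carried by each abstraction and the duplication of $\eta$-redexes through substitution, neither of which arise in the pure untyped or simply typed analogue. Once this routine but fiddly verification is completed, the remainder is a standard rewriting-theoretic measure argument.
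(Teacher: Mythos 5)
The paper does not prove this proposition at all: it is one of the standard metatheoretic facts imported from \cite{Geuvers93,Werner}, so there is no internal argument to compare yours against. Your overall strategy --- strong normalization of $\eta$ alone by the size measure, single-step $\eta$-postponement $\rhd_\eta\cdot\rhd_\beta\subseteq\rhd^+_\beta\cdot\rhd^*_\eta$, and then a transfer to strong normalization of $\beta$ --- is indeed the standard route taken in those references, and your case analysis of the postponement lemma (including the delicate case where the $\eta$-redex is the function part of the contracted $\beta$-redex, which resolves into two $\beta$-steps and no $\eta$-steps) is correct.

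There is, however, a genuine gap at the phrase ``iterating the postponement lemma.'' To get $|t|_\beta\geq|u|_\beta$ from $t\rhd_\eta u$ you must lift a whole $\beta$-reduction $u\rhd_\beta u_1\rhd_\beta\cdots\rhd_\beta u_n$, and already after the first swap you are left with $t_1\rhd^{*}_\eta u_1\rhd_\beta u_2$ where the $\eta$-prefix may contain \emph{several} steps (duplication through substitution). Postponing a block of $\eta$-steps past a $\beta$-step by repeated single swaps is an induction whose termination is not obvious, because each swap can increase both the number of trailing $\eta$-steps and the number of $\beta$-steps; this is precisely the classically delicate point of $\eta$-postponement proofs and cannot be dismissed as routine iteration. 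The argument can be repaired: prove the multi-step statement ``$t\rhd^{m}_\eta u\rhd_\beta v$ implies $t\rhd^{+}_\beta\cdot\rhd^{*}_\eta v$'' by lexicographic induction on $(|t|_\beta,m)$, using the inner component to peel off one $\eta$-step and the outer component (legitimate because the single-step lemma always yields at least one $\beta$-step, so $|t_1|_\beta<|t|_\beta$) to handle the extra $\beta$-steps it produces. Alternatively you can avoid the quantitative claim altogether: given an infinite $\beta\eta$-reduction, it must contain infinitely many $\beta$-steps since $\eta$ is terminating, and each single swap strictly decreases the number of $\eta$-steps preceding the first $\beta$-step, so one can advance a $\beta$-step to the front infinitely often and extract an infinite $\beta$-reduction, contradicting $\beta$-strong normalization (this is the Bachmair--Dershowitz quasi-commutation argument, and it needs only the single-step lemma you proved). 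Either repair completes your proof; as written, the key inequality $|t|_\beta\geq|u|_\beta$ is asserted rather than established.
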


\begin{prop}[Confluence] 
The relation $\rhd$ is confluent.
\end{prop}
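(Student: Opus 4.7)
The plan is to invoke Newman's lemma: since strong normalization of $\rhd$ has just been proved, confluence follows from \emph{local} confluence on well-typed terms. I would prove local confluence by analyzing, for each well-typed $t$ with $t \rhd u_1$ and $t \rhd u_2$, the relative positions of the two contracted redexes within $t$.

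Disjoint redexes commute, and a redex strictly nested inside another closes by standard substitution arguments on reduction. The critical cases are overlaps at a common root: the $\beta/\beta$ and $\eta/\eta$ self-overlaps are trivial because each rule has a unique contractum at a given position, and there is a single mixed $\beta/\eta$ overlap, occurring at a term $t = [x:T](([y:U]s)~x)$ with $x \notin \mathrm{FV}(s)$. Its $\beta$-reduct is $[x:T]s[y\la x]$, which is $\alpha$-equal to $[y:T]s$, while its $\eta$-reduct is $[y:U]s$.

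The main obstacle is closing this last diagram. Its two reducts agree except for the domain annotation ($T$ vs $U$), so no purely syntactic argument suffices: in fact $\rhd$ is \emph{not} confluent on arbitrary preterms bearing incompatible annotations. Well-typedness must enter: the Stripping Lemma applied to the application $([y:U]s)~x$ forces $T \conv U$. Turning this convertibility into an actual common reduct requires an induction. I would therefore organize the whole proof as a well-founded induction on a measure combining strong normalization with the subterm relation, so that the annotations $T$ and $U$, being strict subterms of $t$, can be assumed confluent by the induction hypothesis. A common reduct $V$ of $T$ and $U$ then yields $[y:T]s \rhd^* [y:V]s$ and $[y:U]s \rhd^* [y:V]s$, closing the diagram and completing the proof.
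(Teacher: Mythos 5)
The paper does not actually prove this proposition: it is one of the standard metatheoretic facts for which it simply refers to \cite{Geuvers93,Werner}, so there is no in-paper argument to compare yours against. Your critical-pair analysis is nevertheless the right place to start, and you have correctly isolated the one genuinely hard case: the $\beta/\eta$ overlap at $[x:T](([y:U]s)~x)$, whose reducts $[y:T]s$ and $[y:U]s$ differ only in the domain annotation. You are also right that this is exactly where confluence fails on raw terms and where typing must intervene.

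The gap is in how you close that diagram. From the Stripping Lemma you obtain $T \conv U$, and you then ask for ``a common reduct $V$ of $T$ and $U$'', justified by an induction hypothesis granting confluence below $t$. But passing from convertibility to joinability is precisely the Church-Rosser property you are trying to establish, and confluence of the reducts of $T$ and of $U$ does not deliver it: $T \conv U$ is witnessed by a zig-zag of reductions whose peaks are arbitrary terms --- neither subterms nor reducts of $t$ --- so they are not covered by your induction hypothesis, and they need not even be well-typed. There is a second, related circularity: extracting $T \conv U$ already uses injectivity of products up to conversion, and running Newman's lemma on the set of well-typed terms needs subject reduction for $\eta$; in a system whose conversion rule uses $=_{\beta\eta}$, both of these are normally themselves consequences of Church-Rosser. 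This is why the proofs the paper cites do not go through Newman's lemma at all: they establish confluence of $\beta$ alone on raw terms, postpone $\eta$, pass to $\beta$-normal forms using normalization, and then analyse $\eta$-reduction on well-typed $\beta$-normal terms (the content of Proposition~\ref{tec1} here). The paper's own confluence proof for the \emph{marked} calculus follows the same normal-form route rather than a local-confluence one.
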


\begin{prop}[Church-Rosser]
If $\Gamma\vdash t_1:T$, $\Gamma\vdash t_2:T$ and $t_1\conv t_2$, then
there exists $t$ such that $t_1\rhd^{*}t$ and $t_2\rhd^{*}t$ (and so
$\Gamma\vdash t:T$).
\end{prop}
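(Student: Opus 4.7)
The plan is to deduce Church-Rosser from the already-established Confluence property by a routine induction on the length of a conversion witness. Unfolding the definition of $=_{\beta\eta}$ as the equivalence closure of $\rhd$, the assumption $t_1\conv t_2$ produces a finite zig-zag chain $t_1 = s_0, s_1, \ldots, s_n = t_2$ in which for each $i$ either $s_i \rhd s_{i+1}$ or $s_{i+1}\rhd s_i$. The goal is to collapse this zig-zag into a wedge $t_1 \rhd^{*} t \;{}^{*}\!\lhd\, t_2$.

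I would induct on $n$. The base case $n=0$ is trivial (take $t:=t_1$). For the inductive step, the induction hypothesis applied to the prefix $s_0,\ldots,s_{n-1}$ supplies a term $v$ with $t_1 \rhd^{*} v$ and $s_{n-1}\rhd^{*} v$. If the last link is a backward step $t_2 \rhd s_{n-1}$, then $t_2\rhd s_{n-1}\rhd^{*} v$, so $t:=v$ already works. If instead it is a forward step $s_{n-1}\rhd t_2$, then both $s_{n-1}\rhd^{*} v$ and $s_{n-1}\rhd^{*} t_2$ hold, and Confluence yields a $t$ with $v\rhd^{*} t$ and $t_2\rhd^{*} t$; concatenating the first with $t_1\rhd^{*} v$ delivers the required common reduct. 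Note that the intermediate $s_i$ need not be well-typed, but this is harmless, since Confluence as stated above applies to $\rhd$ on arbitrary terms.

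The parenthetical conclusion $\Gamma\vdash t:T$ follows at once: from $\Gamma\vdash t_1:T$ and $t_1\rhd^{*} t$, Subject reduction provides the typing of $t$. There is no real obstacle in this argument; all the mathematical content has been absorbed into Confluence (and Subject reduction for the typing clause), and the proof is purely a bookkeeping exercise of pasting confluence diamonds along the conversion path while keeping track of which links point forward and which point backward.
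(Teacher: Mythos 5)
Your reduction of Church--Rosser to Confluence by tiling the conversion zig-zag is the standard argument, and it would be sound if $\rhd$ were confluent on arbitrary raw terms. But the sentence you dismiss as harmless is exactly where the proof breaks: in these Church-style systems, where abstractions carry domain annotations, $\beta\eta$-reduction is \emph{not} confluent on untyped terms. The term $[x:A]([y:B]y~x)$ $\beta$-reduces to $[x:A]x$ and $\eta$-reduces to $[y:B]y$; when $A$ and $B$ are distinct normal terms these are two distinct normal forms with no common reduct. (Such a term is well-typed only when $A \conv B$, which is why the pathology vanishes on typed terms.) The paper's Confluence proposition must therefore be read, like the Normalization proposition just above it, as a statement about well-typed terms. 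Under that reading your induction collapses: the intermediate terms $s_i$ of the conversion chain need not be well-typed, and subject reduction transports typing only forward along $\rhd$, not along expansions, so you have no license to apply Confluence at $s_{n-1}$.

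This is not a repairable bookkeeping slip; it is the actual mathematical content of the proposition, and it is why the statement is restricted to two terms typed by the \emph{same} $T$ in the \emph{same} $\Gamma$. The paper does not prove this proposition at all: it defers to \cite{Geuvers93,Werner}, where the Church--Rosser property for $\beta\eta$ on well-typed terms of the cube is a central technical result, obtained by a delicate analysis of the $\beta$/$\eta$ critical pair exhibited above (see also \cite{Geuvers92} and \cite{GeuWer}, which are devoted to precisely this difficulty). Your argument is perfectly correct for $\beta$-conversion alone, since $\beta$ is confluent on raw terms by the usual Tait--Martin-L\"of method; it is the domain annotation erased by $\eta$ that defeats it here.
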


We will also need the following technical lemma. It is implicitly
present in the Church-Rosser proofs of~\cite{Geuvers93,Werner} but not
stated in this exact form. It basically states that $\eta$-reductions
do not erase free-variables on well-formed $\beta$-normal terms.

\begin{prop}\label{tec1}
Let $\Gamma[x:U]\Gamma'\vdash t:T$ be a derivable judgement, with $t$
$\beta$-normal. Let $t'$ be its $\beta\eta$-normal form. If
$\Gamma\Gamma'\vdash t':T$ (in other words, $x$ does not occur free in
$\Gamma'$, $t'$ and $T$) then $x$ does not occur free in $t$ either
(and hence, by strengthening, $\Gamma\Gamma'\vdash t:T$).
\end{prop}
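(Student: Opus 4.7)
The plan is to prove the proposition by structural induction on the $\beta$-normal term $t$, the induction hypothesis being the full statement of the proposition applied to any strictly smaller well-typed $\beta$-normal subterm. Since $t$ is $\beta$-normal, the reduction $t \rhd^{*} t'$ consists of $\eta$-steps only, and every intermediate term remains $\beta$-normal. A technical ingredient used throughout is the conversion rule, which allows the type of a judgement to be replaced by its $\beta\eta$-normal form so as to match the free-variable information carried by the hypothesis $\Gamma\Gamma' \vdash t':T$.

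The cases where $t$ is a sort or a variable are immediate. If $t = (y{:}A)B$ is a product, stripping forces $T$ to be a sort, so the hypothesis on $T$ is vacuous, and applying the induction hypothesis to $A$ and $B$ (each typed by a sort) gives the result. If $t = (h~c_{1}~\ldots~c_{n})$ with $n \geq 1$, then $h$ must be a variable or a sort (else $t$ would not be $\beta$-normal), hence $t' = (h~c'_{1}~\ldots~c'_{n})$ with $c'_{i}$ the normal form of $c_{i}$, and in particular $x \neq h$. Writing the normal form of the type of $h$ as $(y_{1}{:}A_{1}^{*})\ldots(y_{n}{:}A_{n}^{*})C^{*}$, the fact that $h$ is declared in $\Gamma\Gamma'$ and that $x$ is free in neither $\Gamma$ nor $\Gamma'$ ensures $x$ is not free in any $A_{i}^{*}$. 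A sub-induction on $i$ then shows $x$ is not free in $c_{i}$: the type of $c_{i}$ is convertible to $A_{i}^{*}[y_{1} \la c_{1}]\ldots[y_{i-1} \la c_{i-1}]$, which contains $x$ free only if some $c_{j}$ with $j < i$ does; but by the sub-induction hypothesis none does, so the main induction hypothesis applies to $c_{i}$.

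The abstraction case $t = [y{:}A]u$ is the most delicate. From stripping, $\Gamma[x{:}U]\Gamma'[y{:}A] \vdash u:B$ with $T \conv (y{:}A)B$. Two sub-cases arise: either the normal form of $u$ is of the shape $(w~y)$ with $y$ not free in $w$, in which case $t' = w$; or it is not, in which case $t' = [y{:}A']u'$ with $A'$, $u'$ the respective normal forms of $A$, $u$. In both sub-cases the normal form of $T$ has the shape $(y{:}A')(\mbox{normal form of }B)$, so the assumption that $x$ is not free in $T$ yields that $x$ is free in neither $A'$ nor the normal form of $B$, and applying the induction hypothesis to $A$ (whose type is a sort) gives $x$ not free in $A$. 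To apply the induction hypothesis to $u$, I use the conversion rule to replace its type by the normal form of $B$, and verify that the required judgement $\Gamma\Gamma'[y{:}A] \vdash (\mbox{normal form of }u):(\mbox{normal form of }B)$ is derivable: this is obtained by stripping the hypothesis $\Gamma\Gamma' \vdash t':T$ directly in the second sub-case, and in the first sub-case by weakening $\Gamma\Gamma' \vdash w:T$ into $\Gamma\Gamma'[y{:}A]$ and forming the application $(w~y)$. I expect the first sub-case to be the main obstacle, for $t' = w$ no longer contains the binder $[y{:}A]$, so the fact that $A$ is $x$-free is only recoverable indirectly through the normal form of $T$, a type which is fixed once and for all by the statement of the proposition.
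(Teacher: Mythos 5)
Your argument is essentially the paper's own proof: the same structural induction on the $\beta$-normal term $t$ (using that the reduction to $t'$ is pure $\eta$), the same case split, and the same key step in the collapsing abstraction case, namely recovering the $x$-freeness of $A$ indirectly from the normal form of $T$ (which, being convertible to $(y{:}A)B$ and typable, reduces to a product whose components are the normal forms of $A$ and $B$) and then applying the induction hypothesis to the body, whose normal form $(w~y)$ is still $x$-free. The one inaccuracy is in the application case: in a dependent system the normal form of the type of $h$ need not be a literal telescope $(y_1{:}A_1^{*})\dots(y_n{:}A_n^{*})C^{*}$ (take $h:(y{:}Prop)y$ applied first to a product type and then to a further argument), so the type of $c_i$ is not in general $A_i^{*}[y_1 \la c_1]\dots[y_{i-1} \la c_{i-1}]$; the invariant you actually rely on --- that the free variables of the type of $c_i$ lie among those of the type of $h$ and of $c_1,\dots,c_{i-1}$, hence exclude $x$ once the sub-induction has handled $c_1,\dots,c_{i-1}$ --- is nevertheless correct, and is obtained by building the types of the successive applications step by step in $\Gamma\Gamma'$ (normalizing to a product before each application), which is exactly what the paper's ``induction over $n$'' does.
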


\begin{proof}
On $\beta$-normal terms, such as $t$, $\eta$-reduction does not create
$\beta$-redexes. Thus we have $t\rhd^{*}_{\eta} t'$. We prove the
proposition by induction over the structure of $t$; since $t$ is
$\beta$-normal, the cases are:
\begin{itemize}
\item $t=s$; trivial.
\item $t=(z:A)B$; in which case $t'=(z:A')B'$ with $A\rhd^{*}_{\eta}A'$
and $B\rhd_{\eta}^{*}B'$. The stripping lemma allows us to simply
apply the induction hypothesis to the subterms (first to $A$, then to
$B$ in the context $\Gamma[x:U]\Gamma'[z:A]$).
\item $t=(y~u_1~\dots~u_n)$; which is similar to the case above. We
simply have to reason by induction over $n$ in order to first build
the respective types of $u_1$\dots$u_n$ in $\Gamma\Gamma'$.
\item $t = [z:A]B$; then we have two possibilities:
\begin{itemize}
\item $t' = [z:A']B'$ with $A\rhd^{*}_{\eta}A'$ and $B\rhd_{\eta}^{*}B'$.
This case is similar to $t=(z:A)B$.
\item $t'=B'$ with $B\rhd_{\eta}^{*}(B''~z)$ and
$B''\rhd^{*}_{\eta}B'$ with $x$ not free in $B'$. 
By the stripping lemma and Church-Rosser property, we know that
$T\rhd^{*}(z:A_0)C$ with $A\conv
A_0$. Since $x$ does not occur free in $A_0$, neither does it in its
normal form, which is also the normal form of $A$. 
By induction hypothesis, $x$ does not occur free in $A$. 
Furthermore, the $\beta\eta$-normal form of $B$ is $(B'~z)$, in which
$x$ does not occur free. Since $C$ is a type of $B$ in which $x$ does
not occur free either, we may again apply the induction hypothesis
to the derivable judgement
$$\Gamma[x:U]\Gamma'[z:A]\vdash B:C$$
in order to verify that $x$ does not occur free in $B$.
\end{itemize}
\end{itemize}
\end{proof}

In order to be able to talk about the type of a subterm, we define the
usual notion of subterm, but keeping track of the context in which
they are well-typed: we write $t_\Gamma$ for a pair composed by a
context $\Gamma$ and a {\it $\beta$-normal} term $t$ 
known to be well-typed in $\Gamma$.

\begin{definition}[Subterm]

Let $t_{\Gamma}$ be such a pair; we define by induction over the structure of 
$t_{\Gamma}$ the set $Sub(t_{\Gamma})$ of {\it strict subterms} of 
$t_{\Gamma}$:
\begin{itemize}
\item{if $t_{\Gamma}$ is a sort or a variable then $Sub(t_{\Gamma}) = \{\}$,}
\item{if $t_{\Gamma}$ is an application $t = (u~v)$ then
$Sub(t_{\Gamma}) = \{u_{\Gamma}, v_{\Gamma}\} \cup Sub(u_{\Gamma}) \cup 
Sub(v_{\Gamma})$,}
\item{if $t_{\Gamma}$ is an abstraction $t = [x:P]u$ then
$Sub(t_{\Gamma}) = \{P_{\Gamma},u_{\Gamma[x:P]}\} \cup Sub(P_{\Gamma}) \cup 
Sub(u_{\Gamma[x:P]})$,}
\item{if $t_{\Gamma}$ is a product $t = (x:P)u$ then
$Sub(t_{\Gamma}) = \{P_{\Gamma},u_{\Gamma[x:P]}\} \cup Sub(P_{\Gamma}) \cup
Sub(u_{\Gamma[x:P]})$.}
\end{itemize}
\end{definition}

Notice that if the term $t$ is well-typed in $\Gamma$ and 
$u_{\Delta} \in Sub(t_{\Gamma})$ then $u$ is well-typed in $\Delta$.

\begin{definition}[The Relation $<$]

Let $<$ be the smallest transitive relation defined on $\beta\eta$-normal
well-typed terms labeled by their contexts such that
\begin{itemize}
\item 
if neither $t$ nor $u$ is a sort and $t_{\Gamma}$ is a strict subterm of 
$u_{\Delta}$ then $t_{\Gamma} < u_{\Delta}$,
\item
if neither $t$ nor $T$ is a sort, and $T$ is the (unique) normal 
form of the type of $t$ in $\Gamma$ then $T_{\Gamma} < t_{\Gamma}$.
\end{itemize}
\end{definition}

\section{Marked Terms}

\subsection{Definition}

The main idea in this proof is to consider a new syntax for type systems
such that the type of a well-typed term $t$ is a subterm of $t$. So we define 
a syntax for type systems where each term is marked with its type. In fact, we
only need to mark variables, applications and abstractions, but neither sorts 
nor products.

\begin{definition}[Marked Terms]
$$T~::=~Prop~|~Type~|~x^{T}~|~(T~T)^{T}~|~([x:T]T)^{T}~|~(x:T)T$$
\end{definition}

\begin{definition}[Marked context]
A {\it marked context} is a list of pairs $<x,T>$ (written $x:T$) where $x$ 
a variable and $T$ a marked term.
\end{definition}

\begin{notation}
Let $t$ and $u$ be marked terms and $x$ a variable. We write $t[x \la u]$
for the term obtained by substituting $u$ for $x$ in $t$, since
the variable $x$ may also occur in the marks we have to substitute both in the 
term and in the marks.
Free variables are defined as usual, here also they may occur in the marks
too.
$\beta$ and $\eta$-reductions are defined on marked terms in
the same way as they are on unmarked terms. We write $t \rhd u$ when $t$
reduces in one step of $\beta$ or $\eta$-reduction to $u$.
The contracted redex may be either in the term or in the marks.
We write $t \rhd_{\beta} u$ if the contracted redex is a
$\beta$-redex and $t \rhd_{\eta} u$ if the contracted redex is an
$\eta$-redex.
A marked term $t$ is said to be normal (resp. $\beta$-normal,
$\eta$-normal) if it contains no redex (resp. $\beta$-redex, $\eta$-redex).
\end{notation}

\begin{definition}[Contents]

Let $t$ be a marked term, the {\it contents} of $t$ is the unmarked term 
$t^{\#}$ defined by induction over the structure of $t$:
\begin{itemize}
\item{if $t$ is a sort then $t^{\#} = t$,}
\item{if $t = x^{T}$ then $t^{\#} = x$,}
\item{if $t = (u~v)^{T}$ then 
$t^{\#} = (u^{\#}~v^{\#})$,}
\item{if $t = ([x:P]u)^{T}$ then 
$t^{\#} = [x:P^{\#}]u^{\#}$,}
\item{if $t = (x:P)U$ then 
$t^{\#} = (x:P^{\#})U^{\#}$.}
\end{itemize}

Let $\Gamma = [x_{1}:P_{1}; ...; x_{n}:P_{n}]$ be a marked context, the 
contents of $\Gamma$ is the context
$$\Gamma^{\#} = [x_{1}:P_{1}^{\#}; ...; x_{n}:P_{n}^{\#}]$$
\end{definition}

\begin{definition}[Conversion on marked terms]
Let $t_1$ and $t_2$ be two {\it marked} terms. We
say that $t_1\conv t_2$ (respectively $t_1=_\beta t_2$) if and only
if $t_1^{\#}\conv t_2^{\#}$ (respectively $t_1^{\#} =_\beta t_2^{\#}$)
\end{definition}

Notice that the convertibility relation used on marked terms is
defined as $\beta\eta$-conversion on the underlying unmarked terms,
whatever that marks are.

\begin{definition}[Typing rules]

We define inductively two judgements: {\it $\Gamma$ is well-formed} 
(written $\vdashm \Gamma$)
and 
{\it $t$ has type $T$ in $\Gamma$} ($\Gamma \vdashm t:T$) where $\Gamma$ is
a marked context and $t$ and $T$ are marked terms.
\end{definition}
\begin{center}
$$\irule{}
        {\vdashm [~]}
        {}$$
$$\irule{\Gamma \vdashm T:s~~x\notin\Gamma} 
        {\vdashm \Gamma[x:T]}
        {s \in \{Prop, Type\}}$$
$$\irule{\vdashm \Gamma}
        {\Gamma \vdashm Prop:Type}
        {}$$
$$\irule{\vdashm \Gamma~~x:T \in \Gamma} 
        {\Gamma \vdashm x^{T}:T}
        {}$$
$$\irule{\Gamma \vdashm T:s~~\Gamma[x:T] \vdashm U:s'}
        {\Gamma \vdashm (x:T)U:s'}
        {<s,s'> \in \Ru}$$
$$\irule{\Gamma \vdashm (x:T)U:s~~\Gamma[x:T] \vdashm t:U} 
        {\Gamma \vdashm ([x:T]t)^{(x:T)U}:(x:T)U}
        {s \in \{Prop, Type\}}$$
$$\irule{\Gamma \vdashm t:(x:T)U~~\Gamma \vdashm u:T}
        {\Gamma \vdashm (t~u)^{U[x \la u]}:U[x \la u]}
        {}$$
$$\irule{\Gamma \vdashm t:T~~\Gamma \vdashm U:s~~T =_{\beta \eta} U}
        {\Gamma \vdashm t:U}
        {s \in \{Prop, Type\}}$$
$$\irule{\Gamma \vdashm x^{V}:T~~\Gamma \vdashm W:s~~V =_{\beta \eta}  W}
        {\Gamma \vdashm x^{W}:T}
        {s \in \{Prop, Type\}}$$
$$\irule{\Gamma \vdashm ([x:T]t)^{V}:U~~\Gamma \vdashm W:s~~V =_{\beta \eta} W}
        {\Gamma \vdashm ([x:T]t)^{W}:U}
        {s \in \{Prop, Type\}}$$
$$\irule{\Gamma \vdashm (t~u)^{V}:U~~\Gamma \vdashm W:s~~V =_{\beta \eta} W}
        {\Gamma \vdashm (t~u)^{W}:U}
        {s \in \{Prop, Type\}}$$
\end{center}
\normalsize

The three last rules permit to perform conversions in marks, in 
the same way as the rule above permit to perform conversion in the
type of a term.

\begin{definition}[Well-typed marked term]

A marked term $t$ is said to be {\it well-typed} in a marked context $\Gamma$ 
if and only if there exists a marked term $T$ such that $\Gamma \vdashm t:T$.
\end{definition}

\begin{prop}
If $t$ and $u$ are marked terms and $t \rhd u$
then either $t^{\#} = u^{\#}$ (if the redex occurs in the marks) or
$t^{\#} \rhd u^{\#}$.
\end{prop}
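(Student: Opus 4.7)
The plan is to prove this by structural induction on the marked term $t$, distinguishing cases on the position of the contracted redex. Every redex occurrence falls into exactly one of three categories: (a) the redex is the outermost position of $t$ itself (a top-level $\beta$- or $\eta$-redex), (b) the redex lies entirely inside one of the positions that are discarded by the contents operation, namely the outer mark $T$ of $x^{T}$, $(u~v)^{T}$ or $([x:P]u)^{T}$, or (c) the redex lies inside a subterm that survives the contents operation (an argument of an application, the body of an abstraction, or the bound-variable annotation $P$ in $[x:P]u$ or $(x:P)u$).

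The main content is case~(a). For this I would first record the standard fact, proved by an easy induction on $v$, that $(v[x \la w])^{\#} = v^{\#}[x \la w^{\#}]$. A top-level $\beta$-redex has the shape $t = (([x:P]v)^{V}~w)^{W}$ and contracts to $u = v[x \la w]$; then $t^{\#} = ([x:P^{\#}]v^{\#}~w^{\#})$ and $u^{\#} = v^{\#}[x \la w^{\#}]$, which gives $t^{\#} \rhd_{\beta} u^{\#}$ in one step. A top-level $\eta$-redex has the shape $t = ([x:P](F~x^{Q})^{V})^{W}$ with $x$ not free in $F$ (including its marks); it contracts to $u = F$. Since the free variables of $F^{\#}$ are contained in those of $F$, $x$ is not free in $F^{\#}$ either, so $t^{\#} = [x:P^{\#}](F^{\#}~x)$ is genuinely an $\eta$-redex and reduces in one step to $F^{\#} = u^{\#}$.

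Case~(b) is immediate: the contents operation discards the affected mark entirely, so $t^{\#} = u^{\#}$ without any reduction. Case~(c) is a direct application of the induction hypothesis to the affected subterm, using the appropriate clause in the definition of $(\cdot)^{\#}$ to lift the single-step reduction back to the whole term. There is no substantive obstacle here; the only point to watch carefully is the bookkeeping distinction between a \emph{mark} in the sense of case~(b), which is always stripped away, and a bound-variable annotation $P$ appearing inside $[x:P]u$ or $(x:P)u$, which is preserved by $(\cdot)^{\#}$ and therefore falls under case~(c).
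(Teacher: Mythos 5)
Your proof is correct; note that the paper actually states this proposition without giving any proof at all, so your argument fills in exactly the routine verification being left to the reader. The case split (top-level redex, redex inside a stripped mark, redex inside a surviving subterm) together with the auxiliary identity $(v[x \la w])^{\#} = v^{\#}[x \la w^{\#}]$ and the observation that $FV(F^{\#}) \subseteq FV(F)$ for the $\eta$ case is precisely the standard argument the authors implicitly rely on, and the only point worth stating slightly more carefully is that in your case~(c) the induction hypothesis may return either the equality or the one-step reduction, and the corresponding disjunct is simply propagated to the whole term.
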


\begin{prop}\label{beta}
If $t$ is a marked term such that 
$t^{\#} \rhd_{\beta} u$ then there exists a marked term $t'$ such that
$t \rhd_{\beta} t'$ and $t'^{\#} = u$. 
\end{prop}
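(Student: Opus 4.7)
The plan is to proceed by structural induction on the marked term $t$, relying on the observation that the grammar of marked terms guarantees a tight correspondence between the shape of $t$ and the shape of $t^{\#}$.

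First I would dispose of the cases where $t^{\#}$ has no $\beta$-redex at the head: if $t$ is a sort or a variable $x^{V}$, then $t^{\#}$ is atomic and nothing needs to be done (no reduction is possible at all). If $t$ is a product $(x:P)U$ or an abstraction $([x:P]a)^{Q}$, the contents $t^{\#}$ is $(x:P^{\#})U^{\#}$ or $[x:P^{\#}]a^{\#}$, so a $\beta$-redex in $t^{\#}$ must live inside $P^{\#}$, $U^{\#}$, or $a^{\#}$, and the induction hypothesis applied to the corresponding marked subterm yields the required $t'$ (the mark $Q$ on an abstraction plays no role here, since reductions inside marks are allowed but simply not what we need). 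The application case $t=(v~w)^{T}$ splits similarly: either the redex is interior to $v^{\#}$ or $w^{\#}$ and we invoke the induction hypothesis, or the redex is the outermost application itself.

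The interesting step is the outermost-redex subcase. Here $v^{\#}$ must be an abstraction $[x:P^{\#}]a^{\#}$, and by the grammar of marked terms this forces $v$ itself to have the form $([x:P]a)^{Q}$ for some $P$, $a$, $Q$. Defining $\beta$-reduction on marked terms in the usual way gives
$$(([x:P]a)^{Q}~w)^{T} \rhd_{\beta} a[x \la w],$$
and the contents of $a[x \la w]$ is $a^{\#}[x \la w^{\#}]$, which is exactly $u$. So $t' = a[x \la w]$ works.

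The only step that needs any attention is this application case, where one has to notice that the marked grammar forces $v$ to be a syntactic marked abstraction whenever $v^{\#}$ is one; once this is spelled out, the proof is a routine verification that marked substitution commutes with taking contents, i.e.\ $(a[x \la w])^{\#} = a^{\#}[x \la w^{\#}]$, which itself is an easy induction on $a$.
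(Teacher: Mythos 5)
Your proof is correct, and it is the routine structural induction the paper intends: the paper states Proposition~\ref{beta} without proof, precisely because the argument reduces to the two observations you isolate, namely that the contents function is shape-preserving (so $v^{\#}$ an abstraction forces $v=([x:P]a)^{Q}$) and that $(a[x\la w])^{\#}=a^{\#}[x\la w^{\#}]$. Your remark that the mark on the abstraction imposes no side condition on the marked $\beta$-redex is exactly the point that distinguishes this from the $\eta$ case, where the paper's counterexample shows the lifting fails.
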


Notice that this proposition is false for $\eta$-reduction, even for
well-typed terms; indeed the term $t =
([x:T^{Prop}](y^{(([z:T^{Prop}](T^{Prop} \ra T^{Prop}))^{T^{Prop} \ra
Prop}~x^{T^{Prop}})^{Prop}}~x^{T^{Prop}})^{T^{Prop}}) ^{T^{Prop} \ra
T^{Prop}}$ is not an $\eta$-redex, but $t^{\#} = [x:T](y~x)$ is one.

\begin{prop}
Let $\Gamma$ be a marked context and $t$ and $T$ be marked terms
such that $\Gamma \vdashm t:T$. Then $\Gamma^{\#} \vdash t^{\#}:T^{\#}$.
\end{prop}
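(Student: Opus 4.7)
The statement says that the contents operation $^{\#}$ is a forgetful functor from the marked type system to the unmarked one. The natural approach is a straightforward induction on the marked derivation $\Gamma \vdashm t:T$, mutual with the corresponding claim for context well-formedness ($\vdashm \Gamma$ implies $\vdash \Gamma^{\#}$), mapping each marked rule to its unmarked counterpart.

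Before starting the induction, I would prove the obvious commutation lemma between contents and substitution: $(U[x \la u])^{\#} = U^{\#}[x \la u^{\#}]$, by a routine induction on the structure of $U$. This is needed because the marked application rule produces the type $U[x \la u]$, whose contents must be matched against $U^{\#}[x \la u^{\#}]$ for the unmarked application rule.

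For the induction itself: the empty-context, sort, product, and context-extension rules are essentially identical in the two systems, so the induction hypothesis gives the result directly. For the variable rule, $\Gamma \vdashm x^T : T$ has contents $\Gamma^{\#} \vdash x : T^{\#}$, which follows from the unmarked variable rule since by induction $x : T^{\#} \in \Gamma^{\#}$. For the abstraction rule, $([x:T]t)^{(x:T)U}$ has contents $[x:T^{\#}]t^{\#}$ and its marked type $(x:T)U$ has contents $(x:T^{\#})U^{\#}$; the two premises translate, by IH, into exactly the premises of the unmarked abstraction rule. The application rule works similarly, using the substitution lemma to identify $(U[x \la u])^{\#}$ with $U^{\#}[x \la u^{\#}]$. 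The type-conversion rule translates using the fact that, by definition, $T \conv U$ on marked terms means $T^{\#} \conv U^{\#}$, so the unmarked conversion rule applies.

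The only mildly unusual cases are the three mark-conversion rules (on variables, abstractions, and applications). But these are actually the easiest: since the contents of $x^V$ and $x^W$ are equal (and likewise for the other two), the conclusion of each of these rules has the \emph{same} contents as its first premise. The induction hypothesis applied to that first premise therefore yields the desired unmarked judgement immediately, with no use of the second premise or of the conversion. There is no genuine obstacle here; the main thing to get right is just bookkeeping on the substitution lemma in the application case, and observing that the mark-conversion rules become vacuous under $^{\#}$.
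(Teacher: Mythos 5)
Your proof is correct and follows the same route as the paper, which simply says ``by induction on the length of the derivation of $\Gamma \vdashm t:T$''; your write-up fills in the details (the commutation of $^{\#}$ with substitution for the application case, the definition of marked conversion as conversion of contents for the conversion case, and the observation that the three mark-conversion rules collapse under $^{\#}$), all of which check out. No discrepancy with the paper's argument.
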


\begin{proof}
By induction on the length of the derivation of $\Gamma \vdashm t:T$.
\end{proof}

\subsection{Basic properties}

We now go to proving the basic properties of the system with
marked terms. For several of the following properties, the proofs are
very similar to what is done for the usual formulation using unmarked
terms \cite{Geuvers93,Werner}. In these cases, we do not detail the proof.

\begin{prop}[Uniqueness of Product Formation]
If $(x:T_{1})T_{2} =_{\beta\eta} (x:U_{1})U_{2}$ then 
$T_{1} =_{\beta\eta} U_{1}$ and $T_{2} =_{\beta\eta} U_{2}$.
If $s_{1}$ and $s_{2}$ are two sorts such that $s_{1} =_{\beta\eta}
s_{2}$ then $s_{1} = s_{2}$. 
\end{prop}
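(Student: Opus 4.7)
The plan is to reduce this to the corresponding property for unmarked terms, which is a standard fact about the pure type systems of the cube (it follows from the Church--Rosser property stated earlier in the excerpt). The key observation is that, by the definition of conversion on marked terms, $\conv$ on marked terms is literally defined as $\conv$ on their contents; so any statement about convertibility of marked terms translates immediately to a statement about convertibility of unmarked ones.

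Concretely, I first unfold the hypothesis: $(x:T_1)T_2 \conv (x:U_1)U_2$ on marked terms means, by definition, $((x:T_1)T_2)^{\#} \conv ((x:U_1)U_2)^{\#}$. Using the clause for products in the definition of contents, this reads $(x:T_1^{\#})T_2^{\#} \conv (x:U_1^{\#})U_2^{\#}$ as unmarked terms. Now I invoke the uniqueness of product formation in the pure cube (an immediate consequence of confluence: both products reduce to a common term, which must itself be a product, and reduction does not mix the domain with the codomain); this yields $T_1^{\#} \conv U_1^{\#}$ and $T_2^{\#} \conv U_2^{\#}$. Unfolding the definition of $\conv$ on marked terms in the reverse direction gives $T_1 \conv U_1$ and $T_2 \conv U_2$, as required.

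For the second clause, note that a sort $s$ satisfies $s^{\#} = s$, so $s_1 \conv s_2$ on marked terms is the same as $s_1 \conv s_2$ on unmarked terms. By confluence, both $s_1$ and $s_2$ reduce to a common term $t$; but sorts are already in normal form and reduce only to themselves, so $s_1 = t = s_2$.

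There is no real obstacle here: the proof is essentially a two-line reduction to a known result, and the only thing to be careful about is to cite the correct definitions (contents of a product and of a sort, and the definition of marked conversion via contents) so that the translation between the marked and unmarked settings is explicit.
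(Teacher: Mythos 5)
Your proof is correct and follows essentially the same route as the paper's: both unfold the definition of $\conv$ on marked terms as $\conv$ on contents, invoke uniqueness of product formation for unmarked terms (a consequence of confluence), and translate back. You additionally spell out the sort case and the confluence argument, which the paper leaves implicit, but the approach is identical.
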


\begin{proof}
By definition $(x:T_{1})T_{2} =_{\beta\eta} (x:U_{1})U_{2}$ if and
only if $(x:T_{1}^{\#})T_{2}^{\#} =_{\beta\eta}
(x:U_{1}^{\#})U_{2}^{\#}$ i.e.  $T_{1}^{\#} =_{\beta\eta} U_{1}^{\#}$
and $T_{2}^{\#} =_{\beta\eta} U_{2}^{\#}$, i.e.  $T_{1} =_{\beta\eta}
U_{1}$ and $T_{2} =_{\beta\eta} U_{2}$.
\end{proof}

\begin{prop}[Substitution]
If $\Gamma [x:U] \Gamma' \vdashm t:T$ and
$\Gamma \vdashm u:U$ then\\
$\Gamma \Gamma' [x \la u] \vdashm t[x \la u]:T[x \la u]$.
\end{prop}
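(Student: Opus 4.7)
The plan is to proceed by induction on the length of the derivation of $\Gamma[x:U]\Gamma' \vdashm t:T$, with a case analysis on the last typing rule applied. Most cases---products, abstractions, applications, and the various conversion rules on types and on marks---follow the standard pattern: apply the induction hypothesis to each premise, use the fact that $\beta\eta$-conversion is stable under substitution (so that $V \conv W$ implies $V[x\la u] \conv W[x\la u]$) to preserve the side conditions of the conversion rules, and reassemble the derivation. Since textual substitution commutes with the formation of products, abstractions and applications and distributes through the marks in the obvious way, these cases are essentially mechanical. A weakening lemma (obtained by a routine parallel induction) is used silently to transport $\Gamma \vdashm u:U$ into the extended context $\Gamma\Gamma'[x\la u]$.

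The one genuinely interesting case is when the last rule types a variable $t = y^V$. If $y \neq x$, then $y^V[x\la u] = y^{V[x\la u]}$ and we conclude by applying the induction hypothesis to the well-formedness premise (and, if the last rule was one of the mark-conversion rules on variables, to the extra premise $\Gamma[x:U]\Gamma' \vdashm W:s$). The delicate sub-case is $y=x$, where $y^V[x\la u] = u$ and we must exhibit $\Gamma\Gamma'[x\la u] \vdashm u : V[x\la u]$. Well-formedness of $\Gamma[x:U]$ gives $\Gamma \vdashm U:s$ with $x \notin \Gamma$, so $x$ is not free in $U$ and $U[x\la u] = U$. Any chain of mark-conversion steps leading to $x^V$ forces $V \conv U$, and hence $V[x\la u] \conv U$; combining $\Gamma \vdashm u:U$ (weakened into $\Gamma\Gamma'[x\la u]$) with the type-conversion rule then yields the desired judgement.

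The main obstacle is keeping the bookkeeping tight in the presence of the three mark-conversion rules. Because a marked variable $x^V$ need only carry a mark $\beta\eta$-convertible to $U$ rather than syntactically equal to it, the case $y = x$ of the variable rule cannot be read off mechanically from the conclusion; it requires a short inversion to produce the conversion $V \conv U$ that justifies applying the type-conversion rule after substitution. The analogous point for the mark-conversion rules on abstractions and applications is handled identically: after substitution, the outer mark $W[x\la u]$ remains convertible to whatever type the substituted subterm naturally receives, and the premise $\Gamma[x:U]\Gamma' \vdashm W:s$ yields by the induction hypothesis the typing $\Gamma\Gamma'[x\la u] \vdashm W[x\la u]:s$ needed to reapply the rule.
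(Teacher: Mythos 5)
Your proposal is correct and follows exactly the route the paper takes (the paper's proof is simply ``by induction on the length of the derivation of $\Gamma[x:U]\Gamma'\vdashm t:T$'', with all details left to the reader). Your elaboration of the variable case $y=x$ and of the interaction with the mark-conversion rules fills in precisely the bookkeeping the paper omits, and does so correctly.
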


\begin{proof}
By induction on the length of the derivation of $\Gamma [x:U] \Gamma' 
\vdashm t:T$.
\end{proof}

\begin{prop}[Weakening]
If $\Gamma \Gamma' \vdashm t:T$, $\Gamma \vdashm U:s$ 
and $x$ is a variable not declared in $\Gamma \Gamma'$
then $\Gamma [x:U] \Gamma' \vdashm t:T$. 
\end{prop}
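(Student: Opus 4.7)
The plan is to prove the statement by a straightforward induction on the length of the derivation of $\Gamma\Gamma' \vdashm t:T$, simultaneously with the analogous weakening statement for well-formedness of contexts: if $\vdashm \Gamma\Gamma'$, $\Gamma\vdashm U:s$ and $x\notin\Gamma\Gamma'$ then $\vdashm \Gamma[x:U]\Gamma'$. This pairing is needed because the typing rules invoke both judgements.

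First I would dispatch the base cases. The empty-context rule only occurs when $\Gamma\Gamma' = [~]$, in which case $\Gamma' = [~]$ and the conclusion $\vdashm [x:U]$ follows directly from the hypothesis $\Gamma\vdashm U:s$ together with the freshness of $x$. The $Prop:Type$ axiom and the marked-variable rule $\Gamma\vdashm x^T:T$ reduce to an appeal to well-formedness of the extended context (handled by the context induction) and the observation that memberships $y:T\in\Gamma\Gamma'$ are preserved in $\Gamma[x:U]\Gamma'$, since $x$ is fresh and therefore distinct from any declared $y$.

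For the inductive cases, each typing rule is treated by applying the induction hypothesis to every premise and then re-applying the same rule. In the product and abstraction rules, which extend the context with a new declaration $[z:T]$, the induction hypothesis is applied with $\Gamma'$ replaced by $\Gamma'[z:T]$; since the side condition requires $z$ to be fresh in the current context, we may assume (by $\alpha$-renaming if necessary) that $z\neq x$ and $z\notin\Gamma$, so the splitting $\Gamma[x:U]\Gamma'[z:T]$ remains well-formed. The application rule, the type-conversion rule, and the three mark-conversion rules are all handled uniformly: apply the induction hypothesis to each premise and rebuild the derivation. The context-extension rule is where we invoke the well-formedness induction to insert $x:U$ into $\Gamma\Gamma'$ before extending it further.

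I do not expect any serious obstacle here; the only mild subtlety is bookkeeping the freshness condition across binder traversals, which is handled by the usual $\alpha$-renaming convention adopted implicitly in the paper (de Bruijn indices would make this entirely mechanical). The three mark-conversion rules might look like extra cases, but they do not touch the context and so pose no difficulty beyond the identical treatment given to the ordinary type-conversion rule.
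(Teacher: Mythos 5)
Your proposal is correct and follows essentially the same route as the paper, which simply states that the result is proved by induction on the length of the derivation of $\Gamma\Gamma'\vdashm t:T$; your additional bookkeeping (the simultaneous statement for context well-formedness, the freshness/$\alpha$-renaming remarks, and the uniform treatment of the conversion and mark-conversion rules) just makes explicit the details the paper leaves implicit.
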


\begin{proof}
By induction on the length of the derivation of $\Gamma \Gamma' \vdashm t:T$.
\end{proof}

\begin{prop}
If $\Gamma \vdashm t:T$ then either $T = Type$ or there exists a sort
$s$ such that $\Gamma \vdashm T:s$.
Moreover if $t$ is a variable, an application or an abstraction (but neither
a sort nor a product) then $T \neq Type$.
\end{prop}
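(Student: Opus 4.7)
The plan is to proceed by induction on the derivation of $\Gamma \vdashm t:T$, inspecting each typing rule in turn. A constant companion throughout will be the observation (already recorded in the paper) that the term $Type$ is not well-typed; this lets me turn any bound of the form ``$T$ is typed by a sort'' into $T \neq Type$, which is exactly what the moreover clause asks for in the variable, application and abstraction cases.

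First I would dispatch the routine cases. The sort rule gives $T = Type$ immediately. For the product rule, $T$ is a sort $s'$; either $s' = Type$, or $s' = Prop$ and then $\Gamma \vdashm Prop : Type$. For the abstraction rule, the first premise is exactly $\Gamma \vdashm (x:T')U:s$, so the type of the conclusion is already typed by a sort and, being a product, is different from $Type$. For the variable rule with conclusion $\Gamma \vdashm x^T:T$ and $x:T \in \Gamma$, the context-formation rule that introduced $x:T$ furnishes $\Gamma'\vdashm T:s$ in some prefix of $\Gamma$ with $s\in\{Prop,Type\}$; weakening then lifts this to $\Gamma\vdashm T:s$, so $T$ is well-typed and hence different from $Type$.

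The hard part will be the application rule, with conclusion $\Gamma\vdashm (t~u)^{U[x\la u]}:U[x\la u]$ and premise $\Gamma\vdashm t:(x:T')U$. The induction hypothesis applied to this premise gives a sort $s$ with $\Gamma\vdashm (x:T')U:s$, since the product is not $Type$. What I still need is the sort of $U$ itself in the extended context, and this forces a short inversion argument: the only rules able to give a product a type are the product rule and the type-conversion rule, and by uniqueness of product formation (just proved) conversions cannot replace a product with a non-product nor shuffle between different sorts. A side induction on derivation length therefore extracts a premise $\Gamma[x:T']\vdashm U:s'$, and then the substitution lemma (also just established) yields $\Gamma\vdashm U[x\la u]:s'$, which is not $Type$.

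Finally, the four conversion rules (one on types, three on marks) should go through mechanically. Their second premise literally provides the sort witnessing the first clause, and for the three mark-conversion rules the overall type $T$ of the conclusion is unchanged, so the induction hypothesis applied to the first premise already supplies $T \neq Type$ whenever $t$ is a variable, application, or abstraction.
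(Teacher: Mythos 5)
Your proof is correct and follows essentially the same route as the paper: induction on the typing derivation, with the variable case handled by establishing that every context entry is typed by a sort (the paper carries this as an explicit induction-loading clause, whereas you recover it by inverting the $\vdash\Gamma$ premise and weakening) and the application case by inverting the product's typing and applying substitution.
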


\begin{proof}
By induction on the length of the derivation of
$\Gamma \vdashm t:T$. For induction loading, we simultaneously
prove that if $\Gamma = \Delta[z:V]\Delta'$, then $\Delta\vdashm V:s$.
\end{proof}

\begin{prop}[Stripping]~

\begin{itemize}
\item
If $\Gamma \vdashm Prop:T$ then $T = Type$,
\item
if $\Gamma \vdashm x^{T}:U$ then there exists a sort $s$ and a declaration 
$x:V$ 
in $\Gamma$ such that $\Gamma \vdashm T:s$ and 
$T =_{\beta\eta} U =_{\beta\eta} V$,
\item
if $\Gamma \vdashm (x:T_{1})T_{2}:U$ then there exists two sorts $s_{1}, 
s_{2}$ such that $U = s_{2}$,
$\Gamma \vdashm T_{1}:s_{1}$, $\Gamma [x:T_{1}] \vdashm T_{2}:s_{2}$ and 
$<s_{1}, s_{2}> \in \Ru$,
\item
if $\Gamma \vdashm ([x:T]t)^{U}:V$ then there exists a term $W$ and
two sorts $s_{1}$ and $s_{2}$ such that 
$U =_{\beta\eta} V =_{\beta\eta} (x:T)W$,
$\Gamma \vdashm U:s_{2}$,
 $\Gamma \vdashm T:s_{1}$,
$\Gamma [x:T] \vdashm t:W$, $\Gamma [x:T] \vdashm W:s_{2}$ and 
$<s_{1}, s_{2}> \in \Ru$,
\item
if $\Gamma \vdashm (t~u)^{T}:U$ then there exists two terms $V$ and $W$
and a sort $s$
such that
$T =_{\beta\eta} U =_{\beta\eta} W[x \la u]$, $\Gamma \vdashm t:(x:V)W$,
$\Gamma \vdashm u:V$ and $\Gamma \vdashm T:s$. 
\end{itemize}
\end{prop}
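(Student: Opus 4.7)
The plan is to proceed by induction on the length of the derivation of $\Gamma \vdashm t:T$, with a case analysis on the last rule applied. For each of the five clauses of the lemma, I would enumerate which rules can have produced a judgement with the required subject shape: the introduction rule for that head constructor, plus the global type-conversion rule, plus (for variables, abstractions and applications) the corresponding mark-conversion rule. The $Prop$ and product clauses are the easiest, because their subjects carry no mark and so no mark-conversion rule applies.

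In each introduction-rule subcase the required witnesses are read off directly from the premises, using Proposition~10 (every well-typed non-$Type$ term admits a sort as a type) whenever a sort $s$ is needed to type the mark. In each conversion subcase the induction hypothesis applies to a strictly shorter derivation of the same subject; combining its output with the convertibility premise of the conversion rule and the transitivity of $\conv$ propagates the required chain of convertibilities, for instance $T \conv U \conv V$ for variables or $U \conv V \conv (x:T)W$ for abstractions. The Uniqueness of Product Formation just established is the tool that keeps the product and sort components identified through arbitrary chains of conversion steps; in particular it forces the strict equalities $T = Type$ and $U = s_2$ in the $Prop$ and product clauses, since sorts are rigid up to $\conv$.

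The most delicate point, and the main obstacle I anticipate, is the bookkeeping imposed by the mark-conversion rules in the abstraction and application clauses. A mark-conversion step shortens the derivation but only rewrites the mark, so the induction hypothesis returns a product shape and sort witnesses relative to the \emph{old} mark; one must transport those witnesses to the new mark using the typing premise $\Gamma \vdashm W:s$ that the mark-conversion rule itself supplies, and appeal again to Uniqueness of Product Formation to match sort components. The argument should be essentially routine once the induction is structured so that every sort assertion in the conclusion is backed by an actually derived typing judgement rather than merely asserted up to convertibility.
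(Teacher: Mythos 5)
Your proposal follows exactly the paper's (one-line) proof: induction on the length of the derivation with a case split on the last rule, and your handling of the introduction, type-conversion and mark-conversion cases is the intended elaboration. The only imprecision is attributing the strict equalities $T = Type$ and $U = s_{2}$ to Uniqueness of Product Formation alone — that proposition identifies two convertible \emph{sorts}, whereas in the conversion case one must additionally argue that the well-typed term $U$ supplied by the rule's premise, being convertible to a sort, is itself that sort (for $Type$ because $Type$ is not well-typed, for $Prop$ by a short Church--Rosser/subject-reduction argument) — a detail the paper itself elides.
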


\begin{proof} 
By induction over the length of the derivation.
\end{proof}

\begin{prop}[Type Uniqueness]
A well-typed term has a unique type up to conversion.
\end{prop}

\begin{definition}[Atomic Term]
A marked term is said to be {\it atomic} if it has the form 
$$(~...~(h~c_{1})^{T_{1}}~...~c_{n})^{T_{n}}$$
where $h$ is a marked variable $x^{T_{0}}$ or a sort $s$. The symbol
$h$ is called the {\it head} of this term.
\end{definition}

\begin{prop}
Let $t$ be a normal well-typed marked term; $t$ is either an
abstraction, a product or an atomic term.
\end{prop}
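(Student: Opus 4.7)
The plan is to do a straightforward case analysis on the top-level form of $t$, with induction over the structure of $t$ to handle the application case. The form of a marked term is one of: a sort, a marked variable $x^T$, an application $(u~v)^T$, an abstraction $([x:P]u)^T$, or a product $(x:P)U$. Abstractions and products are handled by hypothesis. A sort must be $Prop$ (since $Type$ is not well-typed), which is atomic with $n=0$. A marked variable $x^T$ is atomic with $n=0$ and head $x^T$. So the only interesting case is $t = (u~v)^T$.

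For that case, I note that since $t$ is normal, $u$ is normal as well, and by the stripping lemma for marked applications, $u$ is well-typed in $\Gamma$ with a type convertible to some product $(x:V)W$. Applying the induction hypothesis to $u$, it is an abstraction, a product, or atomic. I rule out the first two possibilities: if $u$ were an abstraction $([x:A]b)^B$, then $t$ would be a $\beta$-redex, contradicting normality. If $u$ were a product $(x:A)B$, its type would be a sort $s$ by stripping, but we just observed its type must be convertible to a product; since for marked terms convertibility is defined on contents and sorts and products have distinct normal contents, Church-Rosser forbids a sort to be $\beta\eta$-convertible to a product, contradiction. Hence $u$ is atomic, and writing $u = (\ldots(h~c_1)^{T_1}\ldots c_n)^{T_n}$, we conclude that $t = (u~v)^T = (\ldots(h~c_1)^{T_1}\ldots c_n~v)^T$ is atomic with the same head $h$.

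The main subtlety is the exclusion of $u$ being a product in the application case: this rests on the interaction between the stripping lemma (forcing $u$'s type to be a product), stripping for products (forcing a product's type to be a sort), type uniqueness, and the fact that convertibility on marked terms is defined via contents, so Church-Rosser in the unmarked calculus applies. Everything else is immediate from the grammar of marked terms and the definition of normality.
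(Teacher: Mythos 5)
Your proof is correct. The paper states this proposition without proof (it is one of those whose proof is declared ``very similar to what is done for the usual formulation using unmarked terms''), and your argument --- structural induction, with the application case settled by excluding an abstraction head (which would create a $\beta$-redex) and a product head (whose type is a sort by stripping, which by type uniqueness and Church--Rosser on contents cannot be convertible to the product type that the stripping lemma assigns to the function part) --- is exactly that standard argument.
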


\begin{prop}
Let $T$ be a well-typed normal marked term of type $s$ for some sort $s$;
$T$ can be written in a unique way $T = (x_{1}:P_{1}) ... (x_{n}:P_{n})P$ 
with $P$ atomic. Moreover if $s = Type$ then $P = Prop$. 
\end{prop}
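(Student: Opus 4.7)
The plan is to reduce this statement to the analogous proposition already proved for unmarked terms (the one asserting the same decomposition for well-typed $\beta$-normal unmarked terms of sort type) by passing through the contents operation. Two preliminary observations are needed: first, if $T$ is a normal marked term then its contents $T^{\#}$ is a $\beta$-normal unmarked term; second, the contents operation commutes with the syntactic constructors, in the sense that a marked term has a product (resp.\ abstraction, application, variable, sort) as its contents if and only if it is itself a product (resp.\ abstraction, etc.).

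For the first observation I would argue by contraposition using Proposition~\ref{beta}: any $\beta$-reduction step on $T^{\#}$ would lift to a $\beta$-reduction on $T$, contradicting the normality of $T$. Combined with the proposition transporting marked typing to unmarked typing, this gives $\Gamma^{\#} \vdash T^{\#}:s$ with $T^{\#}$ $\beta$-normal. Applying the unmarked version of the proposition then yields a unique decomposition $T^{\#} = (x_{1}:Q_{1})\dots(x_{n}:Q_{n})Q$ with $Q$ atomic, and with $Q = Prop$ whenever $s = Type$.

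The next step is to lift this decomposition back to the marked term $T$. Since a marked term has a product as its contents exactly when it is itself a marked product, iterating this observation forces $T = (x_{1}:P_{1})\dots(x_{n}:P_{n})P$ with $P_{i}^{\#} = Q_{i}$ and $P^{\#} = Q$, where $P$ is not itself a product. The marked term $P$ is normal and well-typed in the appropriate extended context $\Gamma[x_{1}:P_{1}]\dots[x_{n}:P_{n}]$, so by the preceding proposition it is either an abstraction, a product, or atomic. The product case is excluded by construction, and the abstraction case is excluded because the contents of an abstraction is an abstraction, not an atomic term; hence $P$ is atomic. Uniqueness of the decomposition follows from the fact that an atomic marked term is syntactically not a product, so $n$ and each $P_{i}$ are determined by $T$. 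For the \emph{moreover} clause, if $s = Type$ then $P^{\#} = Q = Prop$, and the only marked term whose contents equals $Prop$ is $Prop$ itself, so $P = Prop$.

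I do not anticipate any real difficulty here; the argument is essentially a mechanical transfer of the unmarked proposition through the contents operation. The only delicate point is the preservation of $\beta$-normality when passing from $T$ to $T^{\#}$, which is precisely what Proposition~\ref{beta} was set up to provide.
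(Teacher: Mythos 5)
Your proof is correct. The paper itself gives no argument for this proposition: it falls under the blanket remark at the start of the ``Basic properties'' subsection that such proofs are ``very similar to what is done for the usual formulation using unmarked terms,'' i.e.\ the intended route is to redo the unmarked induction (via the marked stripping lemma) directly on marked terms. You instead \emph{transfer} the unmarked result through the contents operation: Proposition~\ref{beta} gives $\beta$-normality of $T^{\#}$, the soundness of $\#$ on typing gives $\Gamma^{\#}\vdash T^{\#}:s$, the unmarked proposition yields the decomposition of $T^{\#}$, and the fact that $\#$ preserves the outermost constructor lifts it back to $T$ (your exclusion of the abstraction case for $P$, and the identification $P=Prop$ when $s=Type$, both follow from this same constructor-reflection, and $P$ is indeed well-typed and normal in the extended context by stripping, so the ``abstraction, product or atomic'' trichotomy applies). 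This is a genuinely different and arguably more economical organization: it avoids repeating the unmarked argument, at the cost of invoking the redex-lifting Proposition~\ref{beta} and the typing-soundness of $\#$, neither of which the paper's implied direct proof would need. Both are sound; your only slightly delicate step, preservation of $\beta$-normality under $\#$, is exactly covered by Proposition~\ref{beta} as you note (and only $\beta$-normality of $T^{\#}$ is needed, so you correctly avoid the harder Proposition~\ref{tyty}, which in any case is proved only later in the paper).
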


\begin{prop} \label{tutu}
Let $\Gamma$ be a marked context and $t$ and $T$ two marked terms such
that $\Gamma \vdashm t:T$. If $t$ is a variable, an application or an 
abstraction, then $T$ is convertible to the outermost mark of $t$. For 
instance if $t = (u~v)^{U}$ then $T$ is convertible to $U$.
\end{prop}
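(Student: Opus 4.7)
The plan is to derive the proposition directly from the stripping lemma for marked terms stated just above. In each of the three cases the stripping lemma already contains, as part of its conclusion, exactly the convertibility we need between the ambient type and the outermost mark.

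Concretely, I would proceed by cases on the syntactic shape of $t$. If $t = x^{V}$, apply the stripping clause for a marked variable: from $\Gamma \vdashm x^{V} : T$ it yields a declaration $x : W$ in $\Gamma$ with $V =_{\beta\eta} T =_{\beta\eta} W$, so the outermost mark $V$ is convertible to $T$. If $t = (u~v)^{U}$, apply the stripping clause for a marked application: from $\Gamma \vdashm (u~v)^{U} : T$ one obtains terms $V, W$ and a sort $s$ with $U =_{\beta\eta} T =_{\beta\eta} W[x \la v]$, giving $U =_{\beta\eta} T$. If $t = ([x:P]u)^{U}$, apply the stripping clause for a marked abstraction: from $\Gamma \vdashm ([x:P]u)^{U} : T$ one obtains a term $W$ such that $U =_{\beta\eta} T =_{\beta\eta} (x:P)W$, again yielding convertibility of the outermost mark $U$ with $T$.

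Since every case is covered and in each case the conversion follows immediately from reading off the corresponding conclusion of the stripping lemma, the proposition is proved. There is no real obstacle here: the work was already done in the stripping lemma, whose proof (by induction on the derivation length) had to account for the three extra conversion-in-marks rules, which is precisely what guarantees that the outermost mark is only ever changed up to $\beta\eta$-convertibility. The present proposition just packages that fact for subsequent use.
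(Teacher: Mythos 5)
Your proof is correct and is exactly the intended argument: the paper states this proposition without proof, leaving it as an immediate consequence of the marked-term stripping lemma, and your case analysis reads off precisely the convertibility clauses ($T =_{\beta\eta} U$ in each of the variable, application and abstraction cases) that the stripping lemma provides. Nothing is missing.
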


\begin{prop}[Convertibility of contexts]
If $\Gamma [x:T] \Gamma'$ and $\Gamma [x:U] \Gamma'$ are well-formed
contexts and $\Gamma \vdashm T:s$, $\Gamma \vdashm U:s$, $T =_{\beta\eta} U$ 
and $\Gamma [x:T] \Gamma' \vdashm t:V$ hold then 
$\Gamma [x:U] \Gamma' \vdashm t:V$
\end{prop}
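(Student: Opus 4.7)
The plan is to proceed by induction on the length of the derivation of $\Gamma[x:T]\Gamma' \vdashm t:V$, analysing the last rule used. Since $\Gamma[x:U]\Gamma'$ is assumed well-formed, the rules that do not inspect the declaration of $x$ transfer straightforwardly: the $Prop$ rule and the variable rule applied to some $y \neq x$ conclude immediately, and for the product, abstraction and application rules one simply applies the IH to each premise.

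The only mild nuisance in these structural cases is that a premise may be derived in an extended context of the form $\Gamma[x:T]\Gamma'[y:A]$, so before invoking the IH on such a premise one must know that $\Gamma[x:U]\Gamma'[y:A]$ is also well-formed. This is obtained by first applying the IH to the subderivation $\Gamma[x:T]\Gamma' \vdashm A:s'$ supplied by the previous stripping-style premises, and then invoking the context extension rule. For the usual conversion rule and the three new mark-conversion rules, the IH yields the main premise in the new context; the side conditions $V =_{\beta\eta} W$ are preserved because convertibility on marked terms is defined via the contents, independently of the context.

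The key case is the variable rule applied to $x$ itself, where $t = x^T$ and $V = T$, justified by $x:T \in \Gamma[x:T]\Gamma'$. In the switched context we first derive $\Gamma[x:U]\Gamma' \vdashm x^U : U$ from $x:U \in \Gamma[x:U]\Gamma'$. Weakening $\Gamma \vdashm T:s$ along $[x:U]\Gamma'$ (using well-formedness of $\Gamma[x:U]\Gamma'$) gives $\Gamma[x:U]\Gamma' \vdashm T:s$, so the mark-conversion rule for variables, together with $T =_{\beta\eta} U$, produces $\Gamma[x:U]\Gamma' \vdashm x^T : U$. A final application of the ordinary conversion rule, again using $T =_{\beta\eta} U$, gives $\Gamma[x:U]\Gamma' \vdashm x^T : T$, as required.

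The main obstacle is precisely this variable case: in the unmarked calculus it is trivial, but in the marked calculus the outermost mark $T$ on $x^T$ no longer matches the context declaration $x:U$. This is the whole reason the marked system was equipped with the three extra mark-conversion rules, and once they are available the argument goes through as sketched.
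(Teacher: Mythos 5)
Your proposal is correct and follows the same route as the paper, which proves this proposition simply ``by induction over the length of the derivation of $\Gamma [x:T] \Gamma' \vdashm t:V$''. Your elaboration of the key case --- the variable rule at $x$ itself, handled by weakening $\Gamma \vdashm T:s$ into $\Gamma[x:U]\Gamma'$ and then applying the mark-conversion rule for variables followed by ordinary conversion --- is exactly the point the marked system's extra rules are designed for, and the remaining cases go through as you describe.
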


\begin{proof}
By induction over the length of the derivation of 
$\Gamma [x:T] \Gamma' \vdashm t:V$.
\end{proof}

\subsection{Subject reduction}

\begin{prop}[Subject $\beta$-reduction]
If $\Gamma \vdashm t:T$ and $t \rhd_{\beta} t'$ then $\Gamma \vdashm
t':T$.
\end{prop}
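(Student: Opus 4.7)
The plan is to induct on the structure of $t$ and, within each case, to analyze the position of the contracted $\beta$-redex: either the redex lies strictly inside one of the immediate subterms of $t$ (in its skeleton or inside one of its marks), or, in the application case, the redex is the outermost application itself.

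In every congruence case I apply the induction hypothesis to the offending subterm to retype its reduct, then reassemble $t'$ using the corresponding formation rule. The slightly new feature of the marked framework arises when the redex sits inside an outer mark $V$ attached to a marked variable, application, or abstraction. The Stripping Lemma then provides a sort $s$ with $\Gamma \vdashm V:s$; the induction hypothesis gives $\Gamma \vdashm V':s$ for the reduct $V'$; and since $\rhd_\beta$ is contained in $=_{\beta\eta}$ we have $V =_{\beta\eta} V'$. One of the three mark-conversion rules then rewrites the mark to $V'$ inside $t'$ without altering the type. Products, being unmarked, are handled exactly as in ordinary pure type systems.

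The only really substantive case is that of the top-level redex: $t = (([x:A]b)^V~u)^W$ reducing to $t' = b[x \la u]$. Two applications of Stripping --- first to the outer application, then to the abstraction $([x:A]b)^V$ --- yield sorts $s_1,s_2$, terms $V_1,W_1,W_2$, such that $\Gamma \vdashm A:s_1$, $\Gamma[x:A]\vdashm b:W_2$, $\Gamma[x:A]\vdashm W_2:s_2$, $\Gamma\vdashm u:V_1$, $\Gamma\vdashm ([x:A]b)^V:(x:V_1)W_1$, and $W_2[x \la u] =_{\beta\eta} W_1[x \la u] =_{\beta\eta} W =_{\beta\eta} T$. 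Uniqueness of Product Formation applied to $(x:V_1)W_1 =_{\beta\eta} (x:A)W_2$ identifies $V_1$ with $A$ and $W_1$ with $W_2$ up to $=_{\beta\eta}$. The conversion rule then retypes $u$ at $A$, the Substitution proposition produces $\Gamma \vdashm b[x \la u] : W_2[x \la u]$, and a final application of the conversion rule delivers the required judgement $\Gamma \vdashm b[x \la u] : T$.

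The main technical obstacle is precisely this top-level case, where two layers of Stripping have to be reconciled via Uniqueness of Product Formation, and where one must carefully juggle the outer type $T$ of the application, its local mark $W$, the mark $V$ on the abstraction, and the product type extracted by Stripping of the abstraction. The three mark-conversion rules play a role perfectly symmetric to the ordinary conversion rule, so the whole argument uses only the basic propositions already established in the previous subsection.
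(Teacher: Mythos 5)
Your proof follows essentially the same route as the paper's (which is only a two-line sketch): structural induction on $t$, with the top-level redex discharged by Stripping, Uniqueness of Product Formation and the Substitution proposition, and the remaining positions of the redex handled by reassembly via the formation and mark-conversion rules. The one detail the paper makes explicit that you leave implicit is the sub-case where the redex lies in the domain $A$ of a binder, so that the body must be retyped in the reduced context $\Gamma[x:A']$ --- the paper loads the induction with the statement that $\Gamma \rhd_{\beta} \Gamma'$ preserves typing, while in your setup the Convertibility of contexts proposition does the same job.
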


\begin{proof}
By induction over the structure of $t$. The key case is taken care by the
substitution lemma. For induction loading, we also prove that if 
$\Gamma \rhd_{\beta} \Gamma'$ then $\Gamma' \vdashm t:T$.

\end{proof}

We now want to prove the same property for $\eta$-reduction.

\begin{prop}[Geuvers]
If $\Gamma \vdashm t:u$ and $t =_{\beta\eta} (x:T)U$ then
$t \rhd_{\beta}^{*} (x:V)W$. If $\Gamma \vdashm t:u$, given a sort $s$ such
that $t =_{\beta\eta} s$, one has $t \rhd_{\beta}^{*} s$.
\end{prop}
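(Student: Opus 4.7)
The plan is to reduce the statement to the unmarked cube via the contents map and lift the resulting $\beta$-reduction back to marked terms using Proposition~\ref{beta}. By definition of conversion on marked terms, $t \conv (x:T)U$ means $t^{\#} \conv (x:T^{\#})U^{\#}$, and the earlier preservation result gives $\Gamma^{\#} \vdash t^{\#}:u^{\#}$, so $t^{\#}$ is well-typed in the unmarked system. The same reduction is available for the sort case.

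At the unmarked level, confluence produces a common reduct $r$ of $t^{\#}$ and $(x:T^{\#})U^{\#}$; reductions inside a product preserve its outermost $\Pi$ (the root of a product is not an abstraction so cannot be $\eta$-contracted at the root, and no root $\beta$-redex ever appears), hence $r = (x:A)B$ is a product. Postponement of $\eta$ over $\beta$ (a classical property of the rewrite system) then factors $t^{\#} \rhd^{*} (x:A)B$ as $t^{\#} \rhd^{*}_{\beta} t^{\flat} \rhd^{*}_{\eta} (x:A)B$, with $t^{\flat}$ well-typed by subject $\beta$-reduction. A case analysis on the $\beta$-normal form of $t^{\flat}$, together with the observation that $\eta$-reduction preserves the shape of sorts, atomic terms, and products, reduces the argument to showing that a well-typed $\beta$-normal abstraction $[y:P]u$ cannot $\eta$-reduce to a product. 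This is the key Geuvers observation, ruling out the abstraction case: any iterated top $\eta$-contraction $[y:P](v~y) \rhd_{\eta} v$ that would ultimately yield a product for $v$ must force the subterm $(v~y)$ to apply a product to a term, which is untypeable in the cube since the type of a product is a sort and sorts have no product type.

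Iterated application of Proposition~\ref{beta} lifts $t^{\#} \rhd^{*}_{\beta} t^{\flat}$ to a marked reduction $t \rhd^{*}_{\beta} t'$ with $(t')^{\#} = t^{\flat}$; by inspection of the marked-term grammar, the only form of marked term with product contents is itself a marked product, yielding $t' = (x:V)W$ as required. The sort case is entirely parallel: confluence plus $\eta$-postponement gives $t^{\#} \rhd^{*}_{\beta} t^{\flat} \rhd^{*}_{\eta} s$, the same untypeability argument applied to $[y:P](s~y)$ (in which the sort $s$ is being applied to $y$) forces $t^{\flat} = s$, and $s$ is its own contents-preimage so the marked reduction ends at the sort $s$ itself. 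The main obstacle is the abstraction case of the $\beta$-normal form analysis, which rests essentially on the fact that the cube types neither products-applied-to-terms nor sorts-applied-to-terms; the rest of the argument is routine manipulation of contents, confluence and $\eta$-postponement.
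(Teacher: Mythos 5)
Your overall strategy coincides with the paper's: push the hypothesis down to unmarked terms via the contents map ($t^{\#}\conv (x:T^{\#})U^{\#}$, with $\Gamma^{\#}\vdash t^{\#}:u^{\#}$ by the preservation result), obtain an unmarked $\beta$-reduction of $t^{\#}$ to a product, and lift it back with Proposition~\ref{beta}, noting that the only marked terms whose contents is a product (resp.\ a sort) are themselves products (resp.\ that sort). The paper's proof is exactly this three-line skeleton; the step ``$t^{\#}\rhd^{*}_{\beta}(x:T'')U''$'' is simply asserted there, because it is the unmarked Geuvers lemma, part of the metatheory the paper imports from Geuvers' and Werner's theses --- hence the proposition's name. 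Where you differ is that you attempt to re-derive that unmarked step from the listed propositions, which is a legitimate and more self-contained route.

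The one step of that re-derivation that does not hold as you state it is ``postponement of $\eta$ over $\beta$ (a classical property of the rewrite system)''. For domain-annotated terms this is precisely what fails: in $[x:A](([y:B]g)\,x)$ the root $\eta$-step yields $[y:B]g$ while the inner $\beta$-step yields $[x:A]g[y\la x]$, and no later $\eta$-reduction reconciles the annotations $A$ and $B$; this critical pair is the reason $\beta\eta$-Church--Rosser in the cube is a theorem of Geuvers rather than an exercise, so it cannot be waved through as classical. Fortunately your own subsequent case analysis makes postponement unnecessary: take the $\beta$-normal form $t^{nf}$ of $t^{\#}$ (normalization plus subject reduction), use the Confluence and Church--Rosser propositions --- which the paper assumes for well-typed unmarked terms --- to get a common reduct of $t^{nf}$ and the product, observe that from a $\beta$-normal term every residual reduction is pure $\eta$ (as in the proof of Proposition~\ref{tec1}), and then your untypeability argument (a product applied to a term has no type, since the type of a product is a sort and a sort is not convertible to a product) rules out the abstraction case, so $t^{nf}$ is itself a product and $t^{\#}\rhd^{*}_{\beta}t^{nf}$ is the required $\beta$-reduction. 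With that substitution your argument is sound; as written, the postponement step is a gap.
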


\begin{proof}
If $t =_{\beta\eta} (x:T)U$ then
$t^{\#} =_{\beta\eta} (x:T^{\#})U^{\#}$. Thus there are 
terms $T'$ and $U'$ such that $t^{\#} \rhd^{*} (x:T')U'$, and 
terms $T''$ and $U''$ such that $t^{\#} \rhd_{\beta}^{*} (x:T'')U''$.
Thus, there are terms $V$ and $W$ such that $t \rhd^{*} (x:V)W$. 

If $t =_{\beta\eta} s$ then
$t^{\#} =_{\beta\eta} s$. Thus $t^{\#} \rhd^{*} s$, 
$t^{\#} \rhd_{\beta}^{*} s$, and $t \rhd_{\beta}^{*} s$. 
\end{proof}

\begin{prop}[Strengthening]
~
\begin{itemize}
\item
If $\Gamma [x:U] \Gamma' \vdashm t:s$ and $x$ occurs free neither in
$\Gamma'$ nor in $t$ then 
$\Gamma \Gamma' \vdashm t:s$.

\item
If $\Gamma [x:U] \Gamma' \vdashm t:T$ and $x$ occurs free neither in
$\Gamma'$ nor in $t$ then there exist a term $T'$ such that 
$\Gamma \Gamma' \vdashm t:T'$.
\end{itemize}
\end{prop}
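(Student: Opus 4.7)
The plan is a simultaneous induction on the length of the derivation of $\Gamma[x:U]\Gamma'\vdashm t:T$, establishing both clauses together with an auxiliary strengthening statement for well-formed contexts: if $\vdashm\Gamma[x:U]\Gamma'$ and $x$ is not free in $\Gamma'$, then $\vdashm\Gamma\Gamma'$. This auxiliary fact is needed for the Prop axiom and the variable rule, whose only premise is a well-formedness judgement; conversely, the well-formedness rule that adds a declaration $y:V$ invokes the first clause of the proposition applied to the premise $\Gamma[x:U]\Gamma'\vdashm V:s$.

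For the product, abstraction, application, and Prop axiom cases, every term appearing in a premise is a subterm of the conclusion in the marked syntax (the marks on abstractions and applications being themselves subterms of the concluding marked term). Because $x$ is not free in $t$, the freeness hypothesis propagates to every premise, and the induction hypothesis applies to each, allowing the conclusion to be rebuilt with the same rule. For the variable rule $\Gamma[x:U]\Gamma'\vdashm y^V:V$, the hypothesis that $x$ is not free in $y^V$ forces $y\neq x$ and $x$ not free in $V$; the declaration $y:V$ therefore lies in $\Gamma\Gamma'$, and well-formedness strengthening combined with a fresh application of the variable rule gives the conclusion.

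The type conversion rule is precisely what forces the second clause to assert only the existence of some $T'$: from a premise $\Gamma[x:U]\Gamma'\vdashm t:T_0$ with $T_0\conv T$, the induction hypothesis produces some $T'$ with $\Gamma\Gamma'\vdashm t:T'$, which is all that is required. The three mark conversion rules are the delicate point. Their premise may exhibit a mark $V$ containing $x$ free, even when the conclusion's mark $W$ does not, so the induction hypothesis cannot be applied to the premise directly. The remedy is to invert the derivation via the Stripping proposition, recovering for the underlying variable $y$ a canonical declared mark $V_0$ with $y:V_0$ in $\Gamma\Gamma'$ (hence free of $x$); one then applies the induction hypothesis to the side premise $\Gamma[x:U]\Gamma'\vdashm W:s$ (first clause, since $x$ is absent from $W$ and the type is a sort), and rebuilds the term in the stripped context by the variable rule followed by the necessary mark conversions.

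To derive the first clause from the second, suppose $T=s$. The second clause gives some $T'$ with $\Gamma\Gamma'\vdashm t:T'$; weakening back to $\Gamma[x:U]\Gamma'$ and invoking Type Uniqueness forces $T'\conv s$. When $s=Prop$, one application of the type conversion rule in the stripped context (with the axiom $\Gamma\Gamma'\vdashm Prop:Type$) upgrades $T'$ to $Prop$. When $s=Type$, the conversion rule cannot have been the last step of the original derivation, since $Type$ is not itself typeable, so the last rule was necessarily structural and that subcase was handled directly during the induction. The main obstacle throughout is the mark conversion rules, which make a naive induction on derivation length fail and force the detour through the Stripping lemma.
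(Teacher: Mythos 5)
Your overall strategy (induction on the derivation, with an auxiliary strengthening statement for well-formed contexts) is viable and close in spirit to the paper's induction on the structure of $t$ combined with the Stripping lemma; you also correctly isolate one real difficulty, namely that the premise of a mark-conversion rule may carry a mark in which $x$ occurs free, and your fix via Stripping is sound. But there is a genuine gap in the application and abstraction cases, which you dismiss as routine. The claim that ``every term appearing in a premise is a subterm of the conclusion'' is false there: for $t=(t_1~t_2)^A$ the premises are $\Gamma[x:U]\Gamma'\vdashm t_1:(z:C)D$ and $\Gamma[x:U]\Gamma'\vdashm t_2:C$, and the types $(z:C)D$ and $C$ are not subterms of $t$ and may contain $x$ free even though $t$ does not. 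Consequently the freeness hypothesis does not ``propagate to every premise,'' and the induction hypothesis, applicable only to the subjects $t_1$ and $t_2$, yields judgements $\Gamma\Gamma'\vdashm t_1:E$ and $\Gamma\Gamma'\vdashm t_2:C'$ for \emph{some} types $E$ and $C'$ about which nothing more is known than $E\conv(z:C)D$ and $C'\conv C$ (and even that requires weakening plus type uniqueness). You cannot ``rebuild the conclusion with the same rule'' from these: the application rule requires $E$ to be syntactically a product whose domain is the type assigned to $t_2$.

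Closing that gap is exactly what Geuvers' lemma (stated in the paper immediately before Strengthening) is for: since $E\conv(z:C)D$ and $E$ is well-typed, $E\rhd^{*}_{\beta}(z:C'')D'$; subject $\beta$-reduction gives $\Gamma\Gamma'\vdashm t_1:(z:C'')D'$, the conversion rule gives $\Gamma\Gamma'\vdashm t_2:C''$, and only then can the application rule be reapplied, followed by a mark conversion to restore the original mark $A$ (which, being a subterm of $t$, is free of $x$ and is handled by the induction hypothesis). The same product-recovery step is needed in the abstraction case to turn the mark $B\conv(z:A)C$ into an actual product. Your proposal never invokes Geuvers' lemma, and without it the central cases of the induction do not go through; this, rather than the mark-conversion rules, is the crux of the paper's proof.
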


\begin{proof}
By induction over the structure of $t$:
\begin{itemize}
\item If $t=(t_1~t_2)^A$, by the stripping lemma, we have 
$\Gamma [x:U] \Gamma' \vdashm
t_1:(z:C)D$ and $\Gamma [x:U] \Gamma' \vdashm t_2:C$. The induction
hypothesis ensures that $\Gamma\Gamma'\vdashm t_1:E$ and
$\Gamma\Gamma'\vdashm t_2:C'$. Since $E\conv(z:C)D$, Geuvers' lemma
implies that $E\rhd^{*}_{\beta}(z:C'')D'$. The conversion rule gives
$\Gamma\Gamma'\vdashm t_2:C''$, and so $\Gamma\Gamma'\vdashm
(t_1~t_2)^{D'[z \la t_2]}:D'[z\la t_2]$. The induction
hypothesis also implies $\Gamma\Gamma'\vdashm A:s$, and so we finally
have $\Gamma\Gamma'\vdashm (t_1~t_2)^A:D'[z\la t_2]$.
\item If $t=([z:A]t_0)^B$, the stripping lemma applied to the
derivable judgement 
$$\Gamma[x:U]\Gamma'\vdashm ([z:A]t_0)^{B}:B$$
ensures that
$$\Gamma[x:U]\Gamma'[z:A]\vdashm t_0:C,$$
$$\Gamma[x:U]\Gamma'\vdashm A:s_1,$$
$$\Gamma[x:U]\Gamma'\vdashm B:s_2,$$
$B\conv (z:A)C$, and $<s_1,s_2>\in\Ru$.

Thus, by induction hypothesis 
$$\Gamma\Gamma'[z:A]\vdashm t_0:D,$$
$$\Gamma\Gamma'\vdashm A:s_1,$$
$$\Gamma\Gamma'[z:A]\vdashm B:s_2.$$

Geuvers' lemma ensures that
$B\rhd^{*}_{\beta}(z:A')C'$ with 
$C'\conv D$ and $A\conv A'$. Thus we have
$$\Gamma \Gamma' [z:A] \vdashm t_{0}:C'$$
$$\Gamma \Gamma' [z:A] \vdashm C':s$$
Hence
$$\Gamma \Gamma' \vdashm ([z:A]t_{0})^{(z:A)C'}:(z:A)C'$$
\end{itemize}
The other cases are straightforward.
\end{proof}

\begin{prop}[subject $\eta$-reduction]
If $\Gamma \vdashm t:T$ and $t \rhd_{\eta} t'$ then $\Gamma \vdashm t':T$.
\end{prop}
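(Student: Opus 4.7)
The plan is to induct on the structure of $t$. The cases in which the $\eta$-redex occurs strictly inside $t$---either inside a proper subterm or inside one of the marks---are routine: I would apply the stripping lemma to extract the subderivation of the affected sub-expression, invoke the induction hypothesis on this structurally smaller piece, and then rebuild the derivation using the ordinary typing rule (reduction inside a subterm), convertibility of contexts (reduction inside the type of a binder), or one of the three mark-conversion rules (reduction strictly inside a mark, using that a term and its $\eta$-reduct are convertible).

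The one genuinely interesting case is top-level $\eta$-reduction, where
$$t = ([x:A](u~x^{A})^{B})^{C} \quad\text{and}\quad t' = u,$$
with $x$ not occurring free in the marked term $u$. My plan here is to apply the stripping lemma twice: first to the outer abstraction, obtaining $\Gamma\vdashm A:s_1$, $\Gamma\vdashm C:s_2$, $C\conv (x:A)W$ and $\Gamma[x:A]\vdashm (u~x^{A})^{B}:W$; then to the inner application, obtaining $\Gamma[x:A]\vdashm u:(y:V)W'$ with $V\conv A$ and $W'[y\la x^{A}]\conv W$. The crucial next step is to invoke the strengthening lemma to produce some $T''$ with $\Gamma\vdashm u:T''$. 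By type uniqueness $T''\conv (y:V)W'$, and an $\alpha$-renaming (legitimate because convertibility on marked terms is defined on contents) shows that this in turn is convertible to $(x:A)W$, hence to $C$, hence to $T$ (the last step by Proposition~\ref{tutu}). Since $t$ is an abstraction we have $T\ne Type$ and hence $\Gamma\vdashm T:s$ for some sort $s$, so the conversion rule finally yields $\Gamma\vdashm u:T$.

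The principal obstacle is precisely the strengthening step: it is sound here only because marked $\eta$-reduction requires the bound variable to be absent from \emph{all} of $u$, including its marks---otherwise strengthening could not be applied and the outer binder $[x:A]$ could not be safely discarded. Once strengthening is in hand, the remainder of the case reduces to routine manipulation of the convertibility of product types together with one application of the outer-mark conversion rule.
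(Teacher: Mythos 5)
Your proof is correct and follows essentially the same route as the paper's: structural induction with the only interesting case being the top-level redex, handled by two applications of stripping, then strengthening (justified exactly because marked $\eta$-reduction demands $x$ absent from all of $u$ including its marks), type uniqueness, and the conversion rule. The only cosmetic difference is that you write the bound variable's mark as $A$ where the paper allows an arbitrary convertible mark $V$, which does not affect the argument.
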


\begin{proof}
By induction over the structure of $t$. All the cases are straightforward, but
the one in which $t$ is itself the reduced $\eta$-redex. In this case
$t = ([x:U](u~x^{V})^{W})^{X}$ and $t' = u$. Using twice the stripping lemma
we get $\Gamma [x:U] \vdashm u:(y:A)B$ with $(y:A)B \conv T$. By the
strengthening lemma we get $\Gamma \vdashm u:C$ and by unicity of
typing $C \conv (y:A)B$. Since $T \neq Type$, we have $\Gamma \vdashm
T:s$ and  $\Gamma \vdashm u:T$ by conversion.
\end{proof}

\subsection{Normal forms}

We have remarked above that even when $t^{\#}$ is an $\eta$-redex,
$t$ is not necessarily an $\eta$-redex. We prove now that provided we
perform enough reductions inside the marks, the term $t$ becomes an
$\eta$-redex. In the example above, the term 
$$([x:T^{Prop}](y^{(([z:T^{Prop}](T^{Prop} \ra T^{Prop}))^{T^{Prop} \ra
Prop}~x^{T^{Prop}})^{Prop}}~x^{T^{Prop}})^{T^{Prop}}) ^{T^{Prop} \ra
T^{Prop}}$$ is not an $\eta$-redex, but it reduces to the term
$$([x:T^{Prop}](y^{T^{Prop} \ra
T^{Prop}}~x^{T^{Prop}})^{T^{Prop}})^{T^{Prop} \ra T^{Prop}}$$
which is an $\eta$-redex. 

More precisely, we eventually want to prove the following: if $t$ is
well-typed and $\beta \eta$-normal then $t^{\#}$ is $\beta
\eta$-normal.

\begin{prop}
If $\Gamma [x:U] \Gamma' \vdashm t:T$ with $t$ in $\beta$-normal form
and there exists $t'$ such that $t\conv t'$ and 
$$\Gamma\Gamma'\vdashm t':T$$
then $x$ does not occur in $t^{\#}$.
\end{prop}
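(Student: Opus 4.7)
The plan is to reduce to Proposition~\ref{tec1} by passing everything through the contents operation $(\cdot)^{\#}$.

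First, I would observe that since $t$ is $\beta$-normal as a marked term, $t^{\#}$ is $\beta$-normal as an unmarked term: any $\beta$-redex in $t^{\#}$ has the form $([z:P^{\#}]u^{\#}~v^{\#})$, and must come from a subterm $(([z:P]u)^{A}~v)^{B}$ of $t$, which is itself already a marked $\beta$-redex (alternatively this is immediate from Proposition~\ref{beta}). Taking contents of the two given judgements yields $\Gamma^{\#}[x:U^{\#}]\Gamma'^{\#}\vdash t^{\#}:T^{\#}$ and $\Gamma^{\#}\Gamma'^{\#}\vdash t'^{\#}:T^{\#}$ in the unmarked calculus, and $t\conv t'$ becomes $t^{\#}\conv t'^{\#}$ directly by the definition of conversion on marked terms. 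Since $\Gamma\Gamma'$ is well-formed and does not declare $x$, the free variables of $t'^{\#}$ and $T^{\#}$ lie in $\Gamma^{\#}\Gamma'^{\#}$, so $x$ occurs free in none of $\Gamma'^{\#}$, $t'^{\#}$, and $T^{\#}$.

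Next, I would let $w$ be the $\beta\eta$-normal form of $t^{\#}$, which exists by Normalization. By Church-Rosser applied to $t^{\#}\conv t'^{\#}$, this $w$ is also the $\beta\eta$-normal form of $t'^{\#}$; in particular $t'^{\#}\rhd^{*} w$, so by subject reduction for the unmarked calculus $\Gamma^{\#}\Gamma'^{\#}\vdash w:T^{\#}$. Since reductions do not introduce new free variables, $x$ does not occur free in $w$ either. At this point Proposition~\ref{tec1}, applied to $\Gamma^{\#}[x:U^{\#}]\Gamma'^{\#}\vdash t^{\#}:T^{\#}$ with $\beta\eta$-normal form $w$, yields exactly that $x$ does not occur free in $t^{\#}$, which is the claim.

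The main potential snag is purely bookkeeping: checking that $(\cdot)^{\#}$ preserves $\beta$-normality and free-variable occurrences, so that the marked hypotheses transfer cleanly to the unmarked side. Once this is granted, the whole argument is a short reduction to the already-proved unmarked Proposition~\ref{tec1}, and no genuinely new idea beyond what is available in the unmarked calculus is required.
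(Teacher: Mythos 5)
Your proof is correct and takes essentially the same route as the paper, whose own proof simply declares the statement an immediate consequence of Proposition~\ref{tec1} applied to the contents judgement $\Gamma^{\#}[x:U^{\#}]\Gamma'^{\#}\vdash t^{\#}:T^{\#}$, noting that $t^{\#}$ is $\beta$-normal. You have only made explicit the bookkeeping the paper leaves implicit: that $(\cdot)^{\#}$ preserves $\beta$-normality (via Proposition~\ref{beta}) and typing, and that the arbitrary convertible $t'$ is connected to the $\beta\eta$-normal form required by Proposition~\ref{tec1} through Church--Rosser and subject reduction.
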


\begin{proof}
It is an immediate consequence of the proposition~\ref{tec1} and the fact
that
$$\Gamma^{\#}[x:U^{\#}]\Gamma'^{\#}\vdash^{\#} t^{\#}:T^{\#}$$
since $t^{\#}$ is $\beta$-normal.
\end{proof}

\begin{prop} \label{free}
If $\Gamma [x:U] \Gamma' \vdashm t:T$, $t$ is
$\beta$-normal and 
$x$ does not occur free neither in $\Gamma'$ nor in $t^{\#}$ and in $T$,
then $x$ does not occur free in $t$ either (and thus
$\Gamma\Gamma'\vdashm t:T$ holds).
\end{prop}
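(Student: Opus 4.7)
My plan is to prove, by structural induction on the marked term $t$, a slightly more general statement obtained by dropping the hypothesis $x\notin T$: if $\Gamma[x:U]\Gamma'\vdashm t:T$ with $t$ $\beta$-normal, $x\notin \Gamma'$ and $x\notin t^{\#}$, then $x\notin t$. The stated proposition is an immediate corollary, and this more general formulation lets the IH apply uniformly to arbitrary sub-marked-terms without having to verify anything about their types. The recurring difficulty is that outer marks such as $V$ in $y^V$ or $W$ in $([z:A]u)^W$ and $(u_1\,u_2)^W$ are not reflected in $t^{\#}$, so their $x$-freeness must be inferred from convertibility with some explicitly $x$-free term, plus an application of proposition~\ref{tec1}.

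The base cases ($t$ a sort, $t$ a product) and the body parts of the abstraction and application cases ($A$ and $u$ in $([z:A]u)^W$; $u_1$ and $u_2$ in $(u_1\,u_2)^W$) are routine: stripping gives the correct typing judgements, the hypothesis $x\notin t^{\#}$ transfers directly to the contents of these subterms, and the IH applies. The variable case $y^V$ already exhibits the pattern for outer marks: $y\neq x$ because $x$ is not free in $y=t^{\#}$, stripping produces a declared type $V_0$ for $y$ in $\Gamma$ or $\Gamma'$ (hence $x$-free), and $V\conv V_0$ forces the $\beta\eta$-normal form of $V^{\#}$ to coincide with that of $V_0^{\#}$; therefore it is $x$-free and (by unmarked strengthening plus subject reduction) typable in $\Gamma^{\#}\Gamma'^{\#}$. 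Proposition~\ref{tec1} then gives $x\notin V^{\#}$, and the IH applied to $V$ finishes the case.

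The main obstacle is showing $x\notin W$ for the outer marks in the abstraction and application cases, which will take most of the work. For $t=([z:A]u)^W$: once the IH produces $x\notin A$ and $x\notin u$, marked strengthening on $u$ supplies an $x$-free $B^*\conv B$ with $\Gamma\Gamma'[z:A]\vdashm u:B^*$, and combined with strengthening of $A$ the product rule yields an $x$-free marked product $(z:A)B^*$ well-typed in $\Gamma\Gamma'$ and convertible to $W$. Proposition~\ref{tec1} applied to $W^{\#}$ (whose $\beta\eta$-normal form coincides with that of $(z:A^{\#})B^{*\#}$) then gives $x\notin W^{\#}$, and the IH on $W$ closes the case. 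The application case $t=(u_1\,u_2)^W$ follows the same template, with one extra step: after the IH gives $x\notin u_1,u_2$, marked strengthening supplies $x$-free types $E_1,E_2$ in $\Gamma\Gamma'$, Geuvers' lemma reduces $E_1$ to an $x$-free product $(z:C^*)D^*$, and the application typing rule produces the $x$-free well-typed marked term $(u_1\,u_2)^{D^*[z \la u_2]}$ in $\Gamma\Gamma'$, convertible to $W$; the remainder of the argument is then as in the abstraction case.
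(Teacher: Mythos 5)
Your proposal is sound and reaches the correct conclusion, but it takes a noticeably different route from the paper's own proof. The paper keeps the hypothesis ``$x$ does not occur free in $T$'' inside the induction and exploits it directly: in the abstraction case $([z:B]C)^{D}$ it gets $x\notin D^{\#}$ immediately from $D\conv T$ together with Proposition~\ref{tec1}, and in the atomic case it walks the application spine left to right, propagating $x$-freeness from the declared type $A_{0}$ of the head through $T_{0}$, then $c_{1}$, then $T_{1}\conv B[y\la c_{1}]$, and so on, re-establishing the type-freeness side condition of the induction hypothesis before each recursive call. You instead strengthen the statement by dropping $x\notin T$, which lets you treat every sub-marked-term (domains, bodies, and both components of a binary application) uniformly; the price is paid only at the outer marks, where you must manufacture an $x$-free term convertible to the mark by re-typing the already-handled components in $\Gamma\Gamma'$ (marked strengthening plus the product rule for abstractions; marked strengthening, Geuvers' lemma and the application rule for applications) before invoking Proposition~\ref{tec1} and then the induction hypothesis on the mark itself. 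Both arguments rest on the same pivot --- exhibit an $x$-free well-typed term convertible to the mark, deduce that $x$ is not free in the contents of the mark via Proposition~\ref{tec1}, then recurse into the mark --- so the difference is one of bookkeeping rather than substance. Your version buys a type-independent induction hypothesis and a binary rather than spine-based treatment of applications, at the cost of extra appeals to the marked strengthening and weakening lemmas; like the paper, it silently uses that two convertible well-typed types carry the same sort and that the free variables of a derivable judgement are declared in its context, which is an acceptable level of informality here.
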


\begin{proof} 
By induction over the structure of $t$:
\begin{itemize}
\item If $t = (z:A)B$, we have $\Gamma[x:U] \Gamma'\vdashm A:s$ which, by
induction, implies that $x$ does not occur free in $A$. We then apply
the induction hypothesis to $\Gamma[x:U] \Gamma'[z:A]\vdashm B:s'$
which ensures that $x$ does not occur free in $B$.

\item If $t=([z:B]C)^D$, the same reasoning as above ensures that $x$
does not occur free in $B$.
We know that $D \conv T$, that $x$ does not occur free in $T$; since
$D$ is $\beta$-normal, the previous proposition ensures that $x$ does not
occur free in $D^{\#}$. By induction hypothesis, this implies that
$x$ does not occur free in $D$ either.

We know that $D$ (or $T$) is
convertible to a product, and thus $D\rhd^{*}_{\beta}(z:B')E$. We may then
apply the induction hypothesis to $\Gamma[x:U]\Gamma'[z:B]\vdashm C:E$.
Thus the variable $x$ does not occur in $C$.

\item If $t=(~...~(z^{T_{0}}~c_{1})^{T_{1}}~...~c_{n})^{T_{n}}$:
\begin{enumerate}
\item We first prove that $x$ does not occur free in $T_0$. Let $A_0$
be the type bound to $z$ in $\Gamma\Gamma'$. We know that $T_0\conv
A_0$ and hence $T_0^{\#}\conv A_0^{\#}$. We also know that $x$
does not occur free in $A_0$ and thus not in $A_0^{\#}$ either.
Thus $x$ does not occur in the common normal form of $A_{0}^{\#}$ and 
$T_{0}^{\#}$. 
Since $T_0$ is $\beta$-normal, so is $T_0^{\#}$. The previous proposition
allows us to conclude that $x$ does not occur free in $T_0^{\#}$. 
By induction hypothesis,
$x$ does not occur free in $T_0$ (and thus also in $z^{T_0}$).

\item Since $z^{T_0}$ is left-hand of a well-formed application, it
reduces to a function type: $T_0\rhd^{*}_{\beta}(y:A)B$. We know that
$$\Gamma[x:U]\Gamma'\vdashm c_1:A$$
is derivable, that $x$ does not occur free in $c_1^{\#}$  and that $x$
does not occur free in $A$; by 
induction hypothesis, $x$ does not occur free in $c_1$ either.
\item Since $x$ does not occur free in $B$ and $c_1$, it does not
occur free in $B[y \la c_1]$. Furthermore $T_1\conv B[y \la c_1]$ 
and since $T_1$ is $\beta$-normal, the previous proposition ensures 
that $x$ does not occur free in $T_1^{\#}$. Thus, by the induction
hypothesis, it does not occur in $T_1$. 

\item We may then iterate the steps 2 and 3 for every $c_i$ and
$T_i$.
\end{enumerate}
\end{itemize}
\end{proof}

\begin{prop} \label{tyty}
Let  $t$ be a well-typed marked term. If $t$ is $\beta\eta$-normal
then $t^{\#}$ is $\beta \eta$-normal.
\end{prop}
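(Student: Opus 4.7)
The plan is to proceed by structural induction on $t$, using Proposition~\ref{beta} for the $\beta$-normality part and Proposition~\ref{free} for the delicate $\eta$-normality part. $\beta$-normality of $t^{\#}$ is essentially immediate: any $\beta$-redex in $t^{\#}$ would, by Proposition~\ref{beta}, lift to a $\beta$-redex in $t$, contradicting the hypothesis. So the real content lies in showing that $t^{\#}$ has no $\eta$-redex.

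When $t$ is a sort, a marked variable, a product, or an application, the top of $t^{\#}$ is manifestly not an $\eta$-redex, and the inductive hypothesis applied to its immediate sub-terms (themselves $\beta\eta$-normal as sub-terms of the normal $t$) rules out any $\eta$-redex inside. The only genuinely interesting case is $t=([x:P]u)^{T}$, where $t^{\#}=[x:P^{\#}]u^{\#}$ is an abstraction and could \emph{a priori} be an $\eta$-redex even though $t$ was not. Suppose for contradiction that it is, so $u^{\#}=(v~x)$ with $x$ not free in $v$. Since $u$ is $\beta$-normal and $u^{\#}$ is an application, $u$ must be of the shape $(w~x^{V})^{Y}$ with $w^{\#}=v$; and since $t$ is $\eta$-normal in the marked system, $x$ is then forced to occur free in $w$.

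The plan is to refute this by applying Proposition~\ref{free} to $w$ in order to deduce that $x$ is not free in $w$. To do so one needs a type of $w$ in which $x$ does not occur free; the type $(y:A)B$ obtained directly from stripping on $(w~x^{V})^{Y}$ may well contain $x$, so a conversion is required. Stripping on $t$ itself yields $\Gamma\vdashm t:(x:P)W$ with $\Gamma[x:P]\vdashm u:W$, while stripping on $(w~x^{V})^{Y}$ gives $V\conv A\conv P$ and $W\conv B[y\la x^{V}]$; a short conversion calculation then shows $(y:A)B\conv(x:P)W$, so by the conversion rule $w$ also has type $(x:P)W$, in which $x$ is bound and therefore not free. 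Proposition~\ref{free} now applies, with the hypothesis that $x$ is not free in $w^{\#}=v$, and yields that $x$ is not free in $w$, contradicting what we obtained above. I expect this conversion step to be the main technical subtlety: everything else is essentially bookkeeping with stripping and the inductive hypothesis.
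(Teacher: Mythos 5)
Your overall strategy is sound and genuinely different from the paper's in its key step, but the step you yourself flag as ``the main technical subtlety'' --- the conversion $(y:A)B\conv(x:P)W$ --- is not established by the calculation you describe, and this is exactly where the real content of the proposition lives. From $A\conv P$ and $W\conv B[y\la x^{V}]$ alone you cannot conclude $(y:A)B\conv(x:P)W$: at the level of contents this amounts to deducing $(y:A^{\#})B^{\#}\conv(x:P^{\#})(B^{\#}[y\la x])$, and the substitution $[y\la x]$ cannot be undone by $\alpha$-renaming if $x$ already occurs free in $B^{\#}$, since the renaming would capture those occurrences. Nothing in the stripping lemma prevents $x$ from occurring in $B$ --- the whole point of this proposition is that marks, and hence the types extracted from derivations, may contain occurrences of $x$ that only disappear after reduction. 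So your ``short conversion calculation'' silently assumes that $x$ does not occur in the normal form of $B^{\#}$, which is essentially the fact to be proved.

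The gap is fillable, and filling it requires precisely the detour the paper takes: one must first exploit the hypothesis $x\notin FV(v)$ on the \emph{unmarked} side --- for instance by unmarked strengthening, or by unmarked subject reduction applied to the $\eta$-redex $t^{\#}\rhd_{\eta}v$, combined with type uniqueness --- to learn that the normal form of the type of $v$ contains no $x$; only then does $(y:A)B\conv(x:P)W$ follow. The paper packages this as: unmarked strengthening gives an $x$-free type of $v^{\#}$; proposition~\ref{tec1} transfers the absence of $x$ to the $\beta$-normal form of $T^{\#}$; proposition~\ref{beta} lifts that normal form to a marked term $T'$; and proposition~\ref{free}, applied first to $T'$ and then to $v$, yields $x\notin FV(v)$ in the marked sense, contradicting the $\eta$-normality of $t$. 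Your idea of retyping $w$ with $(x:P)W$ so as to invoke proposition~\ref{free} only once is an attractive shortcut once the conversion is secured, but as written the argument begs the question at the decisive point.
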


\begin{proof}
We already know that $t^{\#}$ is $\beta$-normal. Now suppose
that $t^{\#}$ is not $\eta$-normal. So we have a subterm $u$ of $t$ such
that $u^{\#}$ is an $\eta$-redex: $u=([x:A](v~x^B)^C)^D$ with $x$ not
occurring free in $v^{\#}$. Let $T$ be the type of $v$. The variable
$x$ does not occur in $v^{\#}$, thus by the unmarked strengthening lemma
there is a type $U$ of $v^{\#}$ in which $x$ does not occur. 
The types 
$U$ and $T^{\#}$
have the same $\beta \eta$-normal form. Thus $x$ does not occur in the 
$\beta \eta$-normal form of $T^{\#}$. Thus $x$ does not occur in the 
$\beta$-normal form of $T^{\#}$ (proposition~\ref{tec1}).
By the proposition~\ref{beta}, $T$ $\beta$-reduces 
to a term $T'$ such that $T'^{\#}$ is the $\beta$-normal form of $T^{\#}$.
We have
$\Gamma [x:T] \vdash T':s$ and $x$ does not occur in $T'^{\#}$, thus,
by the proposition \ref{free}, $x$ does not occur in $T'$.
Then $\Gamma [x:T] \vdash v:T':s$ and $x$ does not occur in $v^{\#}$ 
nor in $T'$, thus, by proposition \ref{free} $x$ does not occur in $v$, 
which is contradictory.
\end{proof}

\subsection{Normalization}

In this section we prove that each well-typed marked term has a normal form.
We use the an adaptation of the method
of \cite{HHP,GeuNed}, i.e. we associate to each marked term $t$ 
an unmarked term $t^{\circ}$ that mimics all its reductions. 

\begin{definition}[Marked terms translation]

We define by simultaneous induction two translations from marked terms to
unmarked terms. We use a fresh variable $o$ (of type $Prop$).
\begin{itemize}
\item
If $P$ is a term of the form 
$(x_{1}:A_{1}) ... (x_{n}:A_{n})Prop$, we let  
$\overline{P} = (x_{1}:A_{1}^{\circ}) ... (x_{n}:A_{n}^{\circ})o$.
\item 
otherwise, we let $\overline{P} = P^{\circ}$.
\end{itemize}
This first auxiliary encoding is meant for types (i.e. terms such that
$t:s$) and in all the cases $\overline{t}:Prop$.

\begin{itemize}
\item
If $t$ is a sort, we let $t^{\circ} =  t$,
\item
if $t = x^{T}$, we let $t^{\circ} = ([z:Prop]x~\overline{T})$,
\item
if $t = (u~v)^{T}$, we let
$t^{\circ} = ([z:Prop](u^{\circ}~v^{\circ})~\overline{T})$,
\item
if $t = ([x:P]u)^{T}$, we let
$t^{\circ} = ([z:Prop][x:P^{\circ}]u^{\circ}~\overline{T})$,
\item
if $t = (x:P)Q$, we let $t^{\circ} = (x:P^{\circ})Q^{\circ}$.
\end{itemize}
\end{definition}

\begin{prop}
\label{FV}
$FV(t^{\circ}) \subseteq FV(t) \cup \{o\}$
\end{prop}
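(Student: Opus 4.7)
The plan is to prove the inclusion by a straightforward structural induction on $t$, carried out simultaneously with the auxiliary inclusion $FV(\overline{P}) \subseteq FV(P) \cup \{o\}$, since the two translations $(\cdot)^{\circ}$ and $\overline{(\cdot)}$ are defined by mutual recursion (the latter actually calling the former on the $A_i$).

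For the auxiliary claim: if $P$ has the form $(x_1:A_1)\ldots(x_n:A_n)\mathit{Prop}$, then $\overline{P} = (x_1:A_1^{\circ})\ldots(x_n:A_n^{\circ})o$, and by the inductive hypothesis on each $A_i$ we have $FV(A_i^{\circ}) \subseteq FV(A_i) \cup \{o\}$, so the only fresh variable that may appear is $o$; binding the $x_i$'s precisely mirrors what is bound in $P$, giving $FV(\overline{P}) \subseteq FV(P) \cup \{o\}$. Otherwise $\overline{P} = P^{\circ}$ and the claim reduces to the main one.

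For the main claim, I would go through each clause of the definition of $(\cdot)^{\circ}$. The sort case is immediate. For $t = x^T$, the translation is $([z:\mathit{Prop}]x~\overline{T})$ with $z$ fresh, so its free variables are $\{x\} \cup FV(\overline{T}) \subseteq \{x\} \cup FV(T) \cup \{o\} = FV(t) \cup \{o\}$ by the auxiliary claim. The application case $t = (u~v)^T$ and the abstraction case $t = ([x:P]u)^T$ are similar: the outer binder introduces only the fresh $z$, and one combines the inductive hypotheses on $u$, $v$, $P$ with the auxiliary claim on $\overline{T}$, taking care that the inner binder $[x:\ldots]$ in the abstraction case removes $x$ from the free variables of $u^{\circ}$ just as it does in $t$. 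The product case $t = (x:P)Q$ is purely compositional.

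No step is really the main obstacle; the only thing one has to be vigilant about is that the fresh variable $z$ introduced by the translations of marked variables, applications and abstractions is genuinely fresh (so it does not escape as a free variable of $t^{\circ}$), and that the $o$ introduced by the $\overline{(\cdot)}$ case is tracked consistently in the inclusion. Since $z$ is chosen fresh, these are routine bookkeeping matters.
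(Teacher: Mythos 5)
Your proof is correct: the paper states this proposition without proof, and the mutual structural induction you give (tracking the auxiliary inclusion $FV(\overline{P}) \subseteq FV(P) \cup \{o\}$ alongside the main one) is exactly the routine argument the paper implicitly relies on. Your attention to the freshness of the dummy binders $z$ and to the scoping of $x$ in the abstraction case covers the only points where care is needed.
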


\begin{prop}
$$a^{\circ}[x \la b^{\circ}] \rhd^{*} (a[x \la b])^{\circ}$$
$$\overline{a}[x \la b^{o}] \rhd^{*} \overline{a[x \la b]}$$
\end{prop}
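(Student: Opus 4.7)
The plan is to prove both reductions by simultaneous induction on the structure of $a$, with $x$ and $b$ fixed. The two statements have to travel together because the encoding of a marked variable, application or abstraction packages $\overline{T}$ (the mark-level translation) inside its $\cdot^\circ$-translation, so the first equation's inductive step on, e.g., $x^T$ immediately invokes the second equation on $T$.

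For the first reduction I case-split on the shape of $a$. Sorts are immediate since $\cdot^\circ$ is the identity on them. For a marked variable $y^T$ with $y = x$, pushing the substitution through the encoding produces $([z:Prop]b^\circ~\overline{T}[x \la b^\circ])$, which $\beta$-steps in one move to $b^\circ = (a[x \la b])^\circ$: this uses that the bound $z$ is fresh in $b^\circ$, which follows from proposition~\ref{FV} together with the renaming convention. The subcase $y \neq x$ is handled by applying the second IH to $T$ inside the fixed $[z:Prop]y~\cdot$ wrapper. Applications $(u~v)^T$ and abstractions $([y:P]u)^T$ are analogous: push substitution through the structure, invoke the first IH on the term components and the second IH on the mark $T$, and reduce in parallel underneath the common outer wrapper. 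The product case $(y:P)Q$ needs only the first IH on $P$ and $Q$.

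For the second reduction I case-split on whether $a$ matches the defining pattern $(y_1:A_1)\dots(y_n:A_n)Prop$ of $\overline{\cdot}$. If it does, then $a[x \la b]$ retains that pattern, since its outermost product/$Prop$ skeleton is rigid under substitution; applying the first IH pointwise to each $A_i$ closes the case. If it does not, $\overline{a} = a^\circ$ and the first reduction (being proved in the same induction) yields the result. The step I expect to demand the most care is the variable case $a = x^T$: the wrapper $[z:Prop]\cdot~\overline{T}[x \la b^\circ]$ must be discarded cleanly, which relies on $z$ being free neither in $b^\circ$ (proposition~\ref{FV}, noting $z \neq o$) nor, after substitution, in the translated mark. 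The other point to keep honest is that the pattern-match defining $\overline{\cdot}$ only inspects the outermost product spine, so its behaviour under substitution is stable in exactly the cases the induction visits.
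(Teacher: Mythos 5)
Your proposal matches the paper's proof: both proceed by (simultaneous) structural induction on $a$, handle the variable case by a single $\beta$-step that discards the $[z:Prop]\,\cdot\,\overline{T}[x\la b^{\circ}]$ wrapper (licensed by the freshness of $z$), and dispatch the remaining cases by pushing the substitution through the encoding and applying the induction hypotheses componentwise. The extra care you devote to the stability of the $\overline{\,\cdot\,}$ pattern under substitution is left implicit in the paper but is not a different argument.
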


\begin{proof}
By induction over the structure of $a$.
\end{proof}

If $a = x^{T}$ then:
$${a}^{\circ} = ([z:Prop]x~\overline{T})$$
$${a}^{\circ}[x \la b^{\circ}] = ([z:Prop]b^{\circ}~\overline{T}[x \la b^{\circ}])
\rhd^{*} b^{\circ} = (x^{T}[x \la b])^{\circ} = (a[x \la b])^{\circ}$$

The other cases are a simple application of the induction hypothesis.
For instance, if $a$ is an application $a = (t~u)^{T}$, we have
$$a^{\circ} = ([z:Prop](t^{\circ}~u^{\circ})~\overline{T})$$
and
$$a^{\circ}[x \la b^{\circ}] = 
([z:Prop](t^{\circ}[x \la b^{\circ}]~u^{\circ}[x \la b^{\circ}])~
\overline{T}[x \la b^{\circ}])$$
By induction hypothesis we have
\begin{eqnarray*}
t^{\circ}[x \la b^{\circ}] & \rhd^{*} & (t[x \la b])^{\circ}\\
u^{\circ}[x \la b^{\circ}] & \rhd^{*} & (u[x \la b])^{\circ}\\
\overline{T}[x \la b^{\circ}] & \rhd^{*} & \overline{T[x \la b]}
\end{eqnarray*}
So
\begin{eqnarray*}
a^{\circ}[x \la b^{\circ}] & \rhd^{*} &
([z:Prop]((t[x \la b])^{\circ}~(u[x \la b])^{\circ})
\overline{T[x \la b]})\\
& = & ((t[x \la b]~u[x \la b])^{T[x \la b]})^{\circ}
= (a[x \la b])^{\circ}
\end{eqnarray*}

\begin{prop}
If $a \rhd b$ then $a^{\circ} \rhd^{+} b^{\circ}$.
\end{prop}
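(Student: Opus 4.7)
The plan is to induct on the derivation of $a \rhd b$, splitting on the position of the contracted redex. Two base cases cover the situation when $a$ is itself the redex, namely a root $\beta$-redex or a root $\eta$-redex; the inductive cases handle reductions occurring in a proper subterm or inside a mark of $a$.

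For the root $\beta$-case, $a = (([x:P]u)^M~v)^N$ and $b = u[x\la v]$. I would unfold the definition of $\cdot^{\circ}$ to obtain
$$a^{\circ} = ([z:Prop](([z:Prop][x:P^{\circ}]u^{\circ}~\overline{M})~v^{\circ})~\overline{N}).$$
The two nested $[z:Prop]$ wrappers $\beta$-contract without substituting anywhere, since $z$ is fresh by variable convention, exposing the redex $([x:P^{\circ}]u^{\circ}~v^{\circ})$; a third $\beta$-step yields $u^{\circ}[x \la v^{\circ}]$, and the preceding substitution lemma reduces this to $(u[x\la v])^{\circ} = b^{\circ}$. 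For the root $\eta$-case, $a = ([x:P](u~x^Q)^R)^S$ with $x \notin FV(u)$ and $b = u$. Three analogous $\beta$-unfoldings of the $[z:Prop]$ wrappers inside $a^{\circ}$ leave $[x:P^{\circ}](u^{\circ}~x)$. Proposition \ref{FV} gives $FV(u^{\circ}) \subseteq FV(u) \cup \{o\}$; since the hypothesis $x \notin FV(u)$ and the freshness of $o$ together force $x \notin FV(u^{\circ})$, a final $\eta$-step delivers $u^{\circ} = b^{\circ}$.

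For the inductive cases, whenever the redex sits inside an immediate subterm of an application, abstraction, or product, the induction hypothesis and the congruence of $\rhd^{+}$ propagate the reduction through the fixed surrounding context of $a^{\circ}$. The one delicate situation is a reduction inside a mark, since $a^{\circ}$ displays marks through the auxiliary encoding $\overline{\cdot}$ rather than $\cdot^{\circ}$. To handle this I would prove in parallel the auxiliary statement that $T \rhd T'$ implies $\overline{T} \rhd^{+} \overline{T'}$, by case analysis on whether $T$ and $T'$ both match the syntactic shape $(\vec{x}:\vec{A})Prop$: when the shape is shared or both absent, the induction hypothesis applies directly to the relevant subterm; the remaining mixed case is handled by unfolding $\overline{T}$ and appealing once more to the substitution lemma on $\cdot^{\circ}$.

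The main obstacle will be the root $\eta$-case, where the three preparatory $\beta$-unfoldings of the $[z:Prop]$ wrappers have to be combined with the free-variable invariant of Proposition \ref{FV} to surface the $\eta$-redex in the unmarked translation; getting the variable conventions right (freshness of $z$ and of $o$, distinctness of $x$ from $o$) is the only delicate point. The rest of the case analysis is essentially mechanical bookkeeping over the recursive definition of $\cdot^{\circ}$ and $\overline{\cdot}$.
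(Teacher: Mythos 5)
Your handling of the two root-redex cases coincides with the paper's own proof: for a root $\beta$-redex, three $\beta$-steps on the fresh $[z:Prop]$ wrappers expose $u^{\circ}[x \la v^{\circ}]$ and the substitution lemma $a^{\circ}[x \la b^{\circ}] \rhd^{*} (a[x\la b])^{\circ}$ finishes; for a root $\eta$-redex, three $\beta$-steps (the third being the wrapper around the translated bound variable) expose $[x:P^{\circ}](u^{\circ}~x)$ and proposition~\ref{FV} licenses the final $\eta$-step. The paper dismisses every congruence case with the sentence ``the same holds if we reduce a redex in a subterm,'' so your decision to isolate reductions inside marks and to state the auxiliary claim that $T \rhd T'$ implies $\overline{T} \rhd^{+} \overline{T'}$ is the right instinct.

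That auxiliary claim is, however, exactly where the argument breaks, and the ``mixed'' case you propose to handle ``by unfolding $\overline{T}$ and appealing once more to the substitution lemma'' cannot be repaired that way. Note first that only one direction of the mixed case can occur: if $T$ already has the shape $(x_{1}:A_{1})\dots(x_{n}:A_{n})Prop$, any one-step reduct keeps it, so the problem is a $T$ \emph{without} that shape reducing to a $T'$ \emph{with} it. Take a mark that is a $\beta$-redex contracting to $Prop$, say $a = x^{T}$ with $T = (([y:A]Prop)^{M}~c)^{N} \rhd Prop$, so $b = x^{Prop}$. Since $T$ is an application, $\overline{T} = T^{\circ}$, whose normal form is $Prop$; but $\overline{Prop} = o$ (the case $n=0$), so $b^{\circ} = ([z:Prop]x~o)$. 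By confluence $T^{\circ}$ cannot reduce to $o$, hence $a^{\circ} = ([z:Prop]x~\overline{T}) \not\rhd^{+} b^{\circ}$: the trailing $Prop$-versus-$o$ discrepancy is not bridged by any reduction, and the substitution lemma says nothing about it. So the lemma you rely on is false as stated --- a lacuna the paper's one-line dismissal of the subterm cases shares --- and a real proof must either arrange that $\overline{\,\cdot\,}$ commutes with reduction (e.g.\ by redefining it on non-normal marks), restrict the reductions considered, or weaken the conclusion to something still strong enough to transport infinite reduction sequences.
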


\begin{proof}
If $a = (([x:P]t)^{T} u)^{U}$ and $b = t[x \la u]$,
$$a^{\circ} 
= ([z:Prop]([z':Prop][x:P^{\circ}]t^{\circ}~\overline{T}~u^{\circ})~\overline{U}^{\circ})$$
$$b^{\circ} = (t[x \la u])^{\circ}$$
By reducing first three $\beta$-redexes in $a^{\circ}$ we get 
$t^{\circ}[x \la u^{\circ}]$. Thus 
$$a^{\circ} \rhd^{+} t^{\circ}[x \la u^{\circ}]$$
and 
$$t^{\circ}[x \la u^{\circ}] \rhd^{*} (t[x \la u])^{\circ} = b^{\circ}$$
thus
$$a^{\circ} \rhd^{+} b^{\circ}$$
If
$a = ([x:P](t~x^{P})^{T})^{U}$ and $b = t$,
$$a^{\circ} 
= ([z:Prop][x:P^{\circ}]([z':Prop](t^{\circ}~([z'':Prop]x~\overline{P}^{\circ}))~\overline{T})~\overline{U}^{\circ})$$
$$b^{\circ} =  t^{\circ}$$
We reduce three $\beta$-redexes and one $\eta$-redex in $a^{\circ}$ to get
$b^{\circ}$ (notice that, by proposition 
\ref{FV}, the variable $x$ does not occur in the term $t^{\circ}$).

The same holds if we reduce a redex in a subterm.
\end{proof}

\begin{prop}
If $a =_{\beta\eta} b$ then $a^{\circ} =_{\beta \eta} b^{\circ}$.
\end{prop}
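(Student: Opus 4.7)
The plan is to bridge the two translations $(\cdot)^{\#}$ and $(\cdot)^{\circ}$ by a single reduction lemma, and then to reduce the proposition to transitivity of $\beta\eta$-conversion on unmarked terms. The point is that the previous proposition relates marked \emph{reduction} to unmarked reduction, but the convertibility $=_{\beta\eta}$ on marked terms has been defined via the contents map $(\cdot)^{\#}$, not as the equivalence closure of marked $\rhd$; so I cannot just chain the previous proposition along a zigzag of marked reductions. I need an intermediate bridge that connects $t^{\circ}$ with $t^{\#}$ directly.

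The bridge is the auxiliary lemma: for every marked term $t$, one has $t^{\circ} \rhd_{\beta}^{*} t^{\#}$. I would prove this by structural induction on $t$. Sorts are trivial; products $(x:P)Q$ follow immediately from the induction hypothesis applied to $P$ and $Q$. The three marked cases $x^{T}$, $(u~v)^{T}$, and $([x:P]u)^{T}$ are the interesting ones: in each, $t^{\circ}$ has the outer shape $([z:Prop]\,M~\overline{T})$ where $z$ is chosen fresh and therefore does not occur in $M$, so one $\beta$-step contracts this wrapper to $M$; then the induction hypothesis applied to the subterms of $M$ finishes the reduction to $t^{\#}$. In particular the mark $\overline{T}$ is simply discarded, which is exactly why this bridge is available for $(\cdot)^{\circ}$ but not for marked reduction.

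With the lemma in hand, the proposition is immediate. Suppose $a =_{\beta\eta} b$ as marked terms. By definition, this means $a^{\#} =_{\beta\eta} b^{\#}$ as unmarked terms. The lemma gives $a^{\circ} \rhd_{\beta}^{*} a^{\#}$ and $b^{\circ} \rhd_{\beta}^{*} b^{\#}$, hence $a^{\circ} =_{\beta\eta} a^{\#}$ and $b^{\circ} =_{\beta\eta} b^{\#}$ in the unmarked calculus, and transitivity of $=_{\beta\eta}$ concludes. No step here is a real obstacle; essentially all the work was already done in setting up the translations so that the mark $T$ appears only underneath an immediately reducible dummy abstraction, and the main care is to remember that marked $=_{\beta\eta}$ is the convertibility of contents rather than the closure of marked $\rhd$, so that this bridging lemma is genuinely needed.
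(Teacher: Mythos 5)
Your proposal is correct and follows exactly the paper's argument: the paper's proof also rests on the single observation that $t^{\circ} \rhd^{*} t^{\#}$ for every marked term $t$, and then chains $a^{\circ} \conv a^{\#} \conv b^{\#} \conv b^{\circ}$ using the definition of conversion on marked terms. The only difference is that you spell out the structural induction establishing the bridging lemma, which the paper states without proof.
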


\begin{proof}
For all terms $t$ we have $t^{\circ} \rhd^{*} t^{\#}$. 
Thus $a^{\circ} =_{\beta \eta} a^{\#} 
=_{\beta \eta} b^{\#} =_{\beta \eta} b^{\circ}$.
\end{proof}

\begin{definition}[Marked context translation]

Let $\Gamma = [x_{1}:P_{1}; ...; x_{n}:P_{n}]$ be a marked context, we let
$\Gamma^{\circ} = [o:Prop; x_{1}:P_{1}^{\circ}; ...; x_{n}:P_{n}^{\circ}]$.
\end{definition}

\begin{prop}
Let $\Gamma$ be a marked context and $t$ and $T$ two marked terms such that 
$\Gamma \vdashm t:T$ in some system of the cube, then
$\Gamma^{\circ} \vdashm t^{\circ}:T^{\circ}$ in the Calculus of
Constructions.
\end{prop}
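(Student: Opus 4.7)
The plan is to proceed by induction on the length of the derivation of $\Gamma \vdashm t:T$, strengthened so as to prove three statements simultaneously: (a) the main conclusion $\Gamma^\circ \vdash t^\circ : T^\circ$ in CC; (b) whenever $\Gamma \vdashm T : s$, one has $\Gamma^\circ \vdash \overline{T} : Prop$ in CC; and (c) $\vdash \Gamma^\circ$ in CC whenever $\vdashm \Gamma$. Claim (b) is essential because each marked constructor translates to an outer CC application of the form $([z:Prop]\cdots~\overline{T})$, which typechecks only if $\overline{T}$ has type $Prop$.

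The routine cases lean on three ingredients from the excerpt. First, $s^\circ = s$, so the axiom and product-formation rules translate directly, using that CC, as the target, admits every pair $(s_1,s_2)$ arising in any system of the cube. Second, the substitution-commutation reduction $a^\circ[x \la b^\circ] \rhd^{*} (a[x \la b])^\circ$, which together with CC's conversion rule handles the substitution in the type appearing in the marked application rule. Third, preservation of convertibility by $(-)^\circ$, which together with CC's conversion rule disposes of the type-conversion rule and the three mark-conversion rules; for the latter, one re-derives the outer wrapping using (b) on the new mark and notes that the outer abstraction $[z:Prop]\cdots$ is vacuous (its bound variable does not occur free in the body), so the resulting type depends only on the declared type in the context, which is unchanged by a mark-conversion step. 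In each case, once the inductive hypotheses provide typings for the inner subterms, wrapping by $[z:Prop]\cdot~\overline{T}$ produces the expected type, which is immediate from (b).

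The crux is claim (b), which hinges on a structural lemma: if $\Gamma \vdashm T : Type$, then $T$ syntactically has the shape $(x_1:A_1)\ldots(x_n:A_n)Prop$. By induction on the derivation of $\Gamma \vdashm T : Type$, only the axiom ($T = Prop$) and the product-formation rule with $s_2 = Type$ (giving $T = (x:V)W$, the IH providing the shape of $W$) can produce such a conclusion. The type-conversion rule would require $\Gamma \vdashm Type : s$, which is never derivable; the variable, application, and abstraction rules assign the outer mark as the type, and that mark is itself required by stripping to be well-typed and hence not literally $Type$; and the mark-conversion rules do not change the type. With this lemma in hand, when $T : Type$ we have $\overline{T} = (x_1:A_1^\circ)\ldots(x_n:A_n^\circ)o$, whose CC type $Prop$ is built by repeated product formation using claim (a) for each $A_i$. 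When $T : Prop$, type uniqueness forbids the above shape (such terms have sort $Type$), so $\overline{T} = T^\circ$ and claim (a) yields $T^\circ : Prop^\circ = Prop$ directly. This structural lemma is the main obstacle; once it is in place, the simultaneous induction proceeds routinely by case analysis on the last rule of the derivation.
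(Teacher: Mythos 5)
Your proposal is correct and follows the same route as the paper, whose proof is simply stated as ``by induction over the length of the derivation of $\Gamma \vdashm t:T$''; your strengthened induction hypothesis (adding $\Gamma^{\circ}\vdash\overline{T}:Prop$ and well-formedness of $\Gamma^{\circ}$) and the structural lemma that any term of type $Type$ is syntactically an iterated product ending in $Prop$ are exactly the details needed to make that one-line proof go through. Nothing in your argument deviates from the paper's intended induction; you have merely supplied the omitted bookkeeping.
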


\begin{proof}
By induction over the length of the derivation of $\Gamma \vdashm t:T$.
\end{proof}

\begin{prop}
The reduction on marked terms is strongly normalizing.
\end{prop}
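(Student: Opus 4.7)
The plan is to exploit the translation $t \mapsto t^{\circ}$ into the Calculus of Constructions that has just been set up, together with the two key properties already proved: strict simulation of reduction ($a \rhd b$ implies $a^{\circ} \rhd^{+} b^{\circ}$) and preservation of typing ($\Gamma \vdashm t : T$ implies $\Gamma^{\circ} \vdash t^{\circ} : T^{\circ}$ in the Calculus of Constructions).

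First, I would assume for contradiction that reduction on marked terms is not strongly normalizing, so there exists a well-typed marked term $t_0$ admitting an infinite reduction sequence
$$t_0 \rhd t_1 \rhd t_2 \rhd \cdots$$
Applying the strict simulation property to each step yields an infinite reduction sequence
$$t_0^{\circ} \rhd^{+} t_1^{\circ} \rhd^{+} t_2^{\circ} \rhd^{+} \cdots$$
in unmarked syntax. Concatenating these finite nonempty sequences produces an infinite $\rhd$-reduction starting from $t_0^{\circ}$.

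Next, I would invoke the typing translation: from $\Gamma \vdashm t_0 : T$ for some marked $\Gamma$ and $T$, we obtain $\Gamma^{\circ} \vdash t_0^{\circ} : T^{\circ}$ in the Calculus of Constructions. Since the Calculus of Constructions is one of the systems of the cube, the normalization proposition recalled at the end of Section~1 guarantees that every well-typed term of the Calculus of Constructions is strongly normalizing for $\rhd$. Hence $t_0^{\circ}$ cannot admit any infinite $\rhd$-reduction, contradicting the sequence built above.

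I do not anticipate any real obstacle: all the work has already been done in the previous two propositions, and only two subtle points need to be observed. First, the simulation must produce \emph{at least one} reduction step in the target for each step in the source (which is indeed what $\rhd^{+}$ gives us), so that an infinite source sequence really yields an infinite target sequence rather than an eventually constant one. Second, the translation is defined on arbitrary marked terms, but the application of normalization requires well-typedness; this is why we start from a well-typed $t_0$, which is harmless since strong normalization is only claimed for well-typed marked terms.
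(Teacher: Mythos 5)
Your proof is correct and follows exactly the paper's own argument: assume an infinite reduction from a well-typed marked term, use the strict simulation $a \rhd b \implies a^{\circ} \rhd^{+} b^{\circ}$ to transport it to $t_0^{\circ}$, and contradict strong normalization of the well-typed term $t_0^{\circ}$ in the Calculus of Constructions. The two subtleties you flag (strictness of the simulation and well-typedness of $t_0^{\circ}$) are precisely the points the paper relies on implicitly.
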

 
If there was an infinite reduction issued from a marked term $t$, we 
could build one issued from the unmarked term $t^{\circ}$, in contradiction 
with the fact that reduction is strongly normalizing on well-typed terms in 
the Calculus of Constructions.

\subsection{Confluence}

\begin{prop}\label{toto}
Let $a$ and $b$ two $\beta \eta$-normal marked terms well-typed 
in the marked context $\Gamma$. If $a^{\#} = b^{\#}$ then $a = b$.
\end{prop}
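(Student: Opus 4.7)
The plan is to proceed by strong induction on the size of $a$ as a marked term, counting symbols inside marks. The hypothesis $a^{\#}=b^{\#}$ forces $a$ and $b$ to share the same outermost constructor, so we case split on this common form: sort, variable, application, abstraction, or product.

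The sort base case is immediate. For products, applications, and abstractions, the equality of contents forces the corresponding structural sub-terms of $a$ and $b$ to have equal contents (for instance, from $a=(u~v)^{T}$ and $b=(u'~v')^{U}$ one reads off $u^{\#}=u'^{\#}$ and $v^{\#}=v'^{\#}$). Each such sub-term is $\beta\eta$-normal, as a subterm of a $\beta\eta$-normal term, and is well-typed in the appropriate marked context by the Stripping Lemma. Since these sub-terms are strictly smaller than $a$, the induction hypothesis yields their equality. In the product and abstraction cases one treats the domain first, so that the body or codomain is typed in a common extended context $\Gamma[x:P]$ before invoking the IH on it.

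The delicate step is equating the outermost marks $T$ and $U$ in the variable, application, and abstraction cases. Since the content map sends marked typing to unmarked typing, one has $\Gamma^{\#}\vdash a^{\#}:T^{\#}$ and $\Gamma^{\#}\vdash b^{\#}:U^{\#}$; combined with $a^{\#}=b^{\#}$, unmarked Type Uniqueness gives $T^{\#}\conv U^{\#}$, i.e.\ $T\conv U$ in the marked sense. Now $T$ and $U$ are themselves $\beta\eta$-normal marked terms (as marks inside $\beta\eta$-normal terms), so Proposition~\ref{tyty} guarantees that $T^{\#}$ and $U^{\#}$ are $\beta\eta$-normal unmarked terms, whereupon Confluence forces $T^{\#}=U^{\#}$. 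Moreover $T$ and $U$ are well-typed in $\Gamma$ by the Stripping Lemma and are strictly smaller than $a$, so the induction hypothesis applies to $(T,U)$ and gives $T=U$.

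The main obstacle is precisely this treatment of marks: Type Uniqueness alone only delivers convertibility $T\conv U$, which is insufficient for the syntactic equality we want. Proposition~\ref{tyty} is the decisive ingredient, because it transports $\beta\eta$-normality from the marked layer down to the contents; ordinary Confluence then upgrades convertibility of normal forms to strict equality, unlocking the induction hypothesis on the marks.
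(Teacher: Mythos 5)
Your proposal is correct and follows essentially the same route as the paper: structural induction on the marked term (descending into the marks), with the sub-terms handled by the induction hypothesis and the outermost marks equated by combining type uniqueness of $a^{\#}=b^{\#}$, Proposition~\ref{tyty} to see that $T^{\#}$ and $U^{\#}$ are normal, confluence to upgrade convertibility to equality of contents, and the induction hypothesis once more on the marks. The only difference is presentational: you make explicit the appeal to Proposition~\ref{tyty} and the well-founded measure, which the paper leaves implicit in the phrase ``well-typed and normal in $\Gamma$.''
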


\begin{proof}
By induction over the structure of $a$.
Since $a$ and $b$ have the same contents, they are either both sorts,
both variables, both abstractions, both products or both applications.

If they are, for instance, both applications, $a = (t~u)^{T}$, $b = (v~w)^{U}$ 
then the marked terms $t$ and $v$ are well-typed and normal in $\Gamma$ and 
have the same contents so they are equal, 
and symmetrically the marked terms $u$ and $w$ are equal. The marked terms
$T$ and $U$ are well-typed and normal in $\Gamma$. Thus $T^{\#}$ and 
$U^{\#}$ are normal, and both are types of $a^{\#} = b^{\#}$ in
$\Gamma^{\#}$. So $T^{\#} = U^{\#}$ and thus  $T = U$. We then 
conclude that $a = b$.

The same holds if they are both sorts, variables, abstractions or products.
\end{proof}

\begin{prop}[Church-Rosser] Let $\Gamma\vdashm t_1:T$ and
$\Gamma\vdashm t_2:T$ be two derivable judgements. If $t_1\conv t_2$,
then there exists a term $u$ such that $t_1\rhd^{*}u$ and
$t_2\rhd^{*}u$.
\end{prop}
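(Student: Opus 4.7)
The plan is to reduce the marked Church--Rosser property to the unmarked one via the contents translation, using Proposition~\ref{toto} as the final identification step. Concretely, I will confluence $t_1$ and $t_2$ not by chasing a Newman-style diagram, but by taking their normal forms and proving those normal forms are literally equal.

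First, by strong normalization of the reduction on marked terms, both $t_1$ and $t_2$ admit $\beta\eta$-normal forms $u_1$ and $u_2$. Subject $\beta$- and $\eta$-reduction (already proved) give $\Gamma\vdashm u_1:T$ and $\Gamma\vdashm u_2:T$. Next, since every marked reduction either preserves the contents or is mirrored by an unmarked reduction, we have $t_i^{\#}\rhd^{*}u_i^{\#}$. The hypothesis $t_1\conv t_2$ unfolds by definition to $t_1^{\#}\conv t_2^{\#}$, and so $u_1^{\#}\conv u_2^{\#}$.

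Now comes the step that makes the whole argument work: Proposition~\ref{tyty} tells us that, because $u_1$ and $u_2$ are well-typed and $\beta\eta$-normal marked terms, their contents $u_1^{\#}$ and $u_2^{\#}$ are themselves $\beta\eta$-normal unmarked terms. By Church--Rosser on unmarked terms (the already-stated Proposition for the cube), two convertible normal terms are identical, so $u_1^{\#}=u_2^{\#}$. At this point Proposition~\ref{toto} applies: two $\beta\eta$-normal marked terms that are well-typed in the same context and share the same contents are equal as marked terms. Hence $u_1=u_2$, and taking $u:=u_1=u_2$ gives the common reduct.

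The one subtle point I would be careful about is the invocation of Proposition~\ref{tyty}: without it, one only knows $u_1^{\#}\conv u_2^{\#}$, and nothing forces their equality, since the marked normal form need not have a normal contents in an arbitrary formulation. So the real work has already been done in establishing \ref{tyty} (and its supporting lemma \ref{free} and the technical lemma \ref{tec1}); the Church--Rosser statement itself is then essentially a one-line consequence of strong normalization together with \ref{tyty} and \ref{toto}.
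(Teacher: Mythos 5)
Your proof is correct and follows essentially the same route as the paper's: normalize $t_1$ and $t_2$, use Proposition~\ref{tyty} to see that the contents of the normal forms are the (unique) unmarked normal forms, conclude $u_1^{\#}=u_2^{\#}$ by unmarked Church--Rosser, and lift this to $u_1=u_2$ via Proposition~\ref{toto}. Your added remarks on subject reduction (needed so that Proposition~\ref{toto} applies to terms well-typed in $\Gamma$) only make the argument slightly more explicit than the paper's.
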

\begin{proof}
Let $u_1$ and $u_2$ be two normal forms of respectively $t_1$ and
$t_2$ (obtained, for instance, by left reduction). 
By proposition \ref{tyty}, the term $u_1^{\#}$
(respectively $u_2^{\#}$) is {\it the} normal form of $t_1^{\#}$
(respectively $t_2^{\#}$). Since $t_1^{\#}\conv t_2^{\#}$, the
Church-Rosser theorem for unmarked terms ensures that $u_1^{\#}=
u_2^{\#}$. The previous proposition then yields $u_1=u_2$. We may then
take $u=u_1=u_2$.
\end{proof}

\begin{prop}[Confluence]
Let $\Gamma\vdashm t:T$,
$t \rhd^{*} t_1$ and $t \rhd^{*} t_2$. Then there exists a term $u$
such that $t_1 \rhd^{*} u$ and $t_2 \rhd^{*} u$.
\end{prop}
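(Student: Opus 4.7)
The plan is to derive Confluence directly from the Church--Rosser property just established, using subject reduction to put $t_1$ and $t_2$ under a common type so that the Church--Rosser hypothesis applies.

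First, I would observe that the propositions on subject $\beta$-reduction and subject $\eta$-reduction give single-step preservation of typing; a straightforward induction on the length of the reduction sequence extends this to $\rhd^{*}$. Applied to $t \rhd^{*} t_1$ and $t \rhd^{*} t_2$, starting from the hypothesis $\Gamma \vdashm t : T$, this yields the two derivable judgements
$$\Gamma \vdashm t_1 : T \qquad\text{and}\qquad \Gamma \vdashm t_2 : T.$$

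Next, I would note that each reduction step is included in the convertibility relation (by definition, $\conv$ is reflexive, symmetric, transitive and contains $\rhd$), so $t \conv t_1$ and $t \conv t_2$, whence $t_1 \conv t_2$ by transitivity and symmetry. We are therefore in the precise hypotheses of the Church--Rosser proposition: two marked terms of the same type in the same context that are $\beta\eta$-convertible. Applying it produces a marked term $u$ with $t_1 \rhd^{*} u$ and $t_2 \rhd^{*} u$, which is exactly the required common reduct.

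I do not expect any real obstacle here, since the heavy lifting (strong normalization for marked terms, preservation of normal forms under the contents map via Proposition \ref{tyty}, and uniqueness of normal forms via Proposition \ref{toto}) has already been absorbed into the Church--Rosser statement. The only minor point to be careful about is that Church--Rosser as stated requires a single common type, which is why subject reduction (rather than, say, mere existence of some type for $t_1$ and $t_2$) is invoked at the first step.
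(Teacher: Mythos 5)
Your proposal is correct and is exactly the derivation the paper intends: it states Confluence immediately after Church--Rosser with no written proof, the implicit argument being subject $\beta$- and $\eta$-reduction (iterated over $\rhd^{*}$) to obtain $\Gamma\vdashm t_1:T$ and $\Gamma\vdashm t_2:T$, followed by an application of the Church--Rosser proposition to $t_1\conv t_2$. The only detail worth making explicit is that $\rhd\subseteq{\conv}$ on marked terms rests on the proposition that a marked reduction step projects under $(\cdot)^{\#}$ to a reduction step or an equality, since $\conv$ on marked terms is defined via the contents map.
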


\section{Well-foundedness of the relation $<$}

By induction over typing derivations, we associate to each unmarked 
well-typed term a marked well-typed term. 

\begin{definition}
A marked
context $\Gamma = [x_{1}:P_{1}; ... ; x_{p}:P_{n}]$ is said to be normal if 
every $P_{i}$ is normal. 
\end{definition}

\begin{prop}
Let $\Gamma_{1}$ and $\Gamma_{2}$ be two normal 
well-formed marked contexts such that $\Gamma_{1}^{\#} = \Gamma_{2}^{\#}$.
Then $\Gamma_{1} = \Gamma_{2}$. 
\end{prop}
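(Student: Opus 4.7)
The plan is to proceed by induction on the common length of the two contexts. The base case, where both are empty, is immediate since the empty marked context is unique.

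For the inductive step, write $\Gamma_1 = \Delta_1[x_1{:}P_1]$ and $\Gamma_2 = \Delta_2[x_2{:}P_2]$. From the equality $\Gamma_1^{\#} = \Gamma_2^{\#}$ one reads off, on the last entry, that $x_1 = x_2$ (call this variable $x$) and $P_1^{\#} = P_2^{\#}$, and on the prefix that $\Delta_1^{\#} = \Delta_2^{\#}$. The prefixes $\Delta_1, \Delta_2$ are well-formed (as subcontexts of well-formed contexts, which one can check by inspecting the rule for context extension) and normal (every type in them is a type in $\Gamma_i$, hence normal by assumption). The induction hypothesis therefore gives $\Delta_1 = \Delta_2 =: \Delta$.

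It remains to establish $P_1 = P_2$. From the well-formedness of $\Gamma_1$ and $\Gamma_2$, the only rule capable of extending a context produces sorts $s_1, s_2$ with $\Delta \vdashm P_1 : s_1$ and $\Delta \vdashm P_2 : s_2$. Both $P_1$ and $P_2$ are well-typed in the same marked context $\Delta$, both are $\beta\eta$-normal since $\Gamma_1$ and $\Gamma_2$ are normal contexts, and they share the same contents. Proposition~\ref{toto} then applies and yields $P_1 = P_2$, whence $\Gamma_1 = \Gamma_2$.

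There is no real obstacle here: the statement is essentially a corollary of Proposition~\ref{toto}, lifted from terms to contexts by induction. The only minor point to be careful about is that the two candidate types $P_1, P_2$ be typed in the \emph{same} marked context before invoking Proposition~\ref{toto}, which is exactly what the induction hypothesis delivers.
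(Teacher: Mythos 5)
Your proof is correct and follows exactly the route the paper takes: its entire argument is ``by induction over the common length of $\Gamma_1$ and $\Gamma_2$ using Proposition~\ref{toto}'', which is precisely what you have spelled out, including the key point that the induction hypothesis supplies a common context $\Delta$ in which both candidate types are well-typed before Proposition~\ref{toto} is invoked.
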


\begin{proof}
By induction over the common length of $\Gamma_{1}$ and $\Gamma_{2}$
using proposition \ref{toto}. 
\end{proof}

\begin{definition}[Translation of an unmarked term into a marked term]

Let us consider an unmarked derivable judgement $\Delta \vdash a:A$ or 
$\vdash \Delta$. By induction over the length 
of this derivation, we build, in the first case, a normal marked
context $\Delta^{*}$ and normal marked terms $a^{*}$ and $A^{*}$ such that 
$\Delta^{*~\#} =_{\beta \eta} \Delta$, $a^{*~\#} =_{\beta \eta}
a$, $A^{*~\#} =_{\beta \eta} A$ and
the judgement $\Delta^{*} \vdashm a^{*}:A^{*}$ is derivable and,
in the second case, a
normal marked context $\Delta^{*}$ such that $\Delta^{*~\#} =_{\beta \eta} \Delta$ and 
the judgement $\vdashm \Delta^{*}$ is derivable.
\begin{itemize}
\item{If the last rule of the derivation is
$$\irule{}
        {\vdash [~]}
        {}$$
we let $\Delta^{*} = [~]$.}
\item{If the last rule of the derivation is
$$\irule{\Gamma \vdash T:s} 
        {\vdash \Gamma[x:T]} 
        {}$$
then by induction hypothesis we have built $\Gamma^{*}$ and $T^{*}$.
We let $\Delta^{*} = \Gamma^{*}[x:T^{*}]$.}
\item{If the last rule of the derivation is 
$$\irule{\vdash \Gamma}
        {\Gamma \vdash Prop:Type} 
        {}$$
then by induction hypothesis we have built $\Gamma^{*}$. We let 
$\Delta^{*} = \Gamma^{*}$, $a^{*} = Prop$ and $A^{*} = Type$.}
\item{If the last rule of the derivation is
$$\irule{\vdash \Gamma~~x:T \in \Gamma} 
        {\Gamma \vdash x:T} 
        {}$$
then by induction hypothesis we have built $\Gamma^{*}$. We have 
$\Gamma^{*~\#} =_{\beta \eta} \Gamma$, so $\Gamma^{*}$ contains a 
declaration $x:P$ and $P^{\#} =_{\beta \eta} T$. We let 
$\Delta^{*} = \Gamma^{*}$, $a^{*} = x^{P}$ and $A^{*} = P$.}
\item{If the last rule of the derivation is
$$\irule{\Gamma \vdash T:s~~\Gamma[x:T] \vdash U:s'}
        {\Gamma \vdash (x:T)U:s'} 
        {}$$
then by induction hypothesis we have built $\Gamma^{*}$, $T^{*}$, 
$(\Gamma[x:T])^{*}$ and $U^{*}$.
Since $(\Gamma[x:T])^{*~\#} =_{\beta \eta} \Gamma[x:T]$, the context
$(\Gamma[x:T])^{*}$ has the form $\Gamma'[x:P]$ with $\Gamma'$ and $P$ normal
$\Gamma'^{\#} =_{\beta \eta} \Gamma$ and $P^{\#} =_{\beta \eta} T$. Then since $\Gamma^{*}$ 
and $\Gamma'$ are both normal well-formed and have the same contents we have 
$\Gamma^{*} = \Gamma'$ and since $T^{*}$ and $P$ are both normal, well-typed 
in $\Gamma^{*}$ and have the same contents $T^{*} = P$. 
We let $\Delta^{*} = \Gamma^{*}$, $a^{*} = (x:T^{*})U^{*}$ and 
$A^{*} = s'$.}
\item{If the last rule of the derivation is
$$\irule{\Gamma \vdash (x:T)U:s~~\Gamma[x:T] \vdash t:U}
        {\Gamma \vdash [x:T]t:(x:T)U} 
        {}$$
then by induction hypothesis we have built $\Gamma^{*}$, $((x:T)U)^{*}$,
$(\Gamma[x:T])^{*}$, $t^{*}$ and $U^{*}$.
Since $((x:T)U)^{*}$ is normal and $((x:T)U)^{*~\#} =_{\beta \eta} (x:T)U$ the term
$((x:T)U)^{*}$ is a product $(x:P)Q$, $P$ and $Q$ are normal 
$P^{\#} =_{\beta \eta} T$ and $Q^{\#} =_{\beta \eta} U$.
In the same way since $(\Gamma[x:T])^{*}$ is normal and 
$(\Gamma[x:T])^{*~\#} =_{\beta \eta} \Gamma[x:T]$, the context 
$(\Gamma[x:T])^{*}$ has the form $\Gamma'[x:R]$ with $\Gamma'$ and $R$ normal
$\Gamma'^{\#} =_{\beta \eta} \Gamma$ and $R^{\#} =_{\beta \eta} U$. 
Since $\Gamma^{*}$ and $\Gamma'$ are normal well-formed and have the same 
contents $\Gamma^{*} = \Gamma'$. Since $P$ and $R$ normal well-typed in
$\Gamma^{*}$ and have the same contents, $P = R$. Since $Q$ and $U^{*}$ are 
normal well-typed in $\Gamma^{*}[x:P]$ and have the same contents, $Q = U^{*}$. 
We let $\Delta^{*} = \Gamma^{*}$, 
$a^{*} = ([x:P]t^{*})^{(x:P)Q}$ and $A^{*} = (x:P)Q$.}
\item{If the last rule of the derivation is
$$\irule{\Gamma \vdash t:(x:T)U~~\Gamma \vdash u:T}
        {\Gamma \vdash (t~u):U[x \la u]}
        {}$$
then by induction hypothesis we have built $\Gamma_{1}^{*}$,
$t^{*}$, $((x:T)U)^{*}$, $\Gamma_{2}^{*}$, $u^{*}$ and $T^{*}$. 
Since the term $((x:T)U)^{*}$ is normal and $((x:T)U)^{*~\#} =_{\beta \eta} (x:T)U$, 
the term $((x:T)U)^{*}$ has the form $(x:P)Q$, $P$ and $Q$ are normal
$P^{\#} =_{\beta \eta} T$ and $Q^{\#} = U$. Since $\Gamma_{1}^{*}$ and $\Gamma_{2}^{*}$
are normal, well-formed and have the same contents,
$\Gamma_{1}^{*} = \Gamma_{2}^{*}$. Since $P$ and $T^{*}$ normal, well-typed
in $\Gamma_{1}^{*}$ and have the same contents, $T^{*} = P$. 
We have $\Gamma_{1}^{*} \vdashm t^{*}:(x:T^{*})U^{*}$ and 
$\Gamma_{1}^{*} \vdashm u^{*}:T^{*}$, thus
$\Gamma_{1}^{*} \vdashm (t^{*}~u^{*})^{U^{*}[x \la u^{*}]}:U^{*}[x \la u^{*}]$.
Let $v$ be the normal form of $(t^{*}~u^{*})^{U^{*}[x \la u^{*}]}$ and 
$V$ be the normal form of $U^{*}[x \la u^{*}]$. We have
$\Gamma_{1}^{*} \vdashm v:V$.
We let $\Delta^{*} = \Gamma_{1}^{*}$,
$a^{*} = v$ and $A^{*} = V$.}
\item{If the last rule of the derivation is
$$\irule{\Gamma \vdash t:T~~\Gamma \vdash U:s~~~T =_{\beta \eta} U}
        {\Gamma \vdash t:U} 
        {}$$
then by induction hypothesis we have built $\Gamma^{*}_{1}$, 
$t^{*}$, $T^{*}$, $\Gamma^{*}_{2}$ and $U^{*}$.
Since $\Gamma^{*}_{1}$ and $\Gamma^{*}_{2}$ are 
normal well-formed and have the same contents they are equal. 
Since $T^{*}$ and $U^{*}$ are normal, well-typed in 
$\Gamma^{*}_{1}$ and have the same contents, they are equal.
We let $\Delta^{*} = \Gamma_{1}^{*}$, $a^{*} = t^{*}$ and
$A^{*} =  U^{*}$.}
\end{itemize}
\end{definition}

\begin{prop}
Let $t$ and $u$ be two unmarked normal terms  
well-typed in two contexts $\Gamma$ and $\Delta$. 
If $t_{\Gamma} < u_{\Delta}$ then $t^{*}$ is a strict subterm of ${u}^{*}$.
\end{prop}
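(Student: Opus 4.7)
The plan is to induct on the derivation of $t_\Gamma < u_\Delta$. Since $<$ is the transitive closure of the two generating conditions, and since the strict-subterm relation on marked terms is itself transitive, it suffices to treat the two base cases and appeal to transitivity for the inductive step. The crucial preliminary observation is canonicity: because $(-)^{*}$ is defined on derivations, one must first argue that its value on a normal unmarked term $t$ well-typed in $\Gamma$ depends only on $t$ and $\Gamma$. This is immediate from the two uniqueness results already established: normal well-formed marked contexts with the same contents coincide, and normal well-typed marked terms in a fixed marked context with the same contents coincide (Proposition \ref{toto}). So whenever a subterm $v$ of $u$ is translated both in isolation and as part of the construction of $u^{*}$, the two translations agree.

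For the subterm base case, I would argue by structural induction on $u$ (itself normal). If $u = (v~w)$, then $v^{*}$ is built first, and because $u$ is $\beta$-normal, $v$ is not an abstraction, so neither is $v^{*}$; consequently the marked term $(v^{*}~w^{*})^{U^{*}[x \la w^{*}]}$ has no $\beta$-redex at the top, and its normal form is therefore $(v^{*}~w^{*})^{V}$ for some normal mark $V$. Hence $v^{*}$ and $w^{*}$ literally occur as immediate subterms of $u^{*}$. If $u = [x:P]u_0$, the translation yields $u^{*} = ([x:P^{*}]u_0^{*})^{(x:P^{*})Q^{*}}$, and if $u = (x:P)Q$, then $u^{*} = (x:P^{*})Q^{*}$; in either case the immediate unmarked subterms appear translated inside $u^{*}$. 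Canonicity makes these translations match the standalone $t^{*}$ when $t$ is one of the immediate subterms, and iterating down the subterm tree handles arbitrary depth.

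For the type base case, I would observe that the translation always places the type as the outermost mark of $t^{*}$: concretely $t^{*}$ is $x^{P}$, $(f^{*}~g^{*})^{V}$, or $([x:P^{*}]u_0^{*})^{(x:P^{*})Q^{*}}$ depending on whether $t$ is a variable, application, or abstraction (the hypothesis excludes sorts and products as $t$, since for a product the relation $T < t$ is not directly generated and products are already handled via their immediate subterms). In each case this outermost mark is a normal marked term, well-typed in $\Gamma^{*}$, whose contents $\beta\eta$-convert to the unmarked type of $t$, hence (since $T$ is already normal and by Church-Rosser) whose contents equal $T$. By Proposition \ref{toto}, this mark must be $T^{*}$ itself, and it is a strict subterm of $t^{*}$.

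The main obstacle I anticipate is precisely the canonicity issue in the subterm base case: one must be sure that the marked term $v^{*}$ produced by the standalone derivation of $v$ is identical to the marked subterm sitting inside $u^{*}$ produced by the derivation of $u$, not merely convertible to it. This is why the earlier uniqueness lemmas for normal marked contexts and for normal marked terms with fixed contents do the real work here; without them, the structural descent would only give $\beta\eta$-convertible subterms, which is weaker than what the statement requires.
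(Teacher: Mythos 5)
Your proposal is correct and follows essentially the same route as the paper: structural induction on $u$ for the subterm case, and for the type case the observation that the translation puts the (normalized) type as the outermost mark, identified with $T^{*}$ via uniqueness of normal marked terms with equal contents (the paper phrases this through Church--Rosser for marked terms and Proposition~\ref{tutu}). Your explicit preliminary point about canonicity of $(-)^{*}$ is a useful clarification of something the paper builds silently into the definition of the translation, but it is not a different method.
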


\begin{proof}
We have $\Gamma \vdash t:T$ and $\Delta \vdash u:U$. Thus
$\Gamma^{*} \vdash t^{*}:T^{*}$ and $\Delta^{*} \vdash u^{*}:U^{*}$. 
If $t_{\Gamma}$ is a strict subterm of $u_{\Delta}$ then by induction over the 
structure of $u$, the term $t^{*}$ is a strict subterm of $u^{*}$.
If $t$ is the normal form of the type $U$ of $u$ then 
$t^{*}$ and $U^{*}$ are $\beta \eta$-equivalent and normal, thus by Church-Rosser property, they are equal. 
Hence, using proposition \ref{tutu}, since $t$ is not a sort, 
$t^{*}$ is the outermost mark of $u^{*}$.
\end{proof}

Thus, since the subterm order is well-founded on marked terms, we can deduce:

\begin{theorem}
The relation $<$ is well-founded.
\end{theorem}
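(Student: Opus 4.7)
The plan is to transfer well-foundedness from the strict subterm relation on marked terms (which is obvious) to the relation $<$ on unmarked normal well-typed terms, using the translation $a\mapsto a^*$ defined in the previous definition.

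The bridge needed is exactly the preceding proposition: if $t_\Gamma < u_\Delta$ by either of the two generating clauses, then $t^*$ is a strict subterm of $u^*$. Although that proposition is proved only for the two base cases, it extends immediately to the transitive closure defining $<$, because being a strict subterm is itself a transitive relation on marked terms. So whenever $t_\Gamma < u_\Delta$ holds, $t^*$ is a strict subterm of $u^*$.

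Granted this, the argument is a one-line pullback. Suppose, for contradiction, that $<$ were not well-founded; pick an infinite descending chain $u_0 > u_1 > u_2 > \cdots$. Applying $(-)^*$ componentwise produces marked terms with $u_{i+1}^*$ a strict subterm of $u_i^*$ for every $i$, giving an infinite descending chain in the subterm order on marked terms. This contradicts the evident well-foundedness of that order (each marked term is finite, and strict subterms have strictly smaller size). Hence $<$ is well-founded.

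There is essentially no obstacle here: the real technical content has already been invested in constructing $(-)^*$, in showing that normal marked terms in the same context with the same contents coincide (proposition \ref{toto}), and in the preceding proposition which reflects the two clauses of $<$ as strict subterm steps on the marked side. The theorem is at this point a formal consequence.
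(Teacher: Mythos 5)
Your proof is correct and follows exactly the paper's route: the preceding proposition reflects $<$ into the strict subterm order on marked terms via $(-)^{*}$, and well-foundedness is pulled back from there. Your explicit remark that the two generating clauses suffice because the strict subterm relation is itself transitive is a welcome clarification of a step the paper leaves implicit.
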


Notice that this theorem holds also for a calculus of the cube with 
$\beta$-conversion only. 

\section{The Eta-long normal form}

\begin{definition}[Measure of a Term]

Let $\Gamma$ be a context and $t$ a unmarked normal term well-typed of
type $T$ in $\Gamma$. We define by induction over the order $<$, the
{\it measure} $\mu(t)$ of $t$
\begin{itemize}
\item{If $t$ is a sort then $\mu(t) = 1$,}
\item{If $t = x$ and $T$ then $\mu(t) = \mu(T) + 1$,}
\item{If $t = (u~v)$ then $\mu(t) = \mu(u) + \mu(v) + \mu(T)$,}
\item{If $t = [x:U]v$ then $\mu(t) = \mu(U) + \mu(v) + \mu(T)$,}
\item{If $t = (x:U)V$ then $\mu(t) = \mu(U) + \mu(V)$.}
\end{itemize}
\end{definition}
\medskip

Now we can give the following definition by induction over $\mu(t)$.

\begin{definition}[$\eta$-long form]

Let $\Gamma$ be a context and $t$ a normal term well-typed in $\Gamma$ and 
$(x_{1}:P_{1})...(x_{n}:P_{n})P$ ($P$ atomic) the normal form of its type.
Then
\begin{itemize}
\item
if $t = [x:U]u$ then the $\eta$-long form of $t$ in $\Gamma$ is the term 
$[x:U']u'$ where $U'$ is the $\eta$-long form of $U$ in $\Gamma$ and $u'$ the
$\eta$-long form of $u$ in $\Gamma[x:U]$,
\item
if $t = (x:U)V$ the $\eta$-long form of $t$ in $\Gamma$ is the term $(x:U')V'$
where $U'$ is the $\eta$-long form of $U$ in $\Gamma$ and $V'$ the $\eta$-long 
form of $V$ in $\Gamma[x:U]$,
\item
if $t = (w~c_{1}~...~c_{p})$ the $\eta$-long form of $t$ in $\Gamma$ is the
term 
$[x_{1}:P'_{1}] ... [x_{n}:P'_{n}] (w~c'_{1}~...~c'_{p}~x'_{1}~...~x'_{n})$
where $c'_{i}$ is the $\eta$-long form of $c_{i}$ in $\Gamma$, $P'_{i}$ the 
$\eta$-long form of $P_{i}$ in $\Gamma[x_{1}:P_{1}; ...; x_{i-1}:P_{i-1}]$
and $x'_{i}$ the $\eta$-long form of $x_{i}$ in 
$\Gamma[x_{1}:P_{1}; ...; x_{i}:P_{i}]$.
\end{itemize}
Notice that $\mu(x_{i}) = \mu(P_{i}) + 1 
\leq \mu ((x_{1}:P_{1})...(x_{n}:P_{n})P) < \mu(t)$.
\end{definition}

At last we prove the following extra result. 

\begin{definition}[The Relation $<'$]

Let $<'$ be the smallest transitive relation defined on normal well-typed terms
such that
\begin{itemize}
\item
if neither $t$ nor $u$ is a sort and $t_{\Gamma}$ is a strict subterm of 
$u_{\Delta}$ then $t_{\Gamma} <' u_{\Delta}$,
\item
if neither $t$ nor $T$ is a sort and $T_{\Gamma}$ is the {\it $\eta$-long} 
form of the normal form of the type of $t_{\Gamma}$ in $\Gamma$ and the term 
$t$ is not a sort then $T_{\Gamma} <' t_{\Gamma}$.
\end{itemize}
\end{definition}

We prove that the relation $<'$ is also well-founded.

\begin{definition}[Measure of a Marked Term]

Let $t$ be a marked term, we define by induction over the structure of
$t$, the {\it measure} $\mu(t)$ of $t$
\begin{itemize}
\item{If $t$ is a sort then $\mu(t) = 1$,}
\item{If $t = x^{T}$ then $\mu(t) = \mu(T) + 1$,}
\item{If $t = (u~v)^{T}$ then $\mu(t) = \mu(u) + \mu(v) + \mu(T)$,}
\item{If $t = ([x:U]v)^{T}$ then $\mu(t) = \mu(U) + \mu(v) + \mu(T)$,}
\item{If $t = (x:U)V$ then $\mu(t) = \mu(U) + \mu(V)$.}
\end{itemize}
\end{definition}

\begin{definition}[$\eta$-long Form of a Marked Term]

Let $\Gamma$ be a marked context and $t$ be a normal marked term well-typed in
$\Gamma$ with the type
$(x_{1}:P_{1})...(x_{n}:P_{n})P \quad (P ~{\rm atomic})$.
The {\it $\eta$-long form} of the marked
term $t$ is defined by induction over $\mu(t)$.
\begin{itemize}
\item
If $t = ([x:U]u)^{T}$ then the $\eta$-long form of $t$ is $([x:U']u')^{T'}$
where $U'$ is the $\eta$-long form of $U$ in $\Gamma$, $T'$ the $\eta$-long 
form of $T$ in $\Gamma$ and $u'$ the $\eta$-long form of $u$ in $\Gamma[x:U]$,
\item
if $t = (x:U)V$ then the $\eta$-long form of $t$ is $(x:U')V'$ where
$U'$ is the $\eta$-long form of $U$ in $\Gamma$ and $V'$ the $\eta$-long form 
of $V$ in $\Gamma[x:U]$,
\item
if $t = (~...~(w^{T_{0}}~c_{1})^{T_{1}}~...~c_{p})^{T_{p}}$
then the $\eta$-long form of $t$ is 
$$[x_{1}:P'_{1}] ... [x_{n}:P'_{n}]
(~...~((~...~(w^{T'_{0}}~c'_{1})^{T'_{1}}~...~c'_{p})
^{T'_{p}}~x'_{1})^{V'_{1}}~...~x'_{n})^{V'_{n}}$$
where $c'_{i}$ is the $\eta$-long form of $c_{i}$ in $\Gamma$,
$T'_{i}$ the $\eta$-long form of $T_{i}$ in $\Gamma$,
$P'_{i}$ the $\eta$-long form of $P_{i}$ in 
$\Gamma[x_{1}:P_{1}; ...; x_{i-1}:P_{i-1}]$,
$P'$ the $\eta$-long form of $P$ in 
$\Gamma[x_{1}:P_{1}; ...; x_{n}:P_{n}]$,
$x'_{i}$ the $\eta$-long form of $x_{i}^{P_{i}}$ in
$\Gamma[x_{1}:P_{1}; ...; x_{i}:P_{i}]$
and 
$V'_{i} = (x_{i+1}:P'_{1})...(x_{i+1}:P'_{n})P'$.
\end{itemize}
\end{definition}

\begin{definition}[Normal $\eta$-long translation of an unmarked term]

Let $t$ be a term well-typed in the context $\Gamma$, we define its {\it normal
$\eta$-long translation} $t^{+}$ as the $\eta$-long form of its 
translation $t^{*}$.
\end{definition}

\begin{theorem}
The relation $<'$ is well-founded.
\end{theorem}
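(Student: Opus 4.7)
The plan is to adapt the proof of the previous theorem, using the normal $\eta$-long translation $t \mapsto t^{+}$ in place of $t \mapsto t^{*}$. The goal is to show that whenever $t_{\Gamma} <' u_{\Delta}$ holds on $\beta\eta$-normal well-typed unmarked terms, the marked term $t^{+}$ is a strict subterm of $u^{+}$. Since the strict subterm order on marked terms is well-founded, this will yield well-foundedness of $<'$.

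First I would verify that the marked $\eta$-long form, as just defined, preserves the strict subterm relation: if $v$ is a strict subterm of $w$ (in the marked sense, including subterms of marks), then the $\eta$-long form of $v$ is a strict subterm of the $\eta$-long form of $w$. This is a routine induction on the structure of $w$ using the three defining clauses, together with the observation that in the atomic case the original arguments $c_{i}$ and the original marks $T_{i}$ reappear inside the $\eta$-long form as $c'_{i}$ and $T'_{i}$. Composed with the previous theorem, which says that the strict subterm case of $<$ translates to strict subtermhood of the $*$-translations, this handles the first clause of $<'$.

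For the second clause, let $N$ be the normal form of the type of $t$ in $\Gamma$ and $T$ its unmarked $\eta$-long form. By Proposition~\ref{tutu}, the outermost mark of $t^{*}$ is a normal marked term in $\Gamma^{*}$ whose contents is $N$, so by Proposition~\ref{toto} this mark is exactly $N^{*}$. By the clause for atomic/abstraction $\eta$-long expansion, the outermost mark of $t^{+}$ is the $\eta$-long form of $N^{*}$. I would then prove two auxiliary facts: (a) $T^{*}$ is itself in marked $\eta$-long form, because the $*$-translation is defined clause-by-clause on the term structure and preserves the atomicity-and-saturation condition that characterizes being in $\eta$-long form; and (b) the $\eta$-long form of $N^{*}$ has contents equal to $T$, is normal and well-typed in $\Gamma^{*}$, and therefore by Proposition~\ref{toto} coincides with $T^{*}$. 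Combining (a) and (b) gives that the outermost mark of $t^{+}$ equals $T^{*} = T^{+}$, so $T^{+}$ is a strict subterm of $t^{+}$.

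The main technical obstacle is establishing (a) and (b), i.e., that the $*$-translation commutes, up to equality of normal marked terms in a common context, with $\eta$-long expansion. This requires a careful simultaneous induction following the $\mu$-induction underlying the definition of the marked $\eta$-long form, making sure that the expansion of atomic subterms in the marked and unmarked worlds produces compatible argument-lists and compatible marks. Once this is in place, the theorem reduces to the well-foundedness of the strict subterm relation on marked terms, exactly as in the proof of the well-foundedness of $<$.
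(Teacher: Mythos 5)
Your overall strategy is exactly the paper's: the paper's proof is a one-sentence sketch saying precisely what you propose, namely to show that $t<'u$ implies that $t^{+}$ is a strict subterm of $u^{+}$ and then invoke well-foundedness of the strict subterm order on marked terms. Your treatment of the first clause of $<'$ (the subterm case) and your identification of the outermost mark of $t^{+}$ as the key object for the second clause are both sound.

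However, your auxiliary facts (a) and (b) are misstated, and step (b) as written would fail. By construction the translation $T^{*}$ of \emph{any} well-typed unmarked term is a $\beta\eta$-\emph{normal} marked term; an $\eta$-long form is not $\eta$-normal (already the $\eta$-long form of a variable of functional type is an $\eta$-redex), so $T^{*}$ is \emph{not} ``itself in marked $\eta$-long form'' as you claim in (a). Likewise, Proposition~\ref{toto} applies only to $\beta\eta$-normal marked terms, so you cannot use it in (b) to identify the $\eta$-long form of $N^{*}$ with anything: the $\eta$-long form of $N^{*}$ is not $\eta$-normal. The correct and in fact simpler chain is: since $T=_{\beta\eta}N$ and the translation always produces normal marked terms with convertible contents in the same normal context, Proposition~\ref{toto} (together with Church--Rosser and Proposition~\ref{tyty}) gives $T^{*}=N^{*}$; the outermost mark of $t^{*}$ is likewise equal to $N^{*}$; and the definition of the marked $\eta$-long form sends the outermost mark of $t^{*}$ to its own $\eta$-long form inside $t^{+}$. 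Hence the outermost mark occurring in $t^{+}$ is the $\eta$-long form of $N^{*}=T^{*}$, which is $T^{+}$ by definition, and $T^{+}$ is a strict subterm of $t^{+}$ as required. With this correction your argument goes through and coincides with the paper's intended proof.
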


\begin{proof}
As in the proof of the well-foundedness of the relation $<$
we first prove that if $t <' u$ then $t^{+}$ is a strict subterm of 
$u^{+}$ and then that there is no infinite decreasing sequence for
the relation $<'$.
\end{proof}
 
\section*{Acknowledgements}

The authors thank Christine Paulin who suggested them the idea of the 
normalization proof for marked terms and Cristina Cornes for a careful 
reading of a draft of this paper.

\end{document}